\newcommand{\nfrac}{\nicefrac}
\def\FullBox{\hbox{\vrule width 6pt height 6pt depth 0pt}}
\def\qed{\ifmmode\qquad\FullBox\else{\unskip\nobreak\hfil
\penalty50\hskip1em\null\nobreak\hfil\FullBox
\parfillskip=0pt\finalhyphendemerits=0\endgraf}\fi}
\newenvironment{proof}{\begin{trivlist} \item {\bf Proof:~~}}
   {\qed\end{trivlist}}
\newtheorem{theorem}{Theorem}[section]
\newtheorem{claim}[theorem]{Claim}
\newtheorem{definition}[theorem]{Definition}
\newtheorem{lemma}[theorem]{Lemma}
\newtheorem{corollary}[theorem]{Corollary}
\newtheorem{remark}[theorem]{Remark}
\newcommand{\shrinkspace}[1]{\ifbool{fullversion}{}{\vspace{#1}}}
\newcommand{\defeq}{\stackrel{\textup{def}}{=}}
\newcommand{\eps}{\varepsilon}
\newcommand{\alert}[1]{\ifbool{havealerts}{\marginnote{\textcolor{red}{#1}}}{}}
\newcommand{\C}[1]{\ensuremath{\mathcal{C}\inp{#1}}}
\newcommand{\insq}[1]{\ensuremath{\left[#1\right]}}
\newcommand{\M}[0]{\ensuremath{\mathcal{M}}}
\newcommand{\R}[2][]{\ensuremath{R_{#2}^{#1}}}
\renewcommand{\Pr}[2]{\ensuremath{\mathbb{P}_{#1}\insq{#2}}}
\renewcommand{\C}[0]{\ensuremath{\mathcal{Z}}}
\renewcommand{\vec}[1]{#1}
\renewcommand{\epsilon}[0]{\varepsilon}
\newcommand{\vlambda}{{\vec{\lambda}}}
\newcommand{\vtheta}{{\vec{\theta}}}
\newcommand{\vx}{{\vec{x}}}
\newcommand{\vy}{{\vec{y}}}
\newcommand{\va}{{\vec{a}}}
\newcommand{\iprod}[2]{{\langle {#1}, {#2} \rangle}}
\definecolor{labelkey}{gray}{.75}
\renewcommand{\M}{{\mathcal{M}}}
\newcommand{\st}{\mbox{\rm s.t. }}
\def \RR   {{\mathbb R}}
\def \M   {{\mathcal M}}
\def \C   {{\mathcal C}}
\def \B   {{\mathcal B}}
\def \ZZ   {{\mathbb Z}}
\title{\bf Entropy, Optimization and Counting}
\author{Mohit Singh \\  {\small Microsoft Research, Redmond, USA} \\ {\small Email: mohits@microsoft.com} \and Nisheeth K. Vishnoi \\ {\small Microsoft Research, Bangalore, India} \\ {\small Email: nisheeth.vishnoi@gmail.com}}
\date{}
\begin{document}

\maketitle

\thispagestyle{empty}

\newcommand{\one}{{\vec{1}}}

\newcommand{\Authornote}[2]{{\sf\small{\color{red}[#1: #2]}}}

\begin{abstract}
In this paper we study the problem of computing max-entropy distributions over a discrete set of objects subject to observed marginals.
Interest in such distributions arises due to their applicability in areas such as statistical physics, economics, biology, information theory, machine learning, combinatorics and, more recently, approximation algorithms.
A key difficulty in computing max-entropy distributions has been to show that they have polynomially-sized descriptions. 
We show that such descriptions exist under general conditions.
Subsequently, we show how algorithms for (approximately) counting the underlying discrete set can be translated into efficient algorithms to (approximately) compute max-entropy distributions.
In the reverse direction, we show how access to algorithms that compute max-entropy distributions can be used to count, which establishes an equivalence between counting and computing max-entropy distributions.

\end{abstract}

\newpage 
\tableofcontents 

\newpage
\vspace{3mm}

\section{Introduction}
In this paper we study the computability of  max-entropy probability distributions over a discrete set. Consider a collection $\M$ of discrete objects whose building blocks are the elements $[m]=\{1,2,\ldots,m\};$ thus, $\M \subseteq \{0,1\}^m.$
Suppose there is some unknown distribution $p$ on $\M$ and we are given access to it via observables $\vtheta$, the simplest of which is the probability that an element is present in a random sample $M$ from $p$; namely,
$
 \Pr{M \leftarrow p}{e   \in M}=\theta_e.
$
If  $\vec{\theta}$ is all we know, what is our best guess for $p$?
The principle of max-entropy \cite{Jaynes1,Jaynes2} postulates that the best guess is the distribution which maximizes (Shannon) entropy.\footnote{Recall that the Shannon entropy of a distribution $p=(p_M)_{M \in \M}$ is
$H(p) \defeq \sum_{M \in \M} p_M \ln \frac{1}{p_M}.$}
Roughly, the argument is that any distribution which has more information must violate some observable,  and a distribution with less information must implicitly use additional independent observables, hence contradicting their maximality.
Access to such a distribution could then be used to  obtain samples which conform with the observed statistics and to obtain the most informed guess to further statistics.

Given the fundamental nature of such a distribution, it should not be surprising that it shows  up in various areas such as statistical physics, economics, biology, information theory, machine learning, combinatorics and, more recently, in the design of approximation algorithms, see for instance \cite{Kapur}.    From a computational point of view, the  question is how to find max-entropy distributions. Note that the entropy function is concave, hence, the problem of maximizing it over the set of all probability distributions over $\M$  with marginals $\vtheta$ is a convex programming problem.  
But what is the input? If $\vtheta$ and $\M$ are given explicitly, then a solution to this convex program can be obtained, using the 
ellipsoid method, in time polynomial in $|\M|$ and the number of  bits needed to represent $\theta.$\footnote{The ellipsoid algorithm requires a bounding ball which in this case is trivial since $\forall M \in \M,$ $0\leq p_M \leq 1.$}
However, in most interesting applications, while $\theta$ is given explicitly, $\M$ may be  an exponentially-sized set over the universe $\{0,1\}^m$ and 
specified implicitly. 
For example, the input could be a graph $G=(V,E)$ with $m$ edges and $\vtheta \in \mathbb{R}_{\geq 0}^m$ whereas $\M$ could be all spanning trees or all perfect matchings in $G$; in such a scenario $|\M|$ could be exponential in $m.$ 
This renders the convex program for computing the max-entropy distribution prohibitively large. 
Moreover, simply describing the distribution could require exponential space. 
The good news is that one can use convex programming duality to convert the max-entropy program into one that has $m$ variables. 
Additionally, under mild conditions on $\vtheta,$ strong duality holds and, hence, the max-entropy distribution is a product distribution, i.e., there exist $\gamma_e$ for $e \in [m]$ such that for all $M \in \M$, $p_M \propto \prod_{e \in M} \gamma_e,$ see \cite{BoydV04} or Lemma \ref{prop:maxentropya}.
Thus, the max-entropy distribution for $\vec{\theta}$ can be described by $m$ numbers $\vec{\gamma} = (\gamma_e)_{e \in [m]}.$ 
There are two main computational  problems concerning max-entropy distributions. The first is to,  given $\vec{\theta}$ and implicit access to $\M$,  obtain $\tilde{\vec{\gamma}}$ such that the  entropy of the product distribution $\tilde{p},$ corresponding  to $\tilde{\vec{\gamma}},$ is close to that of the max-entropy distribution, and the observables obtained from $\tilde{p}$  are close to $\vec{\theta}.$
The second is, given $\tilde{\gamma},$ obtain a random sample  from the distribution $\tilde{p}.$  The second problem can be handled by invoking the equivalence between approximate counting and sampling due to Jerrum, Valiant and Vazirani \cite{JerrumVV86} and, hence, we focus on the first issue of computing approximations to the max-entropy distributions.\footnote{To be precise, this equivalence between random sampling and approximate counting holds when the combinatorial problem at hand is {\em self-reducible}, see also \cite{JerrumSinclair}.}  
However, the existence of  $\tilde{\gamma}$ which requires polynomially-many  bits in the input size is, a priori,  far from clear.
This raises the crucial question of whether good enough succinct descriptions exist for max-entropy distributions.

While there is a vast amount of literature concerning the computation of max-entropy distributions (see for example the survey \cite{WainwrightJ08}), previous (partial) results on computing max-entropy distributions required exploiting some special structure of the particular problem at hand.
In theoretical computer science,  interest in rigorously computing max-entropy distributions  derives from their applications to randomized rounding and the design of non-trivial approximation algorithms, notably to problems such as the symmetric and the asymmetric traveling salesman problem (TSP/ATSP).  For example, using a very technical argument, \cite{AsadpourGMGS10} give an algorithm to compute the max-entropy distribution over spanning trees of a graph. This algorithm was then used by them to improve the approximation ratio for ATSP, and by \cite{GharanSS11} to improve the approximation ratio for (graphical) TSP, making progress on two long-standing problems.  
Subsequently, the ability to compute max-entropy distributions over spanning trees has also been used  to design efficient privacy preserving mechanisms for spanning tree auctions by \cite{HuangK12}. 
In another example, \cite{AsadpourS10} show how to compute max-entropy distributions over perfect matchings in a tree and use it to design approximation algorithms for a max-min fair allocation problem.  The question of computing max-entropy distributions over perfect matchings in bipartite (and general) graphs, however,  has been an important open problem. Recent applications of the ability to compute max-entropy distribution over perfect matchings in bipartite graphs  include new approaches for  TSP and ATSP, see \cite{Vishnoi12} and Section \ref{sec:tsp}.

For counting problems, it is rare to obtain algorithms that can count exactly, notable exceptions being the problem of counting spanning trees in a graph~\cite{Kirchoff47} or counting certain problems restricted to trees using dynamic programming. Most natural counting problems turn out to be $\#$P-hard including the problem to count the number of perfect matchings in a bipartite graph~\cite{Valiant79a,Valiant79b}. The goal then shifts to finding algorithms that approximately count up to any fixed precision~\cite{KarpL83,StockMeyer85}. Here the most successful technique has been the Markov chain Monte Carlo (MCMC) method~\cite{JerrumS89} which, when combined with the equivalence between approximate counting and sampling~\cite{JerrumVV86}\footnote{We note that MCMC methods efficiently sample (and count) given a fixed $\vec{\gamma},$ usually for $\vec{\gamma}=\one$ corresponding to the uniform distribution. The goal in our problem is to find an  ${\vec{\gamma}}$ that maximizes the entropy. In fact, given a ${\vec{\gamma}},$  problem specific MCMC methods can be used to generate a random sample from $\M$ according to the product distribution corresponding to $\gamma.$}, leads to many approximate counting algorithms. The technique has been applied to many problems including counting perfect matchings in a bipartite graph~\cite{JerrumSV04}, counting bases in a balanced matroid~\cite{FederM92}, counting solutions to a knapsack problem~\cite{DyerFKKPV93} and counting the number of colorings in restricted graph families~\cite{Jerrum95}. 
However, the problem of obtaining approximate counting oracles for several problems remains open as well; perhaps a prominent example is that of (approximately) counting the number of perfect matchings in a general graph.

In combinatorics, max-entropy distributions are often referred to as  {\em hard-core} or Gibbs distributions and have been intensely studied. While it is nice that the hard-core  distribution has a product form, i.e., $p_M \propto \prod_{e \in M}\gamma_e,$ the question of interest here is whether one can upper bound the $\gamma_e$s. Structurally, such a bound implies  that hard-core distributions exhibit a significant amount of approximate stochastic independence.
In an important result,  \cite{KahnK97} proved such a bound for the hard-core distribution over matchings in a graph.
This led to resolving several questions involving asymptotic graph and hypergraph problems.  For instance results of  \cite{Kahn96,KahnK97, Kahn00} prove that the fractional chromatic index of a graph asymptotically behaves as its chromatic index. However, the argument of \cite{KahnK97} is quite difficult and seems to be specific to the setting of matchings leaving it an interesting problem to understand under what conditions can one obtain upper bounds  on $\gamma_e$s.

\subsection{Our Contribution}
We first show that good enough succinct representations exist for max-entropy distributions. 
Subsequently, we give an algorithm that computes arbitrarily good approximations to max-entropy distributions given $\vtheta$ and access to a suitable counting oracle for $\M.$ Our algorithm is efficient whenever the corresponding counting oracle is efficient. Moreover, the counting oracle can be approximate and/or randomized. This allows us to leverage a variety of algorithms developed for several $\#$P-hard problems to give  algorithms to compute  max-entropy distributions. Consequently, we obtain several new and old results about concrete algorithms to compute max-entropy distributions. Interesting examples  for which we can use pre-existing counting oracles to obtain max-entropy distributions   include  spanning trees, matchings in general graphs, perfect matchings in bipartite graphs (using the  algorithm from \cite{JerrumSV04}) and subtrees of a rooted tree. 
The consequence for perfect matchings in bipartite graphs makes the  algorithmic strategies for TSP/ATSP mentioned earlier computationally feasible, see Section \ref{sec:tsp}. 
In  the reverse direction we show that if one can solve, even approximately, the convex optimization problem of computing the max-entropy distribution, one can obtain such counting oracles. This establishes an equivalence between counting and computing max-entropy distributions in a general setting.
As a corollary,  we obtain that the  problem of  computing max-entropy distributions over perfect matchings in general graphs is equivalent to the, hitherto unrelated, problem of approximately counting perfect matchings in a general graph.

\subsubsection{Informal Statement of Our Results}
Before we describe our results a bit more technically, we introduce some basic notation.
For $\M \subseteq \{0,1\}^m,$ let $P(\M)$ denote the convex hull of
all $\M$ where each $M \in \M$ is thought of as a $0/1$ vector of length $m$, denoted $\vec{1}_M.$ Thus, given a  $\vtheta,$ for the max-entropy program to have any solution, $\vtheta \in P(\M).$  
Since we are concerned with the case when $\M$ is given implicitly and may be of exponential size,  we no longer hope to solve the max-entropy convex program  directly since that may require  exponentially many variables, one  for each $M \in \M.$ Thus, we work with the dual to the max-entropy convex program. 
The dual has $m$ variables and, if $\vtheta$ is in the relative interior of $P(\M),$ the optimal dual solution can be used to describe the optimal solution to the max-entropy convex program, which is a product distribution. In fact, we  assume one can put a ball of radius $\eta$ around $\vtheta$ and it still remains in the interior of $P(\M).$ 
Importantly, our algorithm requires access to a {\em generalized counting oracle} for $\M,$ which given $\vec{\gamma}$ can compute $\sum_{M \in \M, \; M \ni e} \prod_{e' \in M}\gamma_{e'}$ for all $e \in [m]$ and also the sum  $\sum_{M \in \M} \prod_{e \in M}\gamma_e.$ We also consider the case when the oracle is approximate (possibly randomized) and for a given $\eps,$ can output the sums  above up to a multiplicative error of $1 \pm \eps.$   
The following is the first main result of the paper, stated informally here.
\begin{theorem}[Counting Implies Optimization, See Theorems \ref{thm:strong} and \ref{thm:approximate-count}]\label{thm:main1}
There is an algorithm which,
given access to a generalized (approximate) counting oracle for $\M \subseteq \{0,1\}^m,$  a $\vtheta$ which is promised to be in  the $\eta$-interior of $P(\M)$ and an $\eps>0,$   outputs a $\vec{\gamma}$ such that its corresponding product probability distribution $p$ is such that $H(p) \geq (1-\nfrac{\eps}{\eta}) H(p^\star)$ and for every $e \in [m],$
$$ \left| \Pr{M \leftarrow p}{e \in M} - \theta_e \right|  \leq \eps.$$
Here, $p^\star$ is the max-entropy distribution corresponding to $\vtheta.$
The number of calls the algorithm makes to the oracle is bounded by a polynomial  in the input size, $\ln \nfrac{1}{\eta}$ and $\ln \nfrac{1}{\eps}.$
\end{theorem}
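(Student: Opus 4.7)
The natural plan is to attack the Lagrangian dual of the max-entropy convex program. Writing $\gamma_e = e^{y_e}$, the dual becomes the unconstrained convex minimization
$$\min_{\vy \in \RR^m} \; g(\vy) \defeq \ln\!\Bigl(\sum_{M \in \M} \prod_{e \in M} e^{y_e}\Bigr) - \iprod{\vy}{\vtheta},$$
whose gradient at $\vy$ has $e$-th coordinate equal to $\Pr{M \leftarrow p_{\vy}}{e \in M} - \theta_e$, i.e., the deviation of the current marginals from the target. Crucially, evaluating both $g$ and $\nabla g$ at any $\vy$ requires exactly one call to the generalized counting oracle with $\vec{\gamma} = e^{\vy}$. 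Thus if we can approximately minimize $g$ with a first-order method, the counting oracle supplies the only nontrivial primitive we need, and strong duality (Lemma \ref{prop:maxentropya}) will give us back a product distribution.

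I would proceed in three stages. First, \emph{bound the location of the optimum}: using the hypothesis that the ball of radius $\eta$ around $\vtheta$ lies inside $P(\M)$, a standard support-function argument shows the optimal $\vy^\star$ satisfies $\|\vy^\star\| = O(\poly{m} \cdot \ln \nfrac{1}{\eta})$, because otherwise moving $\vtheta$ by $\eta$ along $\vy^\star/\|\vy^\star\|$ would drive the dual objective toward $-\infty$ along the corresponding ray, contradicting its boundedness over points in $P(\M)$. Second, \emph{run a convex optimization routine} --- the ellipsoid method restricted to this a-priori ball is convenient because it tolerates oracle error cleanly --- on $g$ using the counting oracle as a gradient oracle; the number of iterations is polynomial in $m$, $\ln \nfrac{1}{\eta}$ and $\ln \nfrac{1}{\eps}$, matching the theorem. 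Third, \emph{translate back}: set $\tilde{\vec{\gamma}} = e^{\tilde{\vy}}$ for the near-optimal dual $\tilde{\vy}$, and verify (i) that the $e$-th marginal of $\tilde{p}$ differs from $\theta_e$ by at most $\eps$, which holds because approximate first-order optimality at $\tilde{\vy}$ is literally a bound on each coordinate of $\nabla g(\tilde{\vy})$, and the gradient coordinates are precisely the marginal gaps; and (ii) that $H(\tilde{p}) \geq (1 - \nfrac{\eps}{\eta}) H(p^\star)$, which follows by bounding the primal entropy loss by the duality gap, with the $\nfrac{1}{\eta}$ factor arising because converting a dual objective gap into a primal entropy guarantee requires rescaling by the radius of the interior ball around $\vtheta$.

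The main obstacle is the first stage: quantifying $\|\vy^\star\|$ polynomially in $\ln \nfrac{1}{\eta}$ and the bit complexity of $\vtheta$. Without this a-priori bound no first-order method can be guaranteed to converge in polynomial time, and it is precisely here that the $\eta$-interior hypothesis must be exploited non-trivially, via a geometric comparison between $g$ along rays through the origin and the support function of $P(\M)$. Handling the approximate counting oracle, in contrast, is mostly bookkeeping: since the oracle returns multiplicative $(1 \pm \eps')$ approximations to both $\sum_M \prod_{e \in M} \gamma_e$ and $\sum_{M \ni e} \prod_{e' \in M} \gamma_{e'}$, the induced estimate of each coordinate of $\nabla g$ is accurate to additive $O(\eps')$, which slots into the ellipsoid framework after choosing $\eps' = \eps \cdot \poly{m, \nfrac{1}{\eta}}$ inversely polynomial in the relevant parameters. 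Chaining the three stages together yields both the exact Theorem \ref{thm:strong} and the approximate Theorem \ref{thm:approximate-count}.
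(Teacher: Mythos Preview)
Your overall plan---dualize, bound the optimum, run the ellipsoid method using the counting oracle as a first-order oracle, then translate back---is exactly the paper's approach. But two of your three stages contain genuine gaps.

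\textbf{Stage 1.} Your claimed bound $\|\vy^\star\| = O(\poly{m}\cdot\ln\nfrac{1}{\eta})$ is wrong; the correct bound is $\|\vy^\star\|\le \nfrac{m}{\eta}$, and your own sketched support-function argument yields precisely this: from $f_\theta(\vy^\star)\le m$ one gets $\iprod{\vy^\star}{\vtheta-\vx}\le m$ for all $\vx\in P(\M)$, and substituting $\vx=\vtheta-\eta\,\vy^\star/\|\vy^\star\|$ gives $\eta\|\vy^\star\|\le m$. This does not break your iteration-count claim (since the ellipsoid method depends on $\ln R$), but it does mean the bit-lengths of the oracle inputs are polynomial in $\nfrac{1}{\eta}$ rather than $\ln\nfrac{1}{\eta}$. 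More seriously, you omit a subtlety the paper has to handle explicitly: when $P(\M)$ lies in a proper affine subspace (as with spanning trees or matchings), the dual optimum is not unique and is unbounded in directions normal to that subspace, so the argument only works after restricting $\vy$ to the linear part of that affine hull.

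\textbf{Stage 3.} Your justification for the marginal bound is incorrect. The ellipsoid method does \emph{not} output a point with small gradient; it outputs a point $\tilde\vy$ with small \emph{objective gap} $g(\tilde\vy)-g(\vy^\star)\le\eps$. Approximate first-order optimality is not what you get. The paper closes this gap by observing that $g(\tilde\vy)-g(\vy^\star)$ equals the KL-divergence $D_{\rm KL}(p^\star\|p_{\tilde\vy})$, and then Pinsker's inequality converts this into a total-variation bound, which in turn bounds each marginal deviation by $O(\sqrt{\eps})$. The entropy comparison $H(\tilde p)\ge(1-\nfrac{\eps}{\eta})H(p^\star)$ then follows by a separate argument (Lemma~\ref{lem:marginal_entropy}): once the marginals $\theta^{\tilde\vy}$ are $\eps$-close to $\vtheta$, interiority lets you write $\vtheta$ as a convex combination of $\theta^{\tilde\vy}$ and another point of $P(\M)$ with weight $1-\nfrac{\eps}{\eta}$ on $\theta^{\tilde\vy}$, and concavity of entropy does the rest.
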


\noindent
A useful setting for $\eta$ and $\eps$ to keep in mind is $\nfrac{1}{m^2}$ and $\nfrac{1}{m^3}$ respectively.
The bit-lengths of the inputs to the counting oracle are  polynomial in $\nfrac{1}{\eta}$ and, hence,  the running time of our algorithm depends polynomially on $\nfrac{1}{\eta}.$
If the generalized counting oracle is $\eps$-approximate, the same guarantee holds. Note that for many approximate counting oracles, the dependence on $\eps$ on their running time is a polynomial in $\nfrac{1}{\eps}.$  Hence, in this case the running time depends polynomially on $\nfrac{1}{\eps}.$
Finally, note that this result can be easily generalized to obtain algorithms for the  problem of finding the distribution that minimizes the Kullback-Leibler divergence from a given product distribution subject to the marginal constraints, see Remark \ref{rem:gencount}.

At a very high level, the algorithm in this theorem is obtained by applying the framework of the ellipsoid algorithm to the dual of the max-entropy convex program. While it is more convenient to work with the dual  since it has $m$ variables  two issues arise: The domain of optimization becomes unconstrained and the separation oracle requires the ability to compute (possibly exponential sums) over subsets of $\M.$ While the counting oracles can be adapted to compute  exponential sums, the unboundedness of the domain of optimization is an important problem.

One of the  technical results in the proof of the theorem above is structural and shows that this dual optimization problem has an optimal solution in a box of size 
$\nfrac{m}{\eta}$ when $\vtheta$ is in the $\eta$-interior of $P(\M),$ see Theorem \ref{thm:radius}.  Since $\gamma_e$s are exponential in the respective dual variables,  there is an approximation $\tilde{\gamma}$ to the optimal solution to the max-entropy program, when $\theta$ is in the $\eta$-interior of $P(\M),$ such that the number of bits needed to represent each $\tilde{\gamma}_e$ is at most $\nfrac{m}{\eta}.$  
Such a result has been obtained for the special case of spanning trees by \cite{AsadpourGMGS10} and for matchings in a general graph by \cite{KahnK97}.

Given that counting algorithms for many problems are still elusive, one may ask if they are really necessary to compute max-entropy distributions. The final result of this paper answers this question in the affirmative and establishes  a converse to Theorem \ref{thm:main1}. 

\begin{theorem}[Optimizing Implies Counting, see Theorem \ref{thm:reverse}]
There is an algorithm, which given oracle access to an algorithm to compute an $\eps$-approximation to the max-entropy convex program for an $\eta$-interior point of $P(\M),$ and a separation oracle for $P(\M),$  can  compute a number $Z$ such that
$$ (1-\eps) |\M|  \leq Z \leq (1+\eps) |\M|.$$
The number of calls made to the max-entropy oracle is  polynomial in the input size and $\nfrac{1}{\eps}.$
\end{theorem}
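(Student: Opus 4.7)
Our plan is to reduce approximate counting of $|\M|$ to approximate maximization of the max-entropy value function, using the given oracles.  Define $\Phi:P(\M)\to\RR$ by $\Phi(\vtheta)\defeq H(p^\star(\vtheta))$, the entropy of the max-entropy distribution with marginals $\vtheta$.  Three observations anchor the reduction: (a) $\Phi$ is concave in $\vtheta$, as it is the value function of a concave maximization whose constraints depend linearly on $\vtheta$; (b) $\max_{\vtheta\in P(\M)}\Phi(\vtheta) = \ln|\M|$, attained at the centroid $\vtheta^u \defeq |\M|^{-1}\sum_{M\in\M}\ones_M$, since the uniform distribution on $\M$ is feasible exactly at $\vtheta^u$ and has entropy $\ln|\M|$; and (c) by Danskin's (envelope) theorem, $\nabla\Phi(\vtheta)$ equals the optimal dual variable associated with the marginal constraints, which the paper parameterizes through $\vec{\gamma}^\star$ as $-\ln\vec{\gamma}^\star(\vtheta)$.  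Thus, counting $\M$ reduces to maximizing a concave function that the max-entropy oracle lets us both evaluate (from the returned optimum) and differentiate.

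Given these facts, the algorithm is straightforward to assemble.  Using the separation oracle for $P(\M)$, form the shrunken polytope $\tilde{P}_\eta(\M)\defeq\{\vtheta:B(\vtheta,\eta)\subseteq P(\M)\}$; a separation oracle for this set follows from the one for $P(\M)$, so we may run the ellipsoid method on $\tilde{P}_\eta(\M)$ to maximize $\Phi$.  At each ellipsoid iterate $\vtheta_k$, query the max-entropy oracle to obtain an $\eps$-approximate value $\Phi(\vtheta_k)$ and an approximate gradient $-\ln\vec{\gamma}^\star(\vtheta_k)$; use these to produce the next separating cut (either a feasibility cut from $\tilde{P}_\eta(\M)$, or an optimality cut from the current gradient).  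After polynomially many iterations, we obtain $\hat{\vtheta}\in\tilde{P}_\eta(\M)$ with $\Phi(\hat{\vtheta})$ close to $\max\Phi$; output $Z\defeq\exp(\Phi(\hat{\vtheta}))$.

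For the error analysis, I would proceed in two steps.  First, show that $\vtheta^u$ (or a nearby point) lies sufficiently deep in $P(\M)$, so there exists $\vtheta'\in\tilde{P}_\eta(\M)$ with $\Phi(\vtheta')\geq \ln|\M|-O(\eps)$; this uses the Lipschitz bound on $\Phi$ in the $\eta$-interior, which in turn follows from the box bound on the optimal dual variables supplied by Theorem \ref{thm:radius}.  Second, combine the standard ellipsoid convergence guarantee with the $\eps$-approximate oracle to conclude $\Phi(\hat{\vtheta})\geq \Phi(\vtheta')-O(\eps)$.  Converting the resulting additive $O(\eps)$ error in $\ln|\M|$ into a multiplicative $(1\pm O(\eps))$ error in $|\M|$ (by rescaling $\eps$ at the outset, since $\ln|\M|\leq m\ln 2$) yields the required sandwich on $Z$.

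The main obstacle will be controlling the behavior of $\Phi$ and its gradient near the boundary of $P(\M)$, where $\vec{\gamma}^\star$ can explode and the oracle offers no guarantee; this forces us to optimize over the shrunken polytope $\tilde{P}_\eta(\M)$ rather than $P(\M)$ itself.  Quantifying how deep $\vtheta^u$ lies inside $P(\M)$ so that the $\eta$-shrinkage costs only $O(\eps)$ in entropy, and showing that an ellipsoid-style method remains stable when driven by approximate values and approximate gradients whose Lipschitz bound depends on the shrinkage parameter, constitute the technical crux.  Once these points are handled, the rest is a standard optimize-to-count reduction, with the polynomial call bound inherited from the ellipsoid method.
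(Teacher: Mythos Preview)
Your proposal is essentially the paper's approach: maximize the concave entropy-value function $g(\vtheta)=\Phi(\vtheta)$ over the $\eta$-interior of $P(\M)$ via the ellipsoid method, using the dual variable $\vlambda$ returned by the max-entropy oracle as the (sub)gradient cut. The paper does not name Danskin's theorem, but its optimality cut $e_t=\vlambda_t$ in Section~\ref{sec:reverse} is exactly your gradient cut, and it builds both ingredients you flag as obstacles---the $\eta$-interior separation oracle (Lemma~\ref{lem:test-interior}) and the interior surrogate $\thetap$ (Lemma~\ref{lem:interior-count}). The paper wraps the ellipsoid in a binary search on a target value $\zeta$ rather than running a single approximate-gradient maximization, but this is a packaging difference.

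One small gap worth noting: in your error analysis you propose to bound the entropy loss when moving from the centroid $\vtheta^u$ into the $\eta$-interior via the Lipschitz constant furnished by Theorem~\ref{thm:radius}. But that bound on $\|\vlambda^\star\|$ applies only to $\vtheta$ already in the $\eta$-interior, whereas $\vtheta^u$ may sit on the boundary of $P(\M)$ (where $\vlambda^\star$ diverges), so the Lipschitz route does not directly control the step from $\vtheta^u$ inward. The paper instead uses concavity: it first exhibits an unconditionally $\mathrm{poly}(1/m)$-deep point $\tilde\theta$ (Lemma~\ref{lem:interiority0}), sets $\thetap=(1-\alpha)\vtheta^u+\alpha\tilde\theta$ for small $\alpha$, and bounds $g(\thetap)\geq(1-\alpha)\ln|\M|$ by concavity and nonnegativity of $g$. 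Your observation (a) already gives you this tool, so the fix is immediate.
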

\noindent
This result can be extended to obtain generalized counting oracles, see Remark \ref{rem:gencount}.
For all polytopes of interest in this paper, separation oracles are known, see Section \ref{sec:examples}. Moreover, this result continues to hold even when the separation oracle is approximate, or {\em weak}. 
As a corollary, using a separation oracle for the perfect matching polytope for general graphs \cite{Edmonds65b,padberg1982odd}, we obtain that an algorithm to compute a good-enough approximation to the max-entropy distribution for any $\vtheta$ in the  perfect matching polytope of a graph $G$ implies an FPRAS to count the  number of perfect matchings in the same graph.

\subsection{Technical Overview}
The starting point for our results is the following dual to the max-entropy convex program:
\begin{equation}\label{eq0-3}
\inf _{\vlambda} \iprod{\vlambda}{\vtheta} + \ln \sum_{M \in \M} e^{-\iprod{\vlambda}{\vec{1}_M}},
\end{equation}
where $\vec{1}_M$ is the indicator vector for $M.$
When $\vtheta$ lies in the relative interior of $P(\M),$ then strong duality holds between the primal and the dual.\footnote{When $\theta$ lies on the boundary of $P(\M),$ the infimum in the dual is not attained for any finite $\lambda.$}
Hence,  it follows from the first order conditions on the optimal solution pair $(p^\star,\vlambda^\star)$ that
$p^\star_M  \propto e^{-\iprod{\vlambda^\star}{\vec{1}_M}}$ for each $M\in \M$.
Suppose we know that
\begin{enumerate}
\item   $\vlambda^\star$ is  bounded, i.e.,  $\|\vlambda^\star\| \leq R$ for some $R,$ and
\item  there is a  generalized counting  oracle that allows us to compute the gradient of the objective function $f(\vlambda) \defeq  \iprod{\vlambda}{\vtheta} + \ln \sum_{M \in \M} e^{-\iprod{\vlambda}{\vec{1}_M}}$ at a specified $\vlambda.$  The gradient at $\vlambda,$  denoted $\nabla f(\vlambda),$ turns out to be a vector whose coordinate corresponding to $e \in [m]$ is
$$ \theta_e -\frac{ \sum_{M \in \M, \; M \ni e} e^{-\iprod{\vlambda}{\vec{1}_M}}}{ \sum_{M \in \M} e^{-\iprod{\vlambda}{\vec{1}_M}}}.$$
\end{enumerate}
Then, using the machinery of the ellipsoid method, it follows relatively straight-forwardly that, for any $\eps,$ we can compute a point $\vlambda^\circ$ such that
$f(\vlambda^\circ) \leq f(\vlambda^\star)+\eps$ with at most a {\rm poly}$(m,\ln \nfrac{1}{\eps}, \ln R)$  calls to the counting oracle. Note that since the numbers fed into the counting oracle are  of the form $e^{-\lambda_e},$ for each $e \in [m],$ the running time of the counting oracle depends polynomially on $R$ rather than $\ln R.$ Thus, we  need $R$ to be polynomially bounded. Hence,  the question is:
\begin{center}Can we bound $\|\vlambda^\star\|?$
\end{center}
Indeed, a significant part of the work done in  \cite{AsadpourGMGS10} was to bound this quantity for  spanning trees and in \cite{KahnK97} for  (not necessarily perfect)  matchings in   graphs.
A priori it is not clear why there should be any such bound. In fact, we observe that if $P(\M)$ lies in some low-dimensional affine space in $\mathbb{R}^m,$   the optimal solution is not unique and can be shifted in any direction normal to the space, see Lemma \ref{lem:shift0}. Thus, one can only hope for the optimal solution to be bounded once one imposes the restriction that $\vlambda^\star$ lies in the linear space corresponding to the affine space in which $P(\M)$ lives.
One thing that works in our favor is that we have an absolute upper bound (independent of $\vtheta$) on the optimal value of $f( \cdot),$ namely $m.$ Roughly, this is because at optimality this quantity is an entropy over a discrete set of size at most $2^m.$
This  implies that for all $M \in \M,$
\begin{equation}\label{eq0-4} \iprod{\vlambda^\star}{\vtheta} -  \iprod{\vlambda^\star}{\vec{1}_M} \leq m.
\end{equation}
Using this,  the $\eta$-interiority of $\vtheta,$ and the fact that the diameter of $P(\M)$ is at most $\sqrt{m},$   it can then be shown that
$$ \max_{M \in \M} \iprod{\vlambda^\star}{\vec{1}_M} -  \min_{M \in \M} \iprod{\vlambda^\star}{\vec{1}_M} \leq \frac{m \sqrt{m}}{\eta}.$$
Let us show how this immediately implies a bound on $R$ when $\M$ corresponds to all the spanning trees of a graph with no bridge. Suppose $T$ and $T'$ are two trees such that $T'$ is obtained from $T$ by deleting an edge $e$ and adding an edge $f,$ then,
$$ |\iprod{\vlambda^\star}{\vec{1}_T} - \iprod{\vlambda^\star}{\vec{1}_{T'}}| =|\lambda^\star_e-\lambda^\star_f| \leq \frac{m \sqrt{m}}{\eta}.$$ Thus, unless the graph has a bridge, this implies  $|\lambda^\star_e-\lambda^\star_f| \leq \frac{m^2 \sqrt{m}}{\eta}$ for all $e,f \in G.$ 
However, attempting a similar combinatorial argument for perfect matchings in a bipartite graph, where we do not have this {\em exchange} property, the bound  is worse by a factor of $2^m.$

Thus, we abandon combinatorial approaches and appeal to the geometric implication  of \eqref{eq0-4} to obtain the desired polynomial bounding box for all $P(\M).$ The argument is surprisingly simple and we sketch it here. One way to interpret \eqref{eq0-4} is that the vector $\vlambda' \defeq -\nfrac{\vlambda^\star}{m}$ has inner product at most $1$ with  $\vec{v}-\vtheta$ for all $\vec{v} \in P(\M).$ For now, neglecting the fact that $P(\M)$ might live in a lower dimensional affine space and that $
\vec{0}$ may not be in $P(\M),$ this implies that $\vlambda'$ is in the {\em polar} of $P(\M).$ However, since $\vtheta$ is in the $\eta$-interior of $P(\M),$ $P(\M)$ contains an $\ell_2$ ball of radius at least $\eta$ inside it. Thus, the polar of $P(\M)$ must be contained in the polar of this ball, which is nothing but an $\ell_2$ ball of radius $\nfrac{1}{\eta}.$ This gives a bound of $\nfrac{1}{\eta}$ on the $\ell_2$ norm of  $\vlambda'$ and, hence, a bound of $\nfrac{m}{\eta}$ on the norm of $\vlambda^\star$ as desired.

Thus, the ellipsoid method can be used to obtain a solution $\vlambda^\circ$ such that $f(\vlambda^\circ) \leq f(\vlambda^\star) +\eps.$ Why should this approximate bound imply that the product distribution obtained using $\vlambda$ is close  in the marginals to $\vtheta$? The observation here is that $f(\vlambda^\circ) - f(\vlambda^\star)$ is the Kullback-Leibler (KL) divergence between the two distributions. This  implies a  bound of $\sqrt{\eps}$ on the marginals using a standard upper bound  on the total variation distance in terms of the KL-divergence.

In  the case when we have access only to  an approximate counting oracle for $\M,$ things are  more complicated. Roughly, the approximate counting oracle translates to having access to an approximate  gradient oracle for the function $f(\cdot)$ and one has to ensure that $\vlambda^\star$ is not cut-off during an iteration. Technically, we  show that this does not happen and, hence, approximate counting oracles are equally useful for obtaining good approximations to max-entropy distributions. 

Finally, note that the (projected-)gradient descent approach (see  \cite{Nesterov}) can also be shown to converge in polynomial time and, possibly, can result in practical algorithms for computing max-entropy distributions. In the case when the counting oracle is approximate, one has to deal with a {\em noisy} gradient and  the solution turns out to be similar to the one  in the ellipsoid method-based algorithm in the presence of an approximate counting oracle.
In addition to a bound on $\|\lambda^\star\|,$ one needs to   bound  the $2 \rightarrow 2$ norm of the gradient of $f.$ While we omit the details of the gradient descent based-algorithm, we  show that $\|\nabla f\|_{2 \rightarrow 2}$ is polynomially bounded, see Remark \ref{rem:grad} and Theorem \ref{thm:2to2} in Appendix \ref{sec:grad}. This bound may be of independent interest. 

We now give an overview of the reverse direction: How to count approximately given the ability to solve the max-entropy convex program for any point $\vtheta$ in the $\eta$-interior of $P(\M).$ We start by noting that if we consider  $\vtheta^\star \defeq \frac{1}{|\M|} \sum_{M \in \M} \vec{1}_M,$ then  the optimal value of the convex program is $\ln |\M|.$  Thus, given access to this vertex-centroid of $P(\M)$ one can get an estimate of $|\M|.$ However, computing $\theta^\star$ can be shown to be as hard as counting $|\M|,$ for instance, when $\M$ consists of  perfect matchings in a bipartite graph, see \cite{Elbassioni201256}. 
We bypass this obstacle and  apply the ellipsoid algorithm on the following (convex-programming) problem
$$\sup_{\vtheta} \inf_{\vlambda} f_{\vtheta}(\vlambda)$$ where $f_{\vtheta}(\vlambda)$ is the function in \eqref{eq0-3} and where we have chosen to highlight the dependence on $\vtheta.$ The ellipsoid algorithm proposes a $\theta$ and expects the max-entropy oracle to output an approximate value for $ \inf_{\vlambda} f_{\vtheta}(\vlambda).$ 
This raises a few  issues: 
First, given our result on optimization via counting, it is unfair to assume that we have such an oracle that works for all $\vtheta,$ irrespective of the interiority of $\theta$ in  $P(\M).$ Thus, we allow queries to the oracle {\em only} when $\vtheta$ is sufficiently in the interior of $P(\M).$ 
Note that our algorithm for computing the max-entropy distribution in our first theorem works under  these guarantees. This requires, in addition, a separation oracle for checking whether a point is in the $\eta$-interior of $P(\M).$ 
We construct such an $\eta$-separation oracle from a  separation oracle for $P(\M).$ The latter,   given a point, either says it  is in $P(\M)$ or returns an inequality valid for $P(\M)$ but violated by this point.

The second issue is that $\theta^\star,$ our target point, may not be in the $\eta$-interior of $P(\M).$ In fact, there may not be {\em any} point in the $\eta$-interior of $P(\M)$ when $\eta$ is $\nfrac{1}{{\rm poly}(m)}.$  However, under reasonable conditions on $P(\M),$ which are satisfied for all polytopes we are interested in, we can show that there is a point $\theta^\bullet$ in the $\eta$-interior of $P(\M).$  This  allows us to recover a good enough estimate of $|\M|.$ Thus (the way we apply the framework of ellipsoid algorithm), we are able to  recover a  point close enough to $\theta^\bullet$ by doing a binary search on the target value of $|\M|.$
As in the forward direction, because we assume that the max-entropy algorithm is approximate, we must argue that $\theta^\bullet$ is not cut-off during any iteration of the ellipsoid algorithm. 

We conclude this overview with a couple of remarks. First, unlike our results in the forward direction, we cannot replace the ellipsoid method based algorithm by a gradient descent approach. The reason is that we only have a separation oracle to detect whether a point is in  $P(\M)$ or not.
Second, we can extend our result to show that, using a max-entropy oracle, one can obtain {\em generalized} approximate counting oracles, see Remark \ref{rem:gencount}.  

\subsection{Organization of the Rest of the Paper}
The rest of the paper is organized as follows. In Section~\ref{sec:prelim}, we formally define the objects of interest in our paper including the convex program for optimizing the max-entropy distribution and its dual. We also define counting oracles that are needed for solving the convex program. We then formally state our results and give a few lemmas stating properties about the optimal and near-optimal solution to the dual of the max-entropy convex program.
In Section~\ref{sec:examples} we  provide examples of some combinatorial polytopes to which our results apply. In Section \ref{sec:tsp}, we show how certain algorithmic approaches for approximating the symmetric and the asymmetric traveling salesman problem are feasible as a result of one of the main results of this paper.
In Section~\ref{sec:bounding}, we prove that there is an optimal solution to the dual of the max-entropy convex program that is contained in a ball of small radius around the origin. In Section~\ref{sec:algorithm}, we use this bound on the optimal solution to show that counting oracles, both exact and approximate, can be used to optimize the convex program via the ellipsoid algorithm. In Section~\ref{sec:reverse}, we show the other direction of the reduction and give an algorithm that can approximately count given an oracle that can approximately solve the max-entropy convex program.
Standard proofs are omitted from the main body and appear in Appendix \ref{sec:omit}. In Appendix \ref{sec:general} we show how generalized counting oracles can be obtained via max-entropy oracles. Here, we also introduce the program for minimizing the KL-divergence with respect to a fixed distribution.
Finally, in Appendix \ref{sec:grad}, we give a bound on $\|\nabla f\|_{2 \rightarrow 2}.$

\section{Preliminaries}\label{sec:prelim}

\subsection{Notation} In this section we introduce the general notation used throughout the paper.
Vectors are denoted by plain letters such as $a,b,c,d,x,y,u$ and $v$ and are over $\mathbb{R}^m.$ We  also use the Greek letters $\lambda,\theta, \nu$ and $\gamma$ to denote vectors. ${0}$ is  sometimes used to denote the all-zero vector and the usage should be clear from  context.
 For reasons emanating from applications, we  choose to index the set $[m]$ by $e.$ Hence, the components of a vector are denoted by $x_e,\lambda_e, \theta_e,$ etc. We also use notation such as  $x_0,x_1,\ldots, x_t$ and $\lambda_0,\lambda_1,\ldots,\lambda_t$  to denote vectors. It should be clear from the context that these are vectors and not their components.
  The Greek letters  $\eta,\alpha, \beta, \eps,\zeta$  are used to denote positive real numbers.
For a set $M \in \{0,1\}^m,$ let $1_M$  denote the $0/1$ indicator vector for $M.$ We  use $1_M(e)$ to denote its $e$-th component. Thus, $1_M(e)=1$ if $e \in M$ and $0$ otherwise.
The letters $p,q$ and $r$ are reserved to denote probability distributions over $\{0,1\}^m.$ Of special interest are   product probability distributions where, for $M \in \{0,1\}^m,$  the probability of $M$ is proportional to $\prod_{e \in M} \gamma_e$ for some vector $\gamma.$ We  denote such a probability distribution by $p^\gamma$ to emphasize its dependence on $\gamma,$ and let $p^\gamma_M$ denote the probability of $M.$  Additionally $\langle x,y \rangle$ denotes the inner product of two vectors, $\| \cdot \|$ denotes the Euclidean norm and $\| x \|_\infty \defeq \max_{e \in [m]} |x_e|.$ We also use the notation $\lambda(M)$ to denote $\langle \lambda, 1_M \rangle$  for a vector $\lambda$ and $M \in \{0,1\}^m.$ $|S|$ denotes the cardinality of a set.

\subsection{Combinatorial Polytopes, Separation Oracles, Counting Oracles and Interiority}

The polytopes of interest arise as convex hulls of subsets of $\{0,1\}^m$ for some $m.$
For a set $\M \subseteq \{0,1\}^m,$ the corresponding polytope is denoted by $P(\M).$  Thus,
$$ P(\M) \defeq \left\{ \sum_{M \in \M} p_M \one_M: p_M \geq 0,\ \ \sum_{M \in \M} p_M =1 \right\}.$$
Another way to describe $P(\M)$ is to give a maximal set of linearly independent equalities satisfied by all its vertices, and to list the inequalities that define $P(\M).$ Thus, $P(\M)$ can be described by  $(A^=, \vec{b})$ and $(A^\leq, \vec{c})$  such that
$$\forall_{M \in \M} \ \quad A^=\one_M=\vec{b} \ \quad \mbox{and} \ \quad A^\leq \one_M \leq \vec{c}.$$
While the former set cannot be more than $m,$ the latter set can be exponential in $m$ and we do not assume that $(A^\leq,c)$  is given   to us explicitly.

\vspace{2mm}
\noindent
{\bf Separation Oracles.} On occasion we  require an access to a {\em separation oracle} for $P(\M)$ of the following form: Given $\vec{\lambda} \in \mathbb{R}^m$ satisfying $A^=\vec{\lambda} = \vec{b},$ the separation oracle either says that $A^\leq \vec{\lambda} \leq \vec{c}$ or  outputs an inequality $(a',c')$  such that
$$\langle a', \vec{\lambda} \rangle > c'.$$
In fact, such an oracle is often termed  a {\em strong} separation oracle.\footnote{In our results that depend on access to a strong separation oracle, we can relax the guarantee to that of a {\em weak}  separation oracle. We omit the details.}

\vspace{2mm}
\noindent
{\bf Counting Oracles.} The standard counting problem associated to $\M$ is to determine $|\M|,$ i.e.,  the number of vertices of $P(\M).$
We are interested in a more general counting problem associated to $\M$ where there is a weight $\lambda_e$ for each $e \in [m]$ and the weight of  $M$ under this measure is  $e^{-\lambda(M)}.$ A {\em generalized exact counting oracle} for $\M$ then outputs the following two quantities:
\begin{enumerate}
\item $Z^{\vlambda} \defeq \sum_{M \in \M} e^{-\lambda(M)} $ and
\item for every $e \in [m],\;\; Z^{\vlambda}_e \defeq \sum_{M \in \M,\; M \ni e} e^{-\lambda(M)}.$
\end{enumerate}
The oracle is assumed to be efficient, as in it runs in time polynomial in $m$ and bits needed to represent $e^{-\lambda_e}$ for any $e\in [m]$.\footnote{To deal with issues of irrationality, it  suffices to obtain the first $k$ bits of  $Z^{\vlambda}$ and $Z^{\vlambda}_e$ in time polynomial in $k$ and $m$.}

While efficient generalized exact counting oracles are known for some settings, for many problems of interest the exact counting problem is \#P-hard. However, often, for these \#P-hard problems, efficient oracles which can compute arbitrarily good approximations to the quantities of interest are known. Thus, we have to relax the notion to {\em generalized approximate counting oracles} which are possibly randomized. Such an oracle,  given $\epsilon,\alpha>0$ and  weights  $\vec{\lambda}\in \mathbb{R}^{m}$,  returns $\widetilde{Z}^{{\vlambda}}$ and $\widetilde{Z}^{\vlambda}_e$ for each $e\in [m].$ The following guarantees hold with probability at least $1-\alpha$,
\begin{enumerate}
\item $(1-\epsilon) Z^\lambda \leq \widetilde{Z}^{{\vlambda}} \leq (1+\epsilon)  Z^\lambda$ and
\item for every $e \in [m],$
$(1-\eps) Z^\lambda_e \leq \widetilde{Z}^{\vlambda}_e \leq (1+\eps) Z^\lambda_e .$
\end{enumerate}
The running time is polynomial in $m$,  $\nfrac{1}{\epsilon},$  $\log {\nfrac{1}{\alpha}}$ and the number of bits needed to represent $e^{-\lambda_e}$ for any $e\in [m]$. For the sake of readability, we  ignore the fact that approximate counting oracle may be randomized. The statements of the theorems that use randomized approximate counting oracles can be modified appropriately to include the dependence on $\alpha.$
Note that if the problem at hand is self-reducible, then having access to an oracle that outputs an approximation to just  $Z^\lambda$ suffices. We omit the details and the reader is referred to a discussion on self-reducibility and  counting in \cite{JerrumSinclair}.
Finally, it can be shown that, in our setting, the existence of a generalized (exact or approximate) counting oracle is a stronger requirement than the existence of a separation oracle.

\vspace{2mm}
\noindent
{\bf Interior of the Polytope.} The dimension of $P(\M)$ is $m- {\rm rank}(A^=);$  the polytope restricted to this affine space is full dimensional.
Since we  work with polytopes that are not full dimensional, we extend the notion of  the interior of the polytope $P(\M)$ and use the following definition.

\begin{definition}
For an $\eta>0,$ a point $\vec{\theta}$ is said to be in the $\eta$-interior of $P(\M)$ if
$$\left\{\vec{\theta}': A^=\vec{\theta}'=\vec{b} , \  \|\vec{\theta}-\vec{\theta}'\|\leq \eta \right\}\subseteq P(\M).$$
We say that $\vec{\theta}$ is in the interior of $P(\M)$ if $\vec{\theta}$ is in the $\eta$-interior of $P(\M)$ for some $\eta>0.$
\end{definition}

\noindent
We are be interested in the case where $\eta \geq \frac{1}{{\rm poly}(m)}.$ Hence, it is natural to ask if for every $P(\M),$ there is a point in its $\frac{1}{{\rm poly}(m)}$-interior. The following lemma, whose proof appears in Section \ref{sec:reverse}, asserts that the answer  is yes if the entries of $A^\leq$ and $c$ are {\em reasonable} (as is the case in all our applications).

\begin{lemma}\label{lem:interiority0}
Let $\M \subseteq \{0,1\}^m$ and  $P(\M)=\left\{\vec{x}\in \RR^m_{\geq 0}: A^= \vec{x}=\vec{b}, \; A^{\leq}\vec{x}\leq \vec{c}\right\}$ be such that all the entries in $A^{\leq},\vec{c}\in \frac{1}{{\rm poly}(m)} \cdot \ZZ$ and their  absolute values are at most ${\rm poly}(m).$
Then there exists a ${\tilde{\theta}}\in P(\M)$ such that $\tilde{\theta}$ is in the $\frac{1}{{\rm poly}(m)}$-interior of $P(\M).$
\end{lemma}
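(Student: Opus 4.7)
The plan is to take $\tilde\theta$ to be the centroid of a maximal affinely independent collection of vertices of $P(\M)$ and then directly verify that every defining inequality of $P(\M)$ leaves slack at least $1/{\rm poly}(m)$ at $\tilde\theta$. Let $d = m - {\rm rank}(A^=) = \dim P(\M)$. Since $P(\M)$ is $d$-dimensional, there exist $d+1$ vertices $M_0,\dots,M_d\in\M$ whose indicator vectors are affinely independent. I set $\tilde\theta \defeq \frac{1}{d+1}\sum_{i=0}^d \one_{M_i}$; as a convex combination of vertices, $\tilde\theta\in P(\M)$.

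Next, consider any row $\langle a,x\rangle \le c'$ of the system $A^\le x\le c$ (treating the non-negativity constraints $-x_e\le 0$ on the same footing, since their entries satisfy the same polynomial bounds). There are two cases. If $\langle a,x\rangle = c'$ holds on all of $\{x:A^=x=b\}$, then the inequality is implied by the equalities and places no restriction on the candidate $\theta'$ below. Otherwise, by the maximality of $A^=$, the equation $\langle a,x\rangle = c'$ cannot hold throughout the affine hull of $P(\M)$, and hence cannot hold at each of $M_0,\dots,M_d$ simultaneously (else affine independence would force it throughout the affine hull). So some $M_i$ satisfies $\langle a,\one_{M_i}\rangle < c'$. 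Since all entries of $a$ and $c'$ lie in $\frac{1}{{\rm poly}(m)}\cdot\ZZ$, the positive rational $c' - \langle a,\one_{M_i}\rangle$ is at least $1/{\rm poly}(m)$, and averaging over the $d+1$ vertices gives
\[
c' - \langle a,\tilde\theta\rangle \;\ge\; \frac{1}{(d+1)\cdot{\rm poly}(m)} \;\ge\; \frac{1}{{\rm poly}(m)}.
\]

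Finally, since $\|a\|_\infty\le{\rm poly}(m)$, we have $\|a\|_2\le\sqrt{m}\cdot{\rm poly}(m)\le{\rm poly}(m)$, so I may choose $\eta = 1/{\rm poly}(m)$ small enough that $\|a\|_2\cdot\eta$ is strictly less than the slack above for every such inequality. For any $\theta'$ with $A^=\theta'=b$ and $\|\theta'-\tilde\theta\|\le\eta$, each inequality either reduces to equality (when implied by $A^=$) or satisfies $\langle a,\theta'\rangle \le \langle a,\tilde\theta\rangle + \|a\|_2\cdot\eta \le c'$. Hence $\theta'\in P(\M)$, so $\tilde\theta$ lies in the $\eta$-interior of $P(\M)$.

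The main obstacle is the pigeonhole-style argument in the middle paragraph: a \emph{maximal} affinely independent set of vertices, combined with the maximality of $A^=$, is what simultaneously witnesses strict slack at every inequality not already implied by the equalities. Once this structural fact is in hand, the rest is a routine bit-length calculation that turns the assumption $(A^\le,\vec{c})\in\frac{1}{{\rm poly}(m)}\cdot\ZZ$ with polynomially bounded entries into a uniform $1/{\rm poly}(m)$ lower bound on slack.
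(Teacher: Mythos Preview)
Your proof is correct and follows essentially the same approach as the paper: take $\tilde\theta$ to be the centroid of a maximal affinely independent set of vertices, then use the integrality assumption on $(A^\leq,\vec{c})$ to lower-bound the slack at every defining inequality. The only cosmetic difference is that the paper phrases the key step in terms of facets (a facet has dimension one less than $P(\M)$, so at least one of the $d+1$ vertices misses it) and then converts slack to Euclidean distance, whereas you phrase it via maximality of $A^=$ and handle the degenerate ``implied by equalities'' case explicitly; the content is the same.
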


\noindent
At this point, if one wishes, one can look at Section \ref{sec:examples} for some examples of combinatorial polytopes we consider in this paper.

\subsection{The Maximum Entropy Convex Program}
In this section we present the  convex program for computing  max-entropy distributions. Let $P(\M)$ be the polytope corresponding to $\M.$ While we do not care about whether we have an oracle for $\M$  in this section,  the notion of interiority is important.

For any point $\vec{\theta} \in P(\M),$ by definition, it can be written as a convex combination of vertices of $P(\M),$ each of which is indicator vector for some $M\in \M$. Each such convex combination is  a probability distribution over $M \in \M.$  Of central interest in this paper is a way to find the convex combination
that maximizes the entropy of the underlying probability distribution.
Given $\vec{\theta},$ we can express the problem of finding the max-entropy distribution over the vertices of $P(\M)$ as the  program in Figure \ref{conv-entropya}.
\begin{figure}[h]
\begin{center}
\begin{eqnarray*}
  \sup & \sum_{M\in \M} p_M \ln\frac{1}{p_M}  \\
\st  \\
\forall e\in [m]& \sum_{M\in \M, \; M \ni e} p_M=\theta_e \\
  & \sum_{M\in \M} p_M =1\\
\forall M\in \M & p_M \geq 0
\end{eqnarray*}
\end{center}
\caption{Max-Entropy Program for ($\M,\vec{\theta}$)} \label{conv-entropya}
\end{figure}
Here $0 \ln \frac{1}{0}$ is assumed to be $0.$ This entropy function is easily seen to be concave and, hence, maximizing it is a convex programming problem.
The following folklore lemma, whose proof appears in Appendix \ref{sec:omit-duality}, shows that  if $\vec{\theta}$ is in the interior of $P(\M),$ then the max-entropy distribution corresponding to it is unique and can be succinctly represented.
Recall the notation that for  $\lambda:[m] \mapsto \mathbb{R}$ and $M \in \M,$
$\lambda(M) \defeq \langle \vec{\lambda} , \one_M \rangle=  \sum_{e \in M} \lambda_e.$

\begin{lemma}\label{prop:maxentropya}
For a point $\vec{\theta}$ in the interior of $P(\M),$  there exists a unique distribution $p^\star$ which attains the max-entropy while satisfying
$$\sum_{M \in \M} p^\star_M \one_M=\vec{\theta}.$$ Moreover, there exists a $\lambda^\star :[m] \mapsto \mathbb{R}$ such that $p^\star_M\propto e^{-\lambda^\star(M)}$ for each $M\in \M$.
\end{lemma}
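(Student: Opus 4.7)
The plan is to verify the two assertions (existence/uniqueness of $p^\star$, and its product form) as a standard application of Lagrangian duality, with the interiority hypothesis playing the role of Slater's condition.

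First I would establish existence and uniqueness. Since $\vec{\theta}$ lies in the interior of $P(\M)$, it can in particular be written as a convex combination of the vertices $\one_M$, so the feasible set of the program in Figure~\ref{conv-entropya} is non-empty. This feasible set is a closed subset of the simplex $\{p : p_M \geq 0, \sum_M p_M = 1\}$, hence compact, and the Shannon entropy $H(p) = \sum_M p_M \ln(1/p_M)$ is continuous on it (with $0\ln\tfrac{1}{0}=0$). Therefore the supremum is attained. Uniqueness then follows from the fact that $H$ is strictly concave on the simplex: if there were two distinct optimizers $p^{(1)},p^{(2)}$, the midpoint $\tfrac{1}{2}(p^{(1)}+p^{(2)})$ is still feasible (the constraints are linear) and has strictly larger entropy, a contradiction.

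Next I would derive the product form by writing out the Lagrangian with multipliers $\lambda_e$ for each marginal constraint and $\nu$ for the normalization:
\[
L(p,\lambda,\nu) \;=\; \sum_{M\in\M} p_M \ln\frac{1}{p_M} \;-\; \sum_{e\in[m]} \lambda_e\Bigl(\sum_{M\ni e} p_M - \theta_e\Bigr) \;-\; \nu\Bigl(\sum_{M\in\M} p_M - 1\Bigr).
\]
Because $\vec{\theta}$ is in the interior of $P(\M)$, there exists a strictly positive feasible distribution (e.g.\ mixing any feasible $p$ with a small amount of a fully supported distribution whose marginals agree with $\vec{\theta}$ after suitable perturbation). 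This strict feasibility is a Slater-type condition, so strong duality holds and the KKT conditions are necessary and sufficient at the optimum. Differentiating $L$ with respect to each $p_M$ and setting the derivative to zero gives
\[
-\ln p_M - 1 - \sum_{e\in M}\lambda_e - \nu \;=\; 0 \qquad \Longrightarrow \qquad p_M \;=\; e^{-1-\nu}\,e^{-\lambda(M)}.
\]
Thus at the optimum $p^\star_M \propto e^{-\lambda^\star(M)}$ with the proportionality constant absorbing $e^{-1-\nu^\star}$, which is exactly the claimed form. The non-negativity constraints are automatically satisfied (indeed strictly) by this exponential form, so the complementary slackness conditions are trivial.

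The only subtle point, and the one I would take most care with, is justifying the use of Lagrangian duality at the boundary: the constraint $p_M \geq 0$ could in principle be active, which would require adding multipliers and checking that they vanish at the optimum. The cleanest route is to invoke the interiority of $\vec{\theta}$ to exhibit a strictly positive feasible $p$, argue as above that the stationary point of the unconstrained Lagrangian (i.e.\ the one obtained by ignoring non-negativity) is itself strictly positive, and then appeal to strict concavity to conclude it is the unique global maximizer. The succinct representation by the $m$ numbers $\lambda^\star_e$ (equivalently $\gamma_e = e^{-\lambda^\star_e}$) is then immediate from $p^\star_M \propto \prod_{e\in M}\gamma_e$.
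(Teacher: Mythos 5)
Your proof is correct and follows essentially the same route as the paper's: you form the Lagrangian with multipliers $\lambda_e$ for the marginal constraints and a scalar for normalization, invoke interiority of $\vec{\theta}$ to supply a Slater point (a strictly positive feasible distribution) so that strong duality and the KKT conditions hold, read off the exponential product form from the stationarity condition, and derive uniqueness from strict concavity of entropy on the simplex. The only organizational difference is that you front-load existence via compactness/continuity of $H$ on the feasible set, which the paper leaves implicit, and you flag more explicitly that the $p_M\geq 0$ constraints are strictly inactive at the optimum; both are harmless refinements of the same argument.
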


\noindent
As we observe soon,  while $p^\star$ is unique, $\lambda^\star$ may not be.
First,  we record the following definitions about such product distributions.
\begin{definition}
For any $\vec{\lambda}\in \mathbb{R}^{m}$, we define the distribution $p^{\vlambda}$ on $\M$ such that
$$\forall {M \in \M} \quad p_M^{\vlambda} \defeq \frac{ e^{-\lambda(M)}}{Z^{\vlambda}} \textrm{ where } Z^{\vlambda} \defeq \sum_{N\in \M} e^{-\lambda(N)}.$$ The marginals of such a distribution are denoted by $\vec{\theta}^{\vlambda}$ and defined to be
$$\theta^{\vlambda}_e \defeq \sum_{M \in \M, \; M \ni e} p^{\vlambda}_M= \frac{Z^{\vlambda}_e}{Z^{\vlambda}}\textrm{ where } Z^{\vlambda}_e \defeq \sum_{M \in \M, \; M \ni e} e^{-\lambda(M)}.$$
\end{definition}

\noindent
The proof of this lemma  relies on establishing that strong duality holds  for computing the max-entropy distribution with marginals $\vec{\theta}$ for the  convex program in Figure \ref{conv-entropya}. The dual of this program appears in Figure \ref{conv-dual-entropya}.
Thus, if $\theta$ is in the interior of $P(\M),$ then there is a  $\lambda^\star$ such that $p^\star = p^{\lambda^\star}$ and
$f_\theta(\lambda^\star)=H(p^\star).$

Note that
$\lambda^\star$ may not be unique and,
finally,  that an important property of the dual objective function is that  $f_\theta$ does not change if we {\em shift}  by a vector in the span of the rows of $A^=.$ This is  captured in the following lemma whose proof appears in Appendix \ref{sec:omit-near-optimal}.

 \begin{lemma}\label{lem:shift0}
$f_\theta(\vec{\lambda})= f_\theta(\vec{\lambda}+ (A^=)^\top d)$ for any $\vec{d}.$
\end{lemma}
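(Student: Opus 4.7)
The plan is to prove this by direct substitution into the dual objective. Writing the dual objective explicitly,
\[
f_\theta(\vlambda) \;=\; \langle \vlambda, \vtheta\rangle \;+\; \ln \sum_{M\in\M} e^{-\langle \vlambda,\one_M\rangle}.
\]
The two structural facts I would use are that $\vtheta \in P(\M)$ implies $A^= \vtheta = \vec{b}$, and by the definition of the equality description of the polytope, $A^= \one_M = \vec{b}$ for every $M \in \M$. Both facts are already established in the excerpt (in the paragraphs defining $P(\M)$ via $(A^=,\vec{b})$, $(A^\leq,\vec{c})$).

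Next, fix an arbitrary $\vec{d}$ and set $\vlambda' := \vlambda + (A^=)^\top \vec{d}$. Using the adjoint identity $\langle (A^=)^\top \vec{d}, \vec{u}\rangle = \langle \vec{d}, A^= \vec{u}\rangle$, the two facts above give
\[
\langle (A^=)^\top \vec{d}, \vtheta\rangle = \langle \vec{d},\vec{b}\rangle, \qquad \langle (A^=)^\top \vec{d}, \one_M\rangle = \langle \vec{d},\vec{b}\rangle \ \text{ for every } M \in \M.
\]
So the shift contributes the same constant $\langle \vec{d},\vec{b}\rangle$ both to the linear term and to every exponent in the sum.

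Finally, I would substitute and carry out the cancellation: the linear part becomes $\langle \vlambda,\vtheta\rangle + \langle \vec{d},\vec{b}\rangle$, while
\[
\ln \sum_{M \in \M} e^{-\langle \vlambda',\one_M\rangle} \;=\; \ln\Bigl(e^{-\langle \vec{d},\vec{b}\rangle}\sum_{M\in\M} e^{-\langle \vlambda,\one_M\rangle}\Bigr) \;=\; -\langle \vec{d},\vec{b}\rangle + \ln \sum_{M\in\M} e^{-\langle \vlambda,\one_M\rangle},
\]
and the two occurrences of $\langle \vec{d},\vec{b}\rangle$ cancel, leaving $f_\theta(\vlambda') = f_\theta(\vlambda)$. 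There is no real obstacle here: the lemma is a direct consequence of the fact that the equality constraints defining $P(\M)$ are satisfied with equal right-hand sides by both $\vtheta$ and every vertex $\one_M$, so any shift in the row space of $A^=$ is a pure gauge that translates the linear term and the log-partition by the same constant.
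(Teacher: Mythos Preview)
Your proof is correct and essentially identical to the paper's own argument: both compute the linear term and the log-partition term separately, use $A^=\vtheta=\vec{b}$ and $A^=\one_M=\vec{b}$ to show each shifts by $\pm\langle \vec{d},\vec{b}\rangle$, and cancel. The only cosmetic difference is that the paper derives $A^=\vtheta=\vec{b}$ by writing $\vtheta$ as a convex combination of the $\one_M$, whereas you cite it directly from $\vtheta\in P(\M)$.
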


\noindent
Thus, we can restrict our search for the optimal solution to the set
$ \{\vec{\lambda}\in \mathbb{R}^m:  A^=\vec{\lambda}=\vec{0}\}.$  In this set there is a unique $\lambda^\star$ which achieves the optimal value since  the constraints $(A^=,b)$ are assumed to form a maximal linearly independent set. We  refer to this  $\lambda^\star$  as the unique solution to the dual convex program.

\begin{figure}[h]
\begin{center}
\begin{eqnarray*}
\inf & f_\theta(\vec{\lambda})\defeq \sum_{e\in [m]}\theta_e \lambda_e+\ln \sum_{M\in \M}e^{-\lambda(M)} \\
\st \\
\forall e\in [m] & \lambda_e \in \mathbb{R}
\end{eqnarray*}
\end{center}
\caption{Dual of the Max-Entropy Program for ($\M,\vec{\theta}$)} \label{conv-dual-entropya}
\end{figure}

\subsection{Formal Statement of Our Results}\label{sec:formal}

Our first result shows that if one has access to an  generalized exact counting oracle then one can indeed compute a good approximation to the max-entropy distribution for specified marginals.

\begin{theorem}\label{thm:strong}
There exists an algorithm that, given a maximal set of linearly independent equalities $(A^=,b)$ and a  generalized exact counting oracle for $P(\M) \subseteq \mathbb{R}^m,$   a $\vec{\theta}$ in the $\eta$-interior of $P(\M)$ and an $\epsilon>0,$ returns a $\vec{\lambda}^\circ$ such that
$$f_\theta(\vec{\lambda}^\circ) \leq f_\theta(\vec{\lambda}^\star) + \eps,$$
where $\vec{\lambda}^{\star}$ is the optimal solution to the dual of the max-entropy convex program for $(\M,\vec{\theta})$ from Figure \ref{conv-dual-entropya}. Assuming that the generalized exact counting oracle is polynomial in its input parameters, the running time of the algorithm is polynomial in $m$, $\nfrac{1}{\eta},$ $\log \nfrac{1}{\epsilon}$ and the number of bits needed to represent  $\theta$ and  $(A^=,b).$
\end{theorem}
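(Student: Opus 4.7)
The plan is to apply the ellipsoid method to the dual convex program in Figure~\ref{conv-dual-entropya}. By Lemma~\ref{lem:shift0}, the dual objective $f_\theta$ is invariant under shifts in the row-span of $A^=$, so I restrict the search to the affine subspace $L \defeq \{\lambda \in \RR^m : A^=\lambda = \vec 0\}$, in which $\lambda^\star$ is unique. A single query to the generalized exact counting oracle at any $\lambda$ returns $Z^\lambda$ and all the $Z^\lambda_e$, which directly yield both the value $f_\theta(\lambda) = \langle \theta,\lambda\rangle + \ln Z^\lambda$ and the gradient $(\nabla f_\theta(\lambda))_e = \theta_e - Z^\lambda_e/Z^\lambda$. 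Since $f_\theta$ is convex, this gradient supplies the standard separating (sub)gradient cut needed at every iteration of the ellipsoid method.

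The crux is to produce an explicit polynomial-radius bounding ball in $L$ guaranteed to contain $\lambda^\star$, so that the ellipsoid method has a valid initial ellipsoid. I will prove $\|\lambda^\star\| \le m/\eta$ via the polar-body argument sketched in the technical overview (formalized as Theorem~\ref{thm:radius}). Since $f_\theta(\lambda^\star) = H(p^\star) \le \ln |\M| \le m\ln 2$ and $\ln Z^{\lambda^\star} \ge -\lambda^\star(M)$ for every $M\in \M$, inequality~\eqref{eq0-4} gives $\langle \lambda^\star,\theta\rangle - \lambda^\star(M) \le O(m)$ for all $M\in\M$. Equivalently, the rescaled vector $-\lambda^\star/m$ has inner product at most $O(1)$ with every $\one_M - \theta$, so it lies in the polar of $P(\M)-\theta$ relative to $L$. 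The $\eta$-interiority of $\theta$ guarantees that $P(\M)-\theta$ contains an $\ell_2$-ball of radius $\eta$ inside $L$, so its polar in $L$ sits inside an $\ell_2$-ball of radius $O(1/\eta)$. This forces $\|\lambda^\star\| \le O(m/\eta)$.

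With the bounding ball in hand, the ellipsoid method is initialized at $B(\vec 0, O(m/\eta)) \cap L$; at each step the counting oracle is queried at the current center $\lambda$ to compute $\nabla f_\theta(\lambda)$, which then defines the separating cut. The standard analysis of the ellipsoid method for convex minimization via a first-order oracle yields, after $\mathrm{poly}(m, \log(m/(\eta\eps)))$ iterations, a point $\lambda^\circ \in L$ with $f_\theta(\lambda^\circ) \le f_\theta(\lambda^\star) + \eps$.

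The main obstacle is exactly the polynomial bound on $\|\lambda^\star\|$ established in the second paragraph; without it one has no valid starting ellipsoid and the method does not apply at all. A subsidiary technicality is the bit-complexity bookkeeping: the counting oracle is queried on weights $e^{-\lambda_e}$ with $\|\lambda\|_\infty \le O(m/\eta)$, so each oracle call takes time $\mathrm{poly}(m, 1/\eta, \log(1/\eps))$, and this is where the polynomial (rather than logarithmic) dependence on $1/\eta$ in the final running time arises. All remaining ingredients—strong duality, convexity of $f_\theta$, the shift-invariance of the dual under $A^=$, and reading the gradient off the oracle output—are already in place in the excerpt.
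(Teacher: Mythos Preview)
Your proposal is correct and follows essentially the same route as the paper: bound $\|\lambda^\star\|$ via the polar-body argument (the paper's Theorem~\ref{thm:radius}), realize the strong first-order oracle from the generalized exact counting oracle, and run the ellipsoid method on $f_\theta$ restricted to $L$. The one small step the paper makes explicit that you absorb into ``standard analysis'' is calibrating the ellipsoid accuracy parameter: they bound $\sup_{\|\lambda\|_\infty\le R} f_\theta(\lambda)\le (2m+1)R$ and use $\inf f_\theta\ge 0$ to pick $\beta=\eps/((2m+1)R)$ in Theorem~\ref{thm:strong-ellipsoid}, but this is routine and your sketch loses nothing essential.
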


\noindent
The proof of this theorem follows from an application of the ellipsoid algorithm for minimizing the dual convex program. At a first glance, it may seem enough to show that  $\|\vec{\lambda}^{\star}\|\leq 2^{\frac{{\rm poly}(m)}{\eta}}$ since the number of iterations of the ellipsoid algorithm depends on $\log  \|\vec{\lambda}^{\star}\|$. Unfortunately, this is not enough since each call to the oracle with input $\vec{\lambda}$ takes time polynomial in the number of bits needed to represent $e^{-\lambda_e}$ for any $e\in [m]$. We show the following theorem which provides a polynomial bound on $\|\vec{\lambda}^{\star}\|$.

\begin{theorem}\label{thm:radius}
Let $\vec{\theta}$ be in the  $\eta$-interior of $P(\M) \subseteq  \mathbb{R}^m$. Then there exists an optimal solution $\vec{\lambda}^{\star}$ to the dual of the max-entropy convex program such that $\|\vec{\lambda}^{\star}\|\leq \frac{m}{\eta}$.
\end{theorem}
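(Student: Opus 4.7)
The plan is to translate the optimality bound on $f_\theta(\vec{\lambda}^\star)$ into a geometric statement about $\vec{\lambda}^\star$ lying in a polar body, and then exploit the $\eta$-interiority of $\vec{\theta}$ to bound that polar body.

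\textbf{Step 1: An absolute upper bound on $f_\theta(\vec{\lambda}^\star)$.} Since $\vec{\theta}$ is in the interior of $P(\M)$, strong duality holds by Lemma~\ref{prop:maxentropya}, so $f_\theta(\vec{\lambda}^\star)=H(p^\star)$. Because $\M\subseteq\{0,1\}^m$, we have $|\M|\leq 2^m$, hence $H(p^\star)\leq\ln|\M|\leq m\ln 2\leq m$. This bound is independent of $\vec{\theta}$ and is the crucial input.

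\textbf{Step 2: Per-vertex inequalities.} For every $M\in\M$, the partition function satisfies $Z^{\vec{\lambda}^\star}\geq e^{-\lambda^\star(M)}$, so $\ln Z^{\vec{\lambda}^\star}\geq -\langle\vec{\lambda}^\star,\one_M\rangle$. Combining with Step~1,
\[
m\;\geq\;f_\theta(\vec{\lambda}^\star)\;=\;\langle\vec{\lambda}^\star,\vec{\theta}\rangle+\ln Z^{\vec{\lambda}^\star}\;\geq\;\langle\vec{\lambda}^\star,\vec{\theta}-\one_M\rangle.
\]
Writing $\vec{\lambda}'\defeq -\vec{\lambda}^\star/m$ this rearranges to $\langle\vec{\lambda}',\one_M-\vec{\theta}\rangle\leq 1$ for all $M\in\M$, and by taking convex combinations it extends to all $\vec{v}\in P(\M)$:
\[
\langle\vec{\lambda}',\vec{v}-\vec{\theta}\rangle\;\leq\;1\qquad\forall\,\vec{v}\in P(\M).
\]
So $\vec{\lambda}'$ lies in the polar of the translated polytope $P(\M)-\vec{\theta}$ (restricted to the relevant affine directions).

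\textbf{Step 3: Using $\eta$-interiority.} By Lemma~\ref{lem:shift0} we may assume $\vec{\lambda}^\star\in L\defeq\ker(A^=)$, and hence $\vec{\lambda}'\in L$ as well. By the definition of $\eta$-interiority, for every unit vector $\vec{u}\in L$ the point $\vec{\theta}+\eta\vec{u}$ lies in $P(\M)$. Setting $\vec{v}=\vec{\theta}+\eta\vec{u}$ in the inequality above yields $\eta\langle\vec{\lambda}',\vec{u}\rangle\leq 1$, i.e.\ $\langle\vec{\lambda}',\vec{u}\rangle\leq 1/\eta$ for every unit $\vec{u}\in L$. Plugging in $\vec{u}=\vec{\lambda}'/\|\vec{\lambda}'\|$ (if nonzero) gives $\|\vec{\lambda}'\|\leq 1/\eta$, hence $\|\vec{\lambda}^\star\|\leq m/\eta$.

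The only real obstacle is the very first step — pinning down the universal upper bound on $f_\theta(\vec{\lambda}^\star)$ — since without it one cannot write $\vec{\lambda}'$ as a polar vector with a scale-free right-hand side. Once that bound is in hand, the rest is a one-line polar-of-a-ball argument; the key insight is to pass from trying to use a combinatorial exchange property (which fails, e.g.\ for bipartite perfect matchings) to this purely geometric duality between $P(\M)$ containing a ball of radius $\eta$ around $\vec{\theta}$ and its polar being contained in a ball of radius $1/\eta$.
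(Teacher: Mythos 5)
Your proof is correct and follows essentially the same route as the paper: bound $f_\theta(\vec{\lambda}^\star)\leq m$ via strong duality, rewrite it as $\langle\vec{\lambda}',\vec{v}-\vec{\theta}\rangle\leq 1$ for all $\vec{v}\in P(\M)$, and then use the $\eta$-ball contained in $P(\M)$ to bound $\|\vec{\lambda}'\|$. The only difference is cosmetic: the paper first sets up a polar body $\widetilde{Q}$ and proves it equals an $\ell_2$ ball of radius $\nfrac{1}{\eta}$, whereas you skip that machinery and directly test the inequality against $\vec{u}=\vec{\lambda}'/\|\vec{\lambda}'\|$, which is the extremal direction anyway.
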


\noindent
We specifically note that the proof of this theorem needs that $\lambda^\star$ satisfies $A^=\lambda^\star=0.$ Combinatorially, it is an interesting open problem to see one can get such a bound depending  only  on $\nfrac{1}{\eta}.$
Next we generalize Theorem~\ref{thm:strong} to polytopes where only an approximate counting oracle exists, for example, the perfect matching problem in bipartite graphs. While we  state this theorem  in the context of deterministic counting oracles, it holds in the randomized setting as well.

\begin{theorem}\label{thm:approximate-count}
There exists an algorithm,  that given a maximal set of linearly independent equalities $(A^=,b),$  a generalized  approximate  counting oracle for  $P(\M) \subseteq \mathbb{R}^m$, a $\vec{\theta}$ in $\eta$-interior of $P(\M)$ and an $\epsilon>0,$  returns a $\vec{\lambda}^\circ$ such that
$$f_\theta({\vec{\lambda}}^\circ) \leq f_\theta(\vec{\lambda}^\star) + \eps.$$
Here $\vec{\lambda}^{\star}$ is an optimal solution to the dual of the max-entropy convex program for $(\M,\vec{\theta}).$ Assuming that the generalized approximate counting oracle is polynomial in its input parameters, the running time of the algorithm is polynomial in $m$, $\nfrac{1}{\eta}$, $\nfrac{1}{\epsilon}$  and the number of bits needed to represent $\theta$ and  $(A^=,b).$
\end{theorem}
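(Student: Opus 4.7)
The plan is to follow the proof of Theorem~\ref{thm:strong} essentially verbatim, running the ellipsoid method on the dual convex program from Figure~\ref{conv-dual-entropya}, restricted to the subspace $\{\lambda \in \mathbb{R}^m : A^= \lambda = 0\}$ (which by Lemma~\ref{lem:shift0} contains an optimum without loss of generality) and initialized with the Euclidean ball of radius $R = m/\eta$ centered at the origin (whose sufficiency is guaranteed by Theorem~\ref{thm:radius}). The only new feature, compared to Theorem~\ref{thm:strong}, is that the first-order oracle for $f_\theta$ is now noisy.

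At the center $\lambda_t$ of the current ellipsoid, the true gradient has $e$-th coordinate $(g_t)_e = \theta_e - Z^{\lambda_t}_e/Z^{\lambda_t}$. Invoking the generalized approximate counting oracle with accuracy parameter $\eps'$ yields $\tilde{Z}^{\lambda_t}$ and $\tilde{Z}^{\lambda_t}_e$, each within a $(1\pm\eps')$-factor of the true value, and we form $(\tilde{g}_t)_e = \theta_e - \tilde{Z}^{\lambda_t}_e/\tilde{Z}^{\lambda_t}$. Since $Z^{\lambda_t}_e/Z^{\lambda_t} \in [0,1]$ is a probability, a short calculation gives $\|\tilde{g}_t - g_t\|_\infty = O(\eps')$ and hence $\|\tilde{g}_t - g_t\| = O(\eps'\sqrt{m})$.

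Feeding $\tilde{g}_t$ into the ellipsoid method as a separating direction, the question is whether $\lambda^\star$ could be cut off. By convexity, $\langle g_t,\, \lambda^\star - \lambda_t\rangle \leq f_\theta(\lambda^\star) - f_\theta(\lambda_t) \leq 0$, while the gradient error contributes at most $\|\tilde{g}_t - g_t\|\cdot\|\lambda^\star - \lambda_t\| = O(\eps' m^{3/2}/\eta)$. Thus, using the shallow-cut halfspace $\{\lambda : \langle \tilde{g}_t, \lambda - \lambda_t\rangle \leq \xi\}$ with $\xi := C\eps' m^{3/2}/\eta$ for a suitable absolute constant $C$ guarantees that $\lambda^\star$ stays inside the ellipsoid throughout. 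The standard analysis of the shallow-cut ellipsoid method then yields geometric volume shrinkage controlled by $\xi/R$, so after $T = \poly{m, \log(R/\eps)}$ iterations the returned center $\lambda^\circ$ lies within Euclidean distance $\eps$ of $\lambda^\star$. Combined with a Lipschitz bound on $f_\theta$ on the ball of radius $R$ (obtained from the boundedness of $\nabla f_\theta$, cf.\ Remark~\ref{rem:grad} and Theorem~\ref{thm:2to2}), this yields $f_\theta(\lambda^\circ) - f_\theta(\lambda^\star) \leq \eps$.

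The delicate point is balancing $\eps'$ against $\xi$: $\eps'$ must be small enough that $\lambda^\star$ is never cut off and that the shallow-cut shrinkage rate remains adequate, yet not so small that the counting oracle's running time (polynomial in $1/\eps'$) blows up past the claimed bound. Choosing $\eps' = \poly{\eta\eps/m}$ threads this needle and delivers the polynomial running time in $m$, $1/\eta$, $1/\eps$, and the input bit-length. In the randomized setting, a union bound over the $T$ calls---each boosted to succeed with probability $1 - \alpha/T$---carries the argument through unchanged.
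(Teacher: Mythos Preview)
Your high-level idea—handle the noisy gradient by widening the ellipsoid cut so that $\lambda^\star$ is never discarded—is reasonable, but the argument as written has a genuine gap. The sentence ``after $T=\poly{m,\log(R/\eps)}$ iterations the returned center $\lambda^\circ$ lies within Euclidean distance $\eps$ of $\lambda^\star$'' is not something the (shallow-cut) ellipsoid method delivers. All that the method guarantees is that $\lambda^\star$ remains in every ellipsoid; it does \emph{not} guarantee that the ellipsoids shrink to diameter $\eps$. With an absolute slack $\xi$, volume reduction requires $\xi$ to be small relative to $\sqrt{\tilde g_t^\top A_t\,\tilde g_t}$, a quantity that decreases as the ellipsoid shrinks and that can be arbitrarily small when $\|\tilde g_t\|$ is small (which it is near the optimum). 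Even if the volume did shrink, the ellipsoid can be highly eccentric, so a small volume does not translate into $\|\lambda^\circ-\lambda^\star\|\le\eps$. Consequently, deducing $f_\theta(\lambda^\circ)-f_\theta(\lambda^\star)\le\eps$ via a Lipschitz bound (incidentally, the relevant bound is Lemma~\ref{lem:dual-nearoptimal} on $f_\theta$, not the gradient-Lipschitz bound of Theorem~\ref{thm:2to2}) is unjustified.

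The paper takes a different route that sidesteps these issues: it performs a binary search on a target value $\zeta$ for $f_\theta(\lambda^\star)$ and, for each guess, runs the ellipsoid method with \emph{central} cuts through the approximate gradient. At each center $\lambda_t$ it evaluates $f_\theta(\lambda_t)$ approximately; if this is already $\le\zeta$, it stops. Otherwise one has a quantitative gap $f_\theta(\lambda_t)-f_\theta(\lambda^\star)\ge\Omega(\eps)$, and then convexity gives $\langle g_t,\lambda^\star-\lambda_t\rangle\le -\Omega(\eps)$, which dominates the gradient error $\|\tilde g_t-g_t\|\cdot\|\lambda^\star-\lambda_t\|=O(\eps' R\sqrt m)$ once $\eps'$ is chosen small enough. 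Thus the \emph{central} cut never removes $\lambda^\star$, volume shrinks geometrically regardless of how small the ellipsoid gets, and termination is by the function-value test rather than by a distance criterion. Lemma~\ref{lem:approximate-solution} then shows that the smallest successful guess gives $f_\theta(\lambda^\circ)\le f_\theta(\lambda^\star)+\eps$. If you want to salvage a shallow-cut approach, you would need to track function values at the iterates and return the best one, arguing by contradiction that if none were $\eps$-optimal then central cuts would already preserve $\lambda^\star$—but at that point you have essentially rediscovered the paper's argument without the explicit binary search.
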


\noindent
It can be shown that once we have a solution $\lambda^\circ$ to the dual convex program such that
$$f_\theta({\vec{\lambda}}^\circ) \leq f_\theta(\vec{\lambda}^\star) + \eps$$
as in Theorems \ref{thm:strong} and \ref{thm:approximate-count}, one can show that the marginals obtained from the distribution corresponding to $\lambda^\circ$ is close to that of $\lambda^\star$ (which is $\theta$), i.e.,
$\left\| \theta^{{{\vlambda}^\circ}}- \theta \right\|_\infty \leq O(\sqrt{\epsilon}).$
See Appendix \ref{sec:omit-near-optimal}, and in particular   Corollary  \ref{cor:marginal} for a proof.

\begin{remark}\label{rem:grad}
We can also obtain proofs of Theorems \ref{thm:strong} and \ref{thm:approximate-count} by applying the framework of projected gradient descent. (See Section 3.2.3 in \cite{Nesterov} for   details on the gradient descent method.)
For the gradient descent method to be polynomial time, one would need an upper bound on  $\|\lambda^\star\|$ and $\|\nabla f\|_{2 \rightarrow 2}.$ The first bound is provided by Theorem \ref{thm:radius} and the second bound is proved in Theorem \ref{thm:2to2} in Appendix \ref{sec:grad}. We have chosen the ellipsoid method-based proofs of Theorems \ref{thm:strong} and \ref{thm:approximate-count} since the ellipsoid method is required in the proof of Theorem \ref{thm:reverse}.   
\end{remark}

\noindent
Our final theorem proves the reverse:  If one can compute good approximations to  the max-entropy convex program for $P(\M)$  for a given marginal vector, then one can compute good approximations to the number of vertices in $P(\M).$
First, we need a notion of a {\em max-entropy oracle} for $\M.$

\begin{definition}\label{def:approxoracle} An {\em approximate max-entropy oracle} for $\M,$ given a  $\vtheta$ in the $\eta$-interior of $P(\M),$  a $\zeta >0$, and an $ \epsilon >0$, either
\begin{enumerate}
\item asserts that $\inf_{\vlambda}f_{\vtheta}(\vlambda)\geq \zeta -\epsilon$ or
\item  returns a $\vlambda\in \RR^{m}$ such that $f_{\vtheta}(\vlambda)\leq \zeta+\epsilon$.
\end{enumerate}
The oracle is assumed to be efficient, i.e., it runs in time polynomial in $m$, $\nfrac{1}{\epsilon},$ $\nfrac{1}{\eta}$ and the number of bits needed to represent $\zeta.$
\end{definition}

\noindent
This is consistent with the algorithms given by Theorem \ref{thm:strong} and \ref{thm:approximate-count}.

\begin{theorem}\label{thm:reverse}
There exists an algorithm that, given a maximal set of linearly independent equalities $(A^=,b)$ and a separation oracle and an  approximate optimization oracle for  $\M$ as above,  returns a $\widetilde{Z}$ such that $(1-\epsilon)|\M|\leq \widetilde{Z}\leq (1+\epsilon)|\M|.$ Assuming that the running times of the separation oracle and the approximate max-entropy oracle are polynomial in their respective  input parameters, the  running time of the algorithm is bounded by a polynomial in $m,$ $\nfrac{1}{\epsilon}$ and the number of bits needed to represent $(A^=,b).$
\end{theorem}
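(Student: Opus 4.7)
The plan is to apply the ellipsoid method to the concave maximization $\sup_{\theta \in P(\M)} g(\theta)$, where $g(\theta) \defeq \inf_{\vlambda} f_\theta(\vlambda)$ is the optimal value of the max-entropy convex program with marginals $\theta$. Since the uniform distribution on $\M$ has entropy $\ln|\M|$ and marginals $\theta^\star \defeq \frac{1}{|\M|}\sum_{M\in\M}\one_M$, we have $g(\theta^\star) = \sup_\theta g(\theta) = \ln|\M|$, so approximating the optimum of $g$ is equivalent to approximating $|\M|$. The approximate max-entropy oracle is our only means of evaluating $g$, but it must be called on points in the $\eta$-interior of $P(\M)$.

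First I would promote the supplied strong separation oracle for $P(\M)$ to a routine that decides whether a candidate $\theta$ lies in a fixed $\eta$-interior of $P(\M)$ and, if not, returns a hyperplane separating $\theta$ from that $\eta$-interior (a standard shrink-and-separate construction on the supporting inequalities). Then I binary-search over a target value $\zeta \in [0, m\ln 2]$ for $\ln|\M|$, and for each $\zeta$ I run the ellipsoid method over the convex body $K_\zeta \defeq \{\theta \in P(\M) : g(\theta) \geq \zeta\}$, which is convex since $g$ is concave. At an iterate $\theta_k$:
\begin{enumerate}
\item If $\theta_k$ fails the $\eta$-interiority check, the auxiliary routine produces a cut.
\item Otherwise, query the max-entropy oracle on $(\theta_k, \zeta, \eps')$. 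If it certifies $g(\theta_k) \geq \zeta - \eps'$, declare $\zeta$ approximately feasible and halt the inner loop. If instead it returns $\vlambda$ with $f_{\theta_k}(\vlambda) \leq \zeta + \eps'$, then since $f_\theta(\vlambda) = \langle \vlambda, \theta\rangle + \ln Z^{\vlambda}$ is affine in $\theta$ and $f_\theta(\vlambda) \geq g(\theta) \geq \zeta$ for every $\theta \in K_\zeta$, we obtain the valid cut $\langle \vlambda, \theta - \theta_k\rangle \geq -\eps'$ for all $\theta \in K_\zeta$, which shrinks the ellipsoid.
\end{enumerate}

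The main obstacle, flagged in the overview, is that $\theta^\star$ itself need not be $\eta$-interior, so I need a surrogate $\theta^\bullet$ in the $\eta$-interior for which $g(\theta^\bullet)$ is close to $\ln|\M|$ and which is never cut off by either source of cuts. I take $\tilde\theta$ in the $\frac{1}{\poly{m}}$-interior guaranteed by Lemma \ref{lem:interiority0} and set $\theta^\bullet \defeq (1-\alpha)\theta^\star + \alpha\tilde\theta$. An elementary convex-combination argument shows that a ball of radius $\alpha/\poly{m}$ around $\theta^\bullet$ inside the affine hull $\{A^=\theta=b\}$ is contained in $P(\M)$; concavity of $g$ and nonnegativity of entropy yield $g(\theta^\bullet) \geq (1-\alpha)\ln|\M|$. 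Choosing $\alpha = \Theta(\eps/m)$ keeps $\theta^\bullet$ in a $\frac{1}{\poly{m}}$-interior while forcing $g(\theta^\bullet) \geq \ln|\M| - \eps/4$. With $\eps'$ set polynomially smaller than $\eps$, the cut in step~2 never removes $\theta^\bullet$ whenever $\zeta \leq g(\theta^\bullet)$, because $\langle \vlambda, \theta^\bullet - \theta_k\rangle = f_{\theta^\bullet}(\vlambda) - f_{\theta_k}(\vlambda) \geq g(\theta^\bullet) - f_{\theta_k}(\vlambda) \geq \zeta - (\zeta + \eps') = -\eps'$; likewise the $\eta$-interiority cut keeps $\theta^\bullet$ safe by construction. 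Moreover, using Theorem \ref{thm:radius} to control $\|\vlambda\|$ and the Lipschitz constant of $g$ near $\theta^\bullet$, a ball of radius polynomial in $\eps$, $\eta$, and $1/m$ around $\theta^\bullet$ lies entirely in $K_\zeta$ for $\zeta \leq \ln|\M|-\eps/2$, so volume considerations force the ellipsoid to terminate via branch~1 in polynomially many iterations.

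Thus, whenever $\zeta \leq \ln|\M| - \eps/2$ the inner ellipsoid terminates by declaring (approximate) feasibility, and whenever $\zeta > \ln|\M| + \eps/2$ it terminates by exhausting its volume. The outer binary search pins $\zeta^\circ$ within $\eps/2$ of $\ln|\M|$; setting $\widetilde Z \defeq e^{\zeta^\circ}$ and rescaling $\eps$ gives the required $(1\pm\eps)|\M|$ approximation. The total running time is polynomial: there are $O(\log(m/\eps))$ outer binary-search steps, each inner ellipsoid runs for polynomially many iterations, each iteration makes $O(1)$ calls to the separation and max-entropy oracles, and Theorem \ref{thm:radius} bounds the bit-length of every $\vlambda$ returned by the oracle.
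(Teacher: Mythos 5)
Your proposal reproduces the paper's proof essentially step for step: the reformulation as $\sup_{\theta\in P(\M)} g(\theta)$ with $g(\theta^\star)=\ln|\M|$; the surrogate point $\theta^\bullet=(1-\alpha)\theta^\star+\alpha\tilde\theta$ built from the $\frac{1}{\poly{m}}$-interior point of Lemma~\ref{lem:interiority0}; the $\eta$-interiority separation oracle derived from the strong separation oracle; ellipsoid with binary search on $\zeta$; a survival argument for $\theta^\bullet$; and a Lipschitz-type bound on $g$ to wrap up the termination. This is the same decomposition and the same key lemmas.

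One technical point needs tightening. The cut you introduce in step~2, $\langle\vlambda,\theta-\theta_k\rangle\geq -\eps'$, passes \emph{offset} from the center $\theta_k$, i.e.\ it is a shallow cut: the retained region exceeds half the ellipsoid, and Theorem~\ref{thm:john}'s L\"owner--John volume-reduction guarantee does not apply to it directly. As the ellipsoid shrinks, the relative offset $\eps'$ grows and the volume reduction can degrade or vanish, so "which shrinks the ellipsoid" is unjustified as written. The paper's fix is to use the \emph{central} cut $\langle\vlambda_t,\theta-\theta_t\rangle\geq 0$, which does give the standard volume decrease, and to observe that $\theta^\bullet$ still survives it provided $\zeta+\eps'\leq g(\theta^\bullet)$, since $\langle\vlambda_t,\theta^\bullet-\theta_t\rangle = f_{\theta^\bullet}(\vlambda_t)-f_{\theta_t}(\vlambda_t)\geq g(\theta^\bullet)-(\zeta+\eps')\geq 0$. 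The binary search is then steered so that the working guess $\zeta$ stays $\Theta(\eps)$ below $g(\theta^\bullet)$, which furnishes the needed slack (this is precisely the $\zeta'\in[\ln|\M|-\nfrac{4\eps}{16},\ln|\M|-\nfrac{3\eps}{16}]$ window in the paper). With that correction the argument is sound; note also that the appeal to Theorem~\ref{thm:radius} in the termination step is unnecessary here, since the direction of the cut is all the ellipsoid method needs and the termination bound follows from the Lipschitz estimate on $g$ alone (Lemma~\ref{lem:marginal_entropy}).
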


\noindent
Analogously, one can easily formulate and prove a randomized version of Theorem \ref{thm:reverse}, we omit the details.
As an important corollary of this theorem, if one is able to efficiently find approximate max-entropy distributions for the perfect matching polytope for general graphs, then one can approximately count the number of perfect matchings they contain. Both problems have long been open and this result, in particular, relates their hardness.

\begin{remark}\label{rem:gencount}
One may ask if Theorem \ref{thm:reverse} can be strengthened to obtain  generalized approximate counting oracles from max-entropy oracles. The question is natural since Theorems \ref{thm:strong} and \ref{thm:approximate-count} assume access to generalized counting oracles. The answer is yes and is provided in Theorem \ref{thm:reverse2} in Appendix \ref{sec:general}. It turns out that one needs access to a generalized max-entropy oracle, an oracle that can compute the distribution that minimizes the KL-divergence with respect to a fixed product distribution and a given set of marginals. These latter programs are shown, in Appendix \ref{sec:general}, to be no more general than max-entropy programs. In fact, analogs of Theorems \ref{thm:strong} and \ref{thm:approximate-count} can be proved for min-KL-divergence programs rather than max-entropy programs, see Theorem \ref{thm:approximate-count1}.  
\end{remark}

\subsection{The Ellipsoid Algorithm} \label{sec:ellipsoid}
In this section we review the basics of the ellipsoid algorithm.
The ellipsoid algorithm is used in the proofs  of our equivalence between optimization and counting: Both in the proof of Theorems \ref{thm:strong} and \ref{thm:approximate-count} and in the proof of Theorem \ref{thm:reverse}.
Consider the following optimization problem where  $g(\cdot)$ is convex and $h_i(\cdot)$ are affine functions.
\begin{eqnarray*}
 \inf &  g({\vec{\lambda}})  \\ \nonumber
 \textrm{s.t.}&\\
\forall \; {1\leq i\leq k} & h_i(\vec{\lambda})=\vec{0}  \\
\nonumber & \vec{\lambda}\in \mathbb{R}^m
\end{eqnarray*}
We assume that $g$ is differentiable everywhere and that its gradient, denoted by $\nabla g,$ is defined everywhere.
In our application, for a polytope $P(\M)$ and a $\theta$ in the $\eta$-interior of $P(\M),$  $g=f_\theta,$   the objective function in the dual program of Figure \ref{conv-dual-entropya}. The $h_i(\cdot)$s are the constraints $A^=\vec{\lambda}=\vec{0},$ where $(A^=,\vec{c})$ is the maximal set of linearly independent equalities satisfied by the vertices of $\M.$ Thus, as noted in Lemma \ref{lem:shift0},  we can restrict our search for the optimal solution to the set  $K$ which is defined to be
$$K\defeq \{\vec{\lambda}\in \mathbb{R}^m:  A^=\vec{\lambda}=\vec{0}\}.$$
Note that  $\vec{0} \in K.$
The ellipsoid algorithm can be used to solve such a convex program under fairly general conditions
and we first state a version of it needed in the proof of Theorem \ref{thm:strong}. A crucial requirement is a
{\em  strong first-order oracle} for $g$ which is a function such that given a $\vec{\lambda},$ outputs $g(\vec{\lambda})$ and $\nabla g (\vec{\lambda}).$ Since we are only interested in  $\vec{\lambda} \in K,$  and we are given the equalities describing $K$ explicitly, we assume that we can project  $\nabla g (\vec{\lambda})$ to $K.$ By abuse of notation, we denote the latter also by $\nabla g(\vec{\lambda}).$

The following theorem claims that if one is given access to a strong first-order oracle for $g$, one can use the ellipsoid algorithm to obtain an approximately optimal solution to the convex program mentioned above. This statement is easily derivable from \cite{BentalN12} (Theorem 8.2.1).

\begin{theorem}\label{thm:strong-ellipsoid}
Given any $\beta>0$ and  $R>0,$ there is an algorithm which, given a strong first-order oracle for $g,$    returns a point ${\vec{\lambda}}'\in \mathbb{R}^m$ such that $$g({\vec{\lambda}}')
\leq  \inf_{\vec{\lambda} \in K, \; \|\vec{\lambda}\|_\infty \leq R} g(\vec{\lambda})+\beta \left(\sup_{\vec{\lambda} \in K, \; \|\vec{\lambda}\|_\infty \leq R} g(\vec{\lambda})-\inf_{\vec{\lambda} \in K, \; \|\vec{\lambda}\|_\infty \leq R} g(\vec{\lambda}) \right).$$
The number of calls to the strong first-order oracle for $g$  are bounded by a polynomial in $m$, $\log R$ and  $\log{\nfrac{1}{\beta}}.$
\end{theorem}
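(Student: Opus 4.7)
The plan is to apply the standard ellipsoid method to the convex program $\inf_{\vec{\lambda} \in K, \|\vec{\lambda}\|_\infty \leq R} g(\vec{\lambda})$, working inside the affine subspace $K = \{\vec{\lambda} : A^= \vec{\lambda} = \vec{0}\}$. Let $r = \mathrm{rank}(A^=)$, so $K$ has dimension $m-r$. Throughout, ``gradient'' means the projection of $\nabla g$ onto $K$, so all computations stay in $K$. Initialize with the ellipsoid $E_0 \subseteq K$ equal to the Euclidean ball of radius $R\sqrt{m}$ around $\vec{0}$; this contains the full feasible region $F \defeq \{\vec{\lambda} \in K : \|\vec{\lambda}\|_\infty \leq R\}$, and in particular the minimizer $\vec{\lambda}^{\star}$.

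At iteration $t$, with current ellipsoid $E_t$ centered at $\vec{\lambda}_t \in K$, proceed as follows. If $\|\vec{\lambda}_t\|_\infty > R$, pick a coordinate $e$ witnessing the violation and use the corresponding feasibility cut through $\vec{\lambda}_t$ (this cut is consistent with containing $F$, and hence $\vec{\lambda}^{\star}$). Otherwise, call the strong first-order oracle at $\vec{\lambda}_t$; by convexity, every $\vec{\lambda} \in K$ with $g(\vec{\lambda}) < g(\vec{\lambda}_t)$ lies in the half-space $\{\vec{\lambda} : \iprod{\nabla g(\vec{\lambda}_t)}{\vec{\lambda} - \vec{\lambda}_t} < 0\}$, which we use as the separating cut. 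In either case, let $E_{t+1}$ be the minimum-volume ellipsoid in $K$ containing the appropriate half of $E_t$. The standard analysis of ellipsoid shrinkage gives $\mathrm{vol}(E_{t+1}) \leq e^{-1/(2(m-r))} \mathrm{vol}(E_t)$. At the end output $\vec{\lambda}' = \arg\min_t \{g(\vec{\lambda}_t) : \vec{\lambda}_t \in F\}$.

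For the error analysis, let $V \defeq \sup_{F} g - \inf_{F} g$, and consider the $\beta$-scaled feasible set around the minimizer:
\begin{equation*}
F_\beta \defeq \vec{\lambda}^{\star} + \beta\,(F - \vec{\lambda}^{\star}) \subseteq F.
\end{equation*}
By convexity, for every $\vec{\lambda} \in F_\beta$ we have $g(\vec{\lambda}) \leq g(\vec{\lambda}^{\star}) + \beta V$, so it suffices to show the algorithm queries \emph{some} point inside $F_\beta$. Now $\mathrm{vol}(F_\beta) = \beta^{m-r} \mathrm{vol}(F)$, and $\mathrm{vol}(F)/\mathrm{vol}(E_0)$ is bounded below by a quantity depending only on $m$. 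Run the algorithm for
\begin{equation*}
T = \Theta\bigl((m-r)^2\,(\log(1/\beta) + \log R + \log m)\bigr)
\end{equation*}
iterations; then $\mathrm{vol}(E_T) < \mathrm{vol}(F_\beta)$. The cuts ensure that $F_\beta \setminus \{\vec{\lambda}_0,\dots,\vec{\lambda}_{T-1}\} \subseteq E_T$ (every cut we ever make preserves the optimum's side, and the objective cut at a feasible $\vec{\lambda}_t \notin F_\beta$ removes no point of $F_\beta$ that has smaller $g$-value than $\vec{\lambda}_t$---but every point of $F_\beta$ has $g$-value $\leq g(\vec{\lambda}^{\star}) + \beta V$, which is $\leq g(\vec{\lambda}_t)$ unless the desired guarantee is already achieved). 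A volume comparison forces at least one iterate $\vec{\lambda}_t$ to lie in $F_\beta$, giving the bound.

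The main technical point---and the only place where the statement is not a literal transcription of Theorem 8.2.1 of \cite{BentalN12}---is the bookkeeping that ties the volume reduction to a relative (rather than absolute) suboptimality bound in terms of the range $V$. This is handled precisely by the shrinking-box argument above: the relative error falls out of the fact that scaling $F$ by $\beta$ about $\vec{\lambda}^{\star}$ costs a factor $\beta^{m-r}$ in volume and at most $\beta V$ in objective value. The total number of oracle calls is $T = \mathrm{poly}(m, \log R, \log(1/\beta))$ as claimed.
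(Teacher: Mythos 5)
Your argument is correct and is exactly the standard ellipsoid-with-shallow-cuts proof; the paper itself supplies no proof here, only a citation to Theorem~8.2.1 of Ben-Tal and Nemirovski, so you have essentially reconstructed the argument behind the cited result. The one cosmetic point is that the $\log R$ term in your iteration count is superfluous: both $\mathrm{vol}(F_\beta)$ and $\mathrm{vol}(E_0)$ scale as $R^{m-r}$, so the ratio governing the stopping time is independent of $R$ (the $\log R$ enters only through bit-complexity of the arithmetic, which is a separate accounting). Your scaled-body bookkeeping, $F_\beta = \vec{\lambda}^\star + \beta(F-\vec{\lambda}^\star)$ with $g \le g(\vec{\lambda}^\star) + \beta V$ on $F_\beta$ by convexity, together with the observation that objective cuts at feasible centers $\vec{\lambda}_t$ with $g(\vec{\lambda}_t) > g(\vec{\lambda}^\star)+\beta V$ cannot remove any point of $F_\beta$, correctly yields the relative-error form of the guarantee.
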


\noindent
While we do not explicitly describe the ellipsoid algorithm here, we need the following basic properties about minimum volume enclosing ellipsoids which  forms the basis of the ellipsoid algorithm. A set $E\subseteq \RR^{m}$ is an ellipsoid if there exists a vector $\vec{a}\in \RR^{m}$ and a positive definite $m \times m$-matrix $A$ such that
$E=E(A,a)\defeq \{\vec{x}\in \RR^{m}: (\vec{x}-\vec{a})^{\top}A^{-1}(\vec{x}-\vec{a})\leq 1\}.$ We also denote $\textrm{Vol}(E)$ to be the volume enclosed by the ellipsoid $E$. The following theorem follows from the L\"owner-John Ellipsoid. We refer the reader  to \cite{GrotschelLS88} for more details.

\begin{theorem}\label{thm:john}
Given an ellipsoid $E(A,\vec{a})$ and a half-space $\{\vec{x}:\iprod{\vec{c}}{\vec{x}}\leq \iprod{\vec{c}}{\vec{a}}\}$ passing through $\vec{a}$ there exists an ellipsoid $E'\supseteq E(A,\vec{a})\cap \{\vec{x}:\iprod{\vec{c}}{\vec{x}}\leq \iprod{\vec{c}}{\vec{a}}\}$ such that $\frac{\mathrm{Vol}(E')}{\mathrm{Vol}(E)}\leq e^{-\frac{1}{2m}}$.
\end{theorem}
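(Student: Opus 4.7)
\medskip

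\noindent
\textbf{Proof proposal for Theorem \ref{thm:john}.} The plan is the standard reduction to a canonical half-ball together with an explicit candidate ellipsoid whose volume we estimate directly. First, I would exploit affine invariance. Both volume ratios and the relations ``$\vec{x}\in E(A,\vec{a})$'' and ``$\iprod{\vec{c}}{\vec{x}}\leq \iprod{\vec{c}}{\vec{a}}$'' are preserved (up to an overall constant factor in volume that cancels) by invertible affine maps of $\RR^m$. Pick an affine map $T$ sending $E(A,\vec{a})$ to the closed unit ball $B=\{\vec{y}:\|\vec{y}\|\leq 1\}$ centered at the origin, and, after composing with an orthogonal rotation, sending the hyperplane $\{\iprod{\vec{c}}{\vec{x}}=\iprod{\vec{c}}{\vec{a}}\}$ to the coordinate hyperplane $\{y_1=0\}$. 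So it suffices to find an ellipsoid $E^\star$ containing the half-ball $H\defeq B\cap\{\vec{y}:y_1\leq 0\}$ with $\mathrm{Vol}(E^\star)/\mathrm{Vol}(B)\leq e^{-1/(2m)}$.

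Next, I would guess the canonical form of the minimum-volume enclosing ellipsoid from symmetry: the group of orthogonal transformations fixing $y_1$ acts on $H$, so $E^\star$ must be invariant under that group, which forces it to be an axis-aligned spheroid with center on the $y_1$-axis. Taking an ansatz
\[
E^\star = \left\{\vec{y}\;:\;\frac{(y_1+\delta)^2}{\alpha^2} + \frac{y_2^2+\cdots+y_m^2}{\beta^2}\leq 1\right\},
\]
I would impose that $E^\star$ contains the two ``critical'' families of boundary points of $H$: the single point $\vec{y}=(-1,0,\ldots,0)$ (the tip of the half-ball farthest from the cutting plane) and the entire equatorial $(m-1)$-sphere $\{(0,y_2,\ldots,y_m):y_2^2+\cdots+y_m^2=1\}$. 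These give two equalities in three unknowns $(\delta,\alpha,\beta)$; the third condition is minimization of volume $\alpha\beta^{m-1}$, which, after a short Lagrange-multiplier calculation, yields
\[
\delta=\tfrac{1}{m+1},\qquad \alpha=\tfrac{m}{m+1},\qquad \beta^2=\tfrac{m^2}{m^2-1}.
\]
The routine (but necessary) verification is then to check that with these parameters every $\vec{y}\in H$, not just the two boundary families, actually lies in $E^\star$; this reduces to a one-variable inequality in $y_1\in[-1,0]$ once one uses $\|\vec{y}\|^2\leq 1$.

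The main computation, and the only place the bound $e^{-1/(2m)}$ really comes from, is then the volume ratio:
\[
\frac{\mathrm{Vol}(E^\star)}{\mathrm{Vol}(B)}=\alpha\,\beta^{m-1}=\frac{m}{m+1}\left(\frac{m^2}{m^2-1}\right)^{(m-1)/2}.
\]
I would estimate this using $\ln(1-x)\leq -x$: writing $\alpha=1-\tfrac{1}{m+1}$ and $\beta^2=1+\tfrac{1}{m^2-1}$, take logarithms to get
\[
\ln\frac{\mathrm{Vol}(E^\star)}{\mathrm{Vol}(B)} = \ln\!\left(1-\tfrac{1}{m+1}\right) + \tfrac{m-1}{2}\ln\!\left(1+\tfrac{1}{m^2-1}\right)\leq -\tfrac{1}{m+1}+\tfrac{m-1}{2}\cdot\tfrac{1}{m^2-1}= -\tfrac{1}{2(m+1)}\leq -\tfrac{1}{2m},
\]
which gives the desired bound (absorbing the small slack $\tfrac{1}{2m}-\tfrac{1}{2(m+1)}$ into the stated constant). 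The main obstacle here is not the bound itself but choosing and justifying the ansatz for $E^\star$; once the symmetry argument fixes the form, everything else is a direct computation. Setting $E' = T^{-1}(E^\star)$ then produces the desired ellipsoid in the original coordinates.
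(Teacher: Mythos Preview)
The paper does not prove Theorem~\ref{thm:john}; it simply cites Gr\"otschel--Lov\'asz--Schrijver and uses the statement as a black box. Your approach---reduce by affine invariance to a half of the unit ball, use the rotational symmetry to pin down the form of the L\"owner--John ellipsoid, solve for $(\delta,\alpha,\beta)$, and estimate the volume ratio---is exactly the standard argument one finds in that reference, so in spirit you are aligned with what the paper intends.

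There is, however, a genuine slip in your final line. You arrive at
\[
\ln\frac{\mathrm{Vol}(E^\star)}{\mathrm{Vol}(B)}\ \le\ -\frac{1}{2(m+1)}
\]
and then write ``$-\tfrac{1}{2(m+1)}\le -\tfrac{1}{2m}$.'' That inequality is false: since $m+1>m$ we have $-\tfrac{1}{2(m+1)} > -\tfrac{1}{2m}$, so there is no slack to ``absorb''; what you have actually proved is the weaker bound $e^{-1/(2(m+1))}$. To reach $e^{-1/(2m)}$ as stated you must keep one more term in the logarithm, e.g.\ use $\ln(1-x)\le -x-\tfrac{x^2}{2}$ on $\alpha=1-\tfrac{1}{m+1}$. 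That extra $-\tfrac{1}{2(m+1)^2}$, together with the same upper bound $\ln(1+x)\le x$ on the $\beta$ term, is just enough: a short calculation (or direct numerical check) shows
\[
\ln\!\left(1-\tfrac{1}{m+1}\right)+\tfrac{m-1}{2}\ln\!\left(1+\tfrac{1}{m^2-1}\right)\ \le\ -\tfrac{1}{2m}
\]
for all $m\ge 2$. For the paper's application (bounding the number of ellipsoid iterations polynomially) the distinction between $e^{-1/(2m)}$ and $e^{-1/(2(m+1))}$ is of course immaterial, but as written your final step does not establish the constant claimed in the theorem.
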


\noindent
In our applications of this theorem, we  in fact need the ellipsoid to be in an affine space of dimension possibly lower than $m.$ The definitions and the theorem continue to hold under such a setting.

\section{Examples of Combinatorial Polytopes}\label{sec:examples}
If one wishes, one can keep  the following  combinatorial polytopes in mind while trying to understand and interpret the results of this paper.

\vspace{2mm}
\noindent
{\bf The Spanning Tree Polytope.} Given a graph $G=(V,E)$, let $$\M\defeq \left\{\one_T\in \RR^{|E|}: T\subseteq E \textrm{ is a spanning tree of }G \right\}.$$ It follows from a result of Edmonds~\cite{Edmonds71} that $$P(\M)=\left\{\vec{x}\in \RR^{|E|}_{\geq 0}: x(E(V))=|V|-1, \ x(E(S))\leq |S|-1\; \; \forall S\subseteq V\right\}$$ where, for $S \subseteq V,$ $E(S)\defeq \{e=\{u,v\}\in E:\{u,v\}\cap S=\{u,v\}\}$ and, for a subset of edges $H \subseteq E,$  $x(H) \defeq \sum_{e \in H} x_e.$
Edmond \cite{edmonds1970submodular} also shows the existence of a separation oracle for this polytope. A generalized exact counting oracle is known for this spanning tree polytope  via  Kirchoff's matrix-tree theorem, see \cite{GodsilR01}.

\vspace{2mm}
\noindent
{\bf The Perfect Matching Polytope for Bipartite Graphs.} Given a bipartite graph $G=(V,E)$, let $$\M \defeq \left\{\one_M\in \RR^{|E|}:M \textrm{ is a perfect matching in } G \right\}.$$ It follows from a theorem of Birkhoff~\cite{Birkhoff46} that, when $G$ is bipartite, $$P(\M)=\left\{\vec{x}\in \RR^{|E|}_{\geq 0}: x(\delta(v))=1\; \; \forall v\in V\right\}$$
where, for $v \in V,$ $\delta(v) \defeq   \{e=\{u,v\}\in E\}.$
Here, it can be shown that all the facets, i.e., the defining inequalities, are one of the set of $2m$ inequalities $0 \leq x_e \leq 1$ for all $e \in [m].$
The exact counting problem is \#P-hard and while a (randomized)  generalized approximate counting oracle follows from a result of Jerrum, Sinclair and Vigoda ~\cite{JerrumSV04} for computing permanents.

\vspace{2mm}
\noindent
{\bf The Cycle Cover Polytope for Directed Graphs.}
Given a directed graph $G=(V,A),$ let
$$\M \defeq \left\{\one_M\in \RR^{|A|}:M \textrm{ is a cycle cover in } G \right\}.$$
A cycle cover in $G$ is a collection of vertex disjoint directed cycles that cover all the vertices of $G.$
The corresponding cycle cover polytope is denoted by $P(\M).$
This polytope is easily seen to be a  special case of the perfect matching polytope for bipartite graphs as follows. For $G=(V,A),$ construct a bipartite graph $H=(V_L,V_R,E)$ where $V_L=V_R=V.$ For each vertex $v \in V$ we have $v_L \in V_L$ and $v_R \in V_R.$ There is an edge between $u_L \in V_L$ and  $v_R \in V_R$ in $H$ if and only if $(u,v) \in A.$ Thus, there is a one-to-one correspondence between cycle covers in $G$ and perfect matchings in $H.$ Hence, the  \cite{JerrumSV04} algorithm gives a generalized approximate counting oracle  in this case as well.

\vspace{2mm}
\noindent
{\bf The Perfect Matching Polytope for General Graphs.} Given a graph $G=(V,E)$, let $$\M\defeq \left\{\one_M\in \RR^{|E|}:M \textrm{ is  a perfect matching in } G\right\}.$$ A celebrated result of Edmonds~\cite{Edmonds65b} states that
\begin{eqnarray*}
P(\M)&=&\left\{\vec{x}\in \RR^{|E|}_{\geq 0}: x(\delta(v)) =  1\; \forall v\in V,  \;
 x(E(S))\leq \frac{|S|-1}{2}\; \forall S \subseteq V, |S| \ \  \textrm {odd }\right\}.
\end{eqnarray*}
The separation oracle for this polytope is non-trivial and follows from the characterization result of Edmonds. A direct separation oracle was also given by Padberg and Rao \cite{padberg1982odd}.
 Coming up with a counting oracle for this polytope, even with uniform weights which counts the number of perfect matchings in a general graph, is a long-standing open problem.

\section{New Algorithmic Approaches for the Traveling Salesman Problem}\label{sec:tsp}
Max-entropy distributions over spanning trees have been successfully applied to obtain improved algorithms for the symmetric \cite{GharanSS11} as well as the asymmetric traveling salesman problem \cite{AsadpourGMGS10}. We  outline here a different algorithmic approach, using max-entropy distributions over cycle covers, which becomes computationally feasible as a consequence of our results.
Let us consider the asymmetric traveling salesman problem (ATSP). We are given a complete directed graph $G=(V,E)$ and cost function $c:E\rightarrow \RR_{\geq 0}$ which satisfies the directed triangle inequality. The goal is find a Hamiltonian cycle of smallest cost. First, we formulate the following subtour elimination linear program in Figure \ref{fig:HK}.
 
\begin{figure}[h]
\begin{center}
\begin{eqnarray*}
\min & \sum_{e\in E}c_ex_e  \nonumber\\
\st \nonumber \\
\forall v\in V & x(\delta^{+}(v))=x(\delta^{-}(v))=1  \label{consvertex}\\
\forall S\subseteq V & x(\delta^{+}(S))\geq 1 \label{consset}\\
\forall e\in E & 0\leq x_e \leq 1 \label{consset2}
\end{eqnarray*}
\end{center}
\caption{Subtour Elimination LP for $G=(V,E)$ and $c$} \label{subtour}\label{fig:HK}
\end{figure}
 
\noindent
Here, for a vertex $v,$ $\delta^+(v)$ is the set of directed edges going out of it and $\delta^-(v)$ is the set of directed edges coming in to $v.$  Let $x^{\star}$ denote the optimal solution to this linear program. The authors of \cite{AsadpourGMGS10} make the observation that $\theta_{uv} \defeq \frac{n-1}{n}(x^{\star}_{uv}+x^{\star}_{vu})$ defined on the undirected edges is a point in the interior of the spanning tree polytope on $G$. The algorithm then samples a spanning tree $T$ from the max-entropy distribution with marginals as given by $\theta$ and crucially relies on properties of such a $T$ to obtain an $O\left(\frac{\log n}{\log \log n}\right)$-approximation algorithm for the ATSP problem.
 
Interestingly, there is another integral polytope in which $x^{\star}$ is contained. Consider the convex hull $P$ of all cycle covers of $G$, see Section \ref{sec:examples}. Then,
$$P=\left\{x\in \RR^{|E|}_{\geq 0}: x(\delta^{+}(v))=x(\delta^-(v))=1\right\}.$$
It is easy to see that $x^{\star} \in P$. Similar to the cycle cover algorithm of Frieze et al~\cite{frieze1982worst}, the following is a natural algorithm for the ATSP problem.
 
\vspace{2mm}
\noindent
\textbf{Randomized Cycle Cover Algorithm}
\begin{enumerate}
\item Initialize $H\gets \emptyset$.
\item {\bf While} $G$ is not a single vertex
\begin{itemize}
\item Solve the subtour elimination LP for $G$  to obtain the solution $x^{\star}$.
\item Sample a cycle cover $\C$ from the max-entropy distribution with marginals $x^{\star}$.
\item Include in $H$ all edges in $\C$, i.e., $H\gets H\cup \left( \cup_{C\in \C} C \right).$
\item Select one representative vertex $v_C$ in each cycle $C\in \C$ and delete all the other vertices.
\end{itemize}
\item {\bf Return} $H$.
\end{enumerate}
 
\noindent
 Before analyzing the performance of this algorithm, a basic question is whether this algorithm can be implemented in polynomial time. As an application of Theorem~\ref{thm:approximate-count} to the cycle cover polytope for directed graphs, it follows that one can sample a cycle cover from the max-entropy distribution in polynomial time and, thus, the question is answered affirmatively. The generalized (randomized) approximate counting oracle for cycle covers in a graph follows from the work of \cite{JerrumSV04}. The technical condition of interiority of $x^{\star}$ can be satisfied with a slight loss in optimality of the objective function. The analysis of worst case performance of this algorithm is left open, but to the best of our knowledge, there is no example ruling out that the Randomized Cycle Cover Algorithm is an $O(1)$-approximation. Similarly, the application of Theorem \ref{thm:approximate-count} to the perfect matching polytope in bipartite graphs makes the permanent-based approach suggested in \cite{Vishnoi12} for the (symmetric) TSP computationally feasible.

\section{Bounding Box}\label{sec:bounding}
In this section, we prove Theorem \ref{thm:radius} and show that there is a bounding box of small radius containing the optimal solution $\vlambda^{\star}$. We begin with the following lemma.

\begin{lemma}\label{lem:theta-lambda}
Let $\vtheta$ be a point in the $\eta$-interior of $P(\M) \subseteq \RR^m$ and let $\vlambda^\star$ be the optimal solution to the dual convex program.
Then for any $\vec{x}\in P(\M)$
$$ \langle \vlambda^\star,\vtheta - \vec{x}\rangle \leq m.$$
\end{lemma}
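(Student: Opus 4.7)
The plan is to show this by bounding the dual objective value $f_\theta(\vlambda^\star)$ from above by $m$, and then using the form of $f_\theta$ together with convexity of $P(\M)$ to extract the inequality.

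First I would bound $f_\theta(\vlambda^\star)$ from above. Since $\vlambda^\star$ is the infimum of the dual, we have $f_\theta(\vlambda^\star) \leq f_\theta(\vec{0})$. Plugging in $\vec{0}$ to the expression in Figure~\ref{conv-dual-entropya} gives $f_\theta(\vec{0}) = \ln |\M|$, and since $\M \subseteq \{0,1\}^m$ we get $|\M| \leq 2^m$, so $f_\theta(\vlambda^\star) \leq m \ln 2 \leq m$. (Alternatively, by strong duality, $f_\theta(\vlambda^\star) = H(p^\star) \leq \ln|\M| \leq m$; either route gives the required bound.)

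Next I would lower-bound the log-partition term. For any fixed $M \in \M$,
$$\ln \sum_{N \in \M} e^{-\lambda^\star(N)} \;\geq\; \ln e^{-\lambda^\star(M)} \;=\; -\langle \vlambda^\star, \one_M\rangle.$$
Combining with the definition of $f_\theta$ and the bound from the previous step,
$$\langle \vlambda^\star, \vtheta\rangle - \langle \vlambda^\star, \one_M\rangle \;\leq\; \langle \vlambda^\star, \vtheta\rangle + \ln \sum_{N \in \M} e^{-\lambda^\star(N)} \;=\; f_\theta(\vlambda^\star) \;\leq\; m.$$
That is, $\langle \vlambda^\star, \vtheta - \one_M\rangle \leq m$ for every vertex $\one_M$ of $P(\M)$.

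Finally, any $\vec{x} \in P(\M)$ can be written as $\vec{x} = \sum_{M \in \M} p_M \one_M$ for nonnegative weights summing to one. Since the inequality is linear in the second argument, averaging the vertex bounds with these weights gives $\langle \vlambda^\star, \vtheta - \vec{x}\rangle \leq m$, completing the proof. There is no real obstacle here: the only substantive content is picking the right comparison point $\vec{0}$ (or invoking strong duality) to uniformly bound $f_\theta(\vlambda^\star)$, after which the rest is a one-line manipulation using the convexity of $P(\M)$. Note that this argument does not yet use the $\eta$-interiority of $\vtheta$ — that hypothesis will presumably be brought in later (as described in the technical overview) to convert this ``polar-type'' inequality into the actual norm bound $\|\vlambda^\star\| \leq m/\eta$ of Theorem~\ref{thm:radius}.
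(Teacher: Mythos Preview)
Your proof is correct and follows essentially the same route as the paper's: bound $f_\theta(\vlambda^\star)\le \ln|\M|\le m$, drop all but one term from the log-sum-exp to get $\langle\vlambda^\star,\vtheta-\one_M\rangle\le m$ for each vertex, then take a convex combination to pass to arbitrary $\vec{x}\in P(\M)$. The only cosmetic difference is that the paper invokes strong duality for the bound $f_\theta(\vlambda^\star)\le\ln|\M|$, whereas your primary route (plugging in $\vlambda=\vec{0}$) obtains it directly without appealing to duality; you also note the strong-duality alternative, so the two write-ups are interchangeable.
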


\begin{proof}
First, note that the supremum of the primal convex program over all $\vtheta$ is $\ln |\M| \leq m.$
Hence, from  strong duality it follows that
$f(\vlambda^\star) \leq m.$ This implies that
$$ f(\vlambda^\star) = \langle \vlambda^\star, \vtheta \rangle + \ln \sum_{M \in \M} e^{-\lambda^\star(M)} = \ln \sum_{M \in \M} e^{ \langle \vlambda^\star, \vtheta \rangle-\lambda^\star(M)} \leq m.$$
Hence, for every $M \in \M,$
\begin{equation}\label{eq4-1}
 \langle \vlambda^\star, \vtheta \rangle-\lambda^\star(M) \leq m.
 \end{equation}
Since $\vec{x}\in P(\M)$, we have $\vec{x}=\sum_{M\in \M} r_{M} \one_M$ where $\sum_{M\in \M}r_M=1$ and $r_{M}\geq 0$ for each $M\in \M$. Multiplying \eqref{eq4-1} by $r_M$ and summing over $M$ we get
$$\sum_{M\in \M}r_M \left( \langle \vlambda^\star, \vtheta \rangle-\lambda^\star(M) \right) \leq \sum_{M\in \M}r_M  m.$$
This implies that  $$\langle \vlambda^\star, \vtheta \rangle- \sum_{M\in \M}r_M\lambda^\star(M) \leq m.$$
As a consequence, we obtain that $\langle \vlambda^\star, \vtheta \rangle- \langle  \vlambda^{\star} ,\vec{x}\rangle \leq m,$ completing the proof of the lemma.
\end{proof}

\paragraph{\bf Proof of Theorem~\ref{thm:radius}.}
Recall that $A^={\vec{x}}=\vec{c}$ denotes the maximal set of independent equalities satisfied by $P(\M)$. We now define the following objects.
Let $$B\defeq \{\vec{x}\in \RR^m:A^={\vec{x}}=\vec{c},\; \|\vec{x}-\vtheta\|\leq \eta\}$$ be the ball centered around $\vtheta$ restricted to the affine space $A^=\vec{x}=\vec{c}$ of radius $\eta$. Since $\vtheta$ is in the $\eta$-interior of $P(\M),$ we have that $B\subseteq P(\M)$. Let $$Q\defeq \left\{\vec{y}\in \RR^m : A^=\vec{y}=\vec{c},\; \; \|\vec{y}-\vtheta\|\leq \nfrac{1}{\eta}\right\}$$ be the ball centered around $\vtheta$ of radius $\frac{1}{\eta}$ in the same affine space and let $$\widetilde{Q}\defeq \{\vec{z}\in \RR^m : A^=\vec{z}=\vec{c}, \; \; \iprod{\vec{z}-\vtheta}{\vec{x}-\vtheta}\leq 1\;\; \forall \vec{x}\in B\}.$$

\begin{lemma}
$Q=\widetilde{Q}$.
\end{lemma}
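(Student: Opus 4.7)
The plan is to prove the two inclusions $Q \subseteq \widetilde{Q}$ and $\widetilde{Q} \subseteq Q$ separately. The underlying geometric picture is that $B$ is an $\ell_2$-ball of radius $\eta$ (centered at $\vtheta$) inside the affine space $\{\vec{x}: A^=\vec{x}=\vec{c}\}$, and $\widetilde{Q}$ is (essentially) its polar body in that same affine space; polarity sends a ball of radius $\eta$ to a ball of radius $1/\eta$, which is exactly $Q$.

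For the forward inclusion $Q \subseteq \widetilde{Q}$, I would take any $\vec{y} \in Q$ and any $\vec{x} \in B$, and apply Cauchy--Schwarz to obtain
$$\iprod{\vec{y} - \vtheta}{\vec{x} - \vtheta} \leq \|\vec{y} - \vtheta\| \cdot \|\vec{x} - \vtheta\| \leq \frac{1}{\eta}\cdot \eta = 1.$$
Combined with $A^=\vec{y} = \vec{c}$, this places $\vec{y}$ in $\widetilde{Q}.$

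For the reverse inclusion $\widetilde{Q} \subseteq Q$, I would fix $\vec{z} \in \widetilde{Q}$ and exhibit a particular worst-case $\vec{x} \in B$ that certifies the desired bound. The natural choice, assuming $\vec{z} \neq \vtheta,$ is $\vec{x} \defeq \vtheta + \eta \cdot \frac{\vec{z} - \vtheta}{\|\vec{z} - \vtheta\|}.$ The key check is that this $\vec{x}$ actually lies in $B,$ and here is where I use that $\vec{z}$ and $\vtheta$ both satisfy $A^=(\cdot) = \vec{c},$ so the difference $\vec{z} - \vtheta$ lies in the linear subspace $\{u : A^=u = 0\}$ and hence $A^=\vec{x} = \vec{c}.$ Clearly $\|\vec{x} - \vtheta\| = \eta,$ so $\vec{x} \in B.$ Substituting this choice into the defining inequality of $\widetilde{Q}$ yields
$$1 \;\geq\; \iprod{\vec{z} - \vtheta}{\vec{x} - \vtheta} \;=\; \eta\, \|\vec{z} - \vtheta\|,$$
so $\|\vec{z} - \vtheta\| \leq 1/\eta,$ i.e., $\vec{z} \in Q.$ The case $\vec{z} = \vtheta$ is immediate.

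I do not anticipate any real obstacle: the only subtlety is remembering to carry the affine constraint $A^=(\cdot) = \vec{c}$ through so that the candidate extremizer $\vec{x}$ remains in $B,$ which is precisely why both $Q$ and $\widetilde{Q}$ are defined inside the affine space in the first place.
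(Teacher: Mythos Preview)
Your proof is correct and follows essentially the same route as the paper: Cauchy--Schwarz for $Q\subseteq\widetilde{Q}$, and the extremizer $\vec{x}=\vtheta+\eta\,(\vec{z}-\vtheta)/\|\vec{z}-\vtheta\|$ for $\widetilde{Q}\subseteq Q$. You are in fact slightly more careful than the paper, which neglects to single out the trivial case $\vec{z}=\vtheta$.
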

\begin{proof}
We first prove that {\bf $Q\subseteq \widetilde{Q}.$} Let $\vec{y}\in Q$. The constraints $A^=\vec{y}=\vec{c}$ are clearly satisfied since $\vec{y}\in Q$. For any
$\vec{x}\in B$, $$\iprod{\vec{y}-\vtheta}{\vec{x}-\vtheta}\leq \|\vec{y}-\vtheta\|\|\vec{x}-\vtheta\|\leq \frac{1}{\eta}\cdot \eta= 1.$$ Thus, $\vec{y}\in \widetilde{Q}$.
Now we show that {\bf $Q\subseteq \widetilde{Q}.$} Let $\vec{z}\in \widetilde{Q}$. The constraints $A^=\vec{z}=\vec{c}$ are clearly satisfied since $\vec{z}\in \widetilde{Q}$. Now consider
$$\vec{z}' \defeq \vtheta +\frac{\vec{z}-\vtheta}{\|\vec{z}-\vtheta\|}\cdot \eta.$$ We have that $$A^=\vec{z}'= A^=\vtheta +\frac{A^=\vec{z}-A^=\vtheta}{\|\vec{z}-\vtheta\|}\cdot \eta = \vec{c}.$$ Moreover, $$\|\vec{z}'-\vtheta\|=\left\|\vtheta +\frac{\vec{z}-\vtheta}{\|\vec{z}-\vtheta\|}\cdot \eta-\vtheta \right\|=\frac{\|\vec{z}-\vtheta\|}{\|\vec{z}-\vtheta\|}\cdot \eta=\eta.$$ Thus, $\vec{z}'\in B$. Hence, we must have
$
\iprod{\vec{z}-\vtheta}{\vec{{z}'}-\vtheta} \leq 1.$
This implies that
$$\left\langle \vec{z}-\vtheta,\frac{\vec{z}-\vtheta}{\|\vec{z}-\vtheta\|}\cdot \eta \right\rangle \leq 1$$
and, therefore,
$$\|\vec{z}-\vtheta\|\leq \nfrac{1}{\eta}.$$
Thus, $\vec{z}\in Q$ completing the proof.
\end{proof}

\noindent
We now show that $$\vec{\widetilde{\lambda}}=\nfrac{-\vec{\lambda}^{\star}}{m}+\vtheta\in \widetilde{Q}.$$ To see this, first observe that $$A^=\vec{\widetilde{\lambda}}=\nfrac{-A^={\vec{\lambda}}^{\star}}{m}+A^=\vtheta= \vec{0}+\vec{c}.$$ Here we have used the fact that $A^=\lambda^\star = 0,$ see Lemma \ref{lem:shift0}. We now verify the second condition. Let $\vec{x}\in B$. Then
$$\iprod{\vec{\widetilde{\lambda}}-\vtheta}{\vec{x}-\vtheta}=\frac{- \iprod{\vec{\lambda}^{\star}}{\vec{x}-\vtheta}}{m}\leq \frac{1}{m} \cdot m=1$$ where the last inequality follows from that fact that $\vec{x}\in B\subseteq P(\M)$ and Lemma~\ref{lem:theta-lambda}.
Thus, $\vec{\widetilde{\lambda}}\in Q$ and therefore, we must have $\|\vec{\widetilde{\lambda}}-\vtheta\|\leq \nfrac{1}{\eta}$. Therefore, $\|\nfrac{\vec{\lambda}^{\star}}{m}\|\leq \nfrac{1}{\eta}$ proving Theorem~\ref{thm:radius}.

\section{Optimization via Counting}\label{sec:algorithm}
In this section, we prove Theorems~\ref{thm:strong} and \ref{thm:approximate-count}. The proof of both the theorems rely on the bounding box result of Theorem \ref{thm:radius} and employs the framework of the ellipsoid algorithm from Section \ref{sec:ellipsoid}.

\subsection{Proof of  Theorem \ref{thm:strong}}
We first use Theorem~\ref{thm:strong-ellipsoid} to give a proof of Theorem~\ref{thm:strong}. The algorithm assumes access to a strong first-order oracle for $P(\M).$ We then present details of how to implement a strong first-order oracle using an  generalized exact counting oracle.
Suppose $\vlambda^\star$ is the optimum of our convex program. Theorem~\ref{thm:radius} implies that for  $\| {\vlambda}^{\star}\|_\infty \leq \nfrac{m}{\eta}.$ Thus, we may pick the bounding radius to $R \defeq \nfrac{m}{\eta}$ and it does not cut the optimal $\lambda^\star$ we are looking for. The only thing left to choose is a $\beta$ such that
$$ \beta\leq \frac{\eps}{ \left(\sup_{\vlambda \in K, \; \|\vlambda\|_\infty \leq R} f_\theta(\vlambda)-\inf_{\vlambda \in K, \; \|\vlambda\|_\infty \leq R} f_\theta(\vlambda) \right) }.$$
This would imply that the solution ${\vlambda}^\circ$ output by employing the ellipsoid method from Theorem \ref{thm:strong-ellipsoid} is such that
$$ f_\theta({\vlambda}^\circ)  \leq f_\theta(\vlambda^\star) + \eps.$$

\noindent
To establish a bound on $\beta,$ start by noticing that $\inf_{\vlambda} f_\theta(\vlambda) \geq 0.$ This follows from weak-duality and the fact that entropy is always non-negative.
On the other hand we have the following simple lemma.
\begin{lemma}
$\sup_{ \|\vlambda\|_\infty \leq R} f_\theta(\vlambda)\leq (2m+1)R.$
\end{lemma}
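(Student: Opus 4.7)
The plan is to split $f_\theta(\vlambda) = \langle \vtheta, \vlambda \rangle + \ln \sum_{M \in \M} e^{-\lambda(M)}$ and crudely bound each piece separately, using only that all vertices of $P(\M)$ are $0/1$ vectors.

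First I would bound the linear term by Hölder's inequality. Since $P(\M) \subseteq [0,1]^m$, every coordinate $\theta_e$ lies in $[0,1]$, so $\|\vtheta\|_1 \leq m$, giving
\[
|\langle \vtheta, \vlambda \rangle| \;\leq\; \|\vtheta\|_1 \cdot \|\vlambda\|_\infty \;\leq\; m R.
\]
Next, for the log-partition term, I would uniformly bound each summand: since $|M| \leq m$ and $|\lambda_e| \leq R$, we have $|\lambda(M)| \leq mR$, hence $e^{-\lambda(M)} \leq e^{mR}$. Using $|\M| \leq 2^m$,
\[
\ln \sum_{M \in \M} e^{-\lambda(M)} \;\leq\; \ln\bigl(2^m \cdot e^{mR}\bigr) \;=\; m \ln 2 + mR \;\leq\; m + mR.
\]
Adding the two estimates yields $f_\theta(\vlambda) \leq 2mR + m$.

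To conclude $2mR + m \leq (2m+1)R$, I would observe that in the intended application $R = m/\eta$ with $\eta \leq 1$, so $R \geq m \geq 1$; this implies $m \leq R$ and hence $2mR + m \leq 2mR + R = (2m+1)R$. The fact $\eta \leq 1$ is automatic: because $P(\M) \subseteq [0,1]^m$, any ball of radius $\eta$ around $\vtheta$ that sits inside $P(\M)$ must satisfy $\eta \leq \tfrac{1}{2}$.

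There is no real obstacle here; the only thing to be careful about is ensuring the final simplification $2mR + m \leq (2m+1)R$ is legitimate, which (as just noted) is immediate from the regime $R \geq m$ in which the lemma is applied.
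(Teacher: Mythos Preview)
Your approach is essentially identical to the paper's: bound the linear term by $mR$ using $\vtheta \in [0,1]^m$, bound the log-partition term by $\ln(2^m e^{mR}) = m\ln 2 + mR$ using $|\M|\le 2^m$ and $|\lambda(M)|\le mR$, and add. The paper's proof is the one-line version of exactly this computation.

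There is one slip in your final step. Your justification that $\eta \le \tfrac12$ treats the $\eta$-ball as a full-dimensional Euclidean ball inside $[0,1]^m$, but the paper defines $\eta$-interiority \emph{relative to the affine hull} $\{x: A^= x = b\}$. In low-dimensional situations $\eta$ can exceed $\tfrac12$: for instance if $\M = \{\mathbf{0}, \mathbf{1}\} \subseteq \{0,1\}^m$, the midpoint of the resulting segment is in the $(\sqrt{m}/2)$-interior, so $R = m/\eta$ can be as small as $2\sqrt{m}$, and then $2mR + m \not\le (2m+1)R$ for large $m$. The paper's own proof simply asserts the last inequality without comment, so it has the same loose end. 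For the application this is immaterial: only a bound polynomial in $m$ and $R$ is needed to choose $\beta$, and $2mR + m$ (or $3mR$, valid whenever $R\ge 1$, which always holds here since $\eta \le \sqrt{m}/2$ forces $R\ge 2\sqrt{m}$) serves equally well.
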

\begin{proof}
$$f_\theta(\vlambda) \leq | \langle \vlambda, \vtheta \rangle | + \left| \ln \sum_{M \in \M} e^{-\lambda(M)} \right| \leq mR + \ln (2^m e^{mR}) \leq (2m+1)R.$$
Here we have used the fact that $|\M| \leq 2^m$ and that $\theta_e \in [0,1]$ for each $e \in [m].$
\end{proof}
\noindent
Thus, $\beta$ can be chosen to be $\frac{\eps}{(2m+1)R}.$
Hence, the running time of the ellipsoid method  depends polynomially on the the time it takes to implement the strong first-order oracle for $f_\theta$ and $\log \frac{mR}{\eps}.$

\noindent
Since $f_\theta(\vlambda) = \langle \vlambda, \vtheta \rangle + \ln \sum_{M \in \M} e^{-\lambda(M)}=\langle \vlambda, \vtheta \rangle + \ln Z^\vlambda,$
it is easily seen that
$$\nabla f_\theta(\vlambda)_e = \theta_e - \frac{\sum_{M \in \M, \; M \ni e} e^{-\lambda(M)}}{\sum_{N \in \M} e^{-\lambda(N)}}=\theta_e -\frac{Z^\vlambda_e}{Z^\vlambda} = \theta_e - \theta_e^\vlambda.$$
Hence, $\nabla f_\theta(\vlambda) = \vtheta - \vtheta^\vlambda.$
Recall that the strong first-order oracle for $f_\theta$ requires, for a given $\vlambda,$ $f_\theta(\vlambda)$ and $\nabla f_\theta(\vlambda).$ The generalized exact counting oracle for $P(\M)$ immediately does it as it gives us $Z^\vlambda_e$ for all $e \in [m]$ and $Z^\vlambda.$ This allows us to compute $f_\theta(\vlambda)$ and $\nabla f_\theta(\vlambda)$ in one call to such an oracle. In addition we also need time proportional to the number of bits needed to represent $\theta.$

Thus, the number of calls to the counting oracle by the ellipsoid algorithm of Theorem \ref{thm:strong-ellipsoid} is bounded by a polynomial in $m,$ $\log R$ and $\log \nfrac{1}{\eps}.$ Since each oracle call can be implemented in time polynomial in $m$ and $R$,\footnote{Here we ignore the fact that $e^{-\lambda_e}$ can be irrational. This issue can be dealt in a standard manner as is done in the implementation details of all ellipsoid algorithms. See \cite{GrotschelLS88} for details.} this gives the required running time and concludes the proof of Theorem \ref{thm:strong}.

\def \R   {{\mathbb R}}
\def \Q   {{\mathbb Q}}

\subsection{Proof of  Theorem \ref{thm:approximate-count}}

Now we give the ellipsoid algorithm that works with a generalized approximate counting oracle and prove Theorem~\ref{thm:approximate-count}.
Here, the fact that the counting oracle is approximate means that the gradient computed as in the previous section is approximate. Thus, this raises the possibility of cutting off the optimal $\lambda^\star$ during the run of the ellipsoid algorithm.
We present the ellipsoid algorithm to check, given a $\theta$ and a $\zeta,$ whether $|f_\theta(\lambda^\star) - \zeta| \leq \eps$.  The technical heart of the matter is to show that when $| f_\theta(\lambda^\star)- \zeta| \leq \eps,$ $\lambda^\star$ is  never cut off of the successive ellipsoids obtained by adding the approximate gradient constraints. Moreover, in this case,  once the radius of the ellipsoid becomes small enough, we can output its center as a guess for $\lambda^\star.$
Since the radius of the final ellipsoid is small and contains $\lambda^\star$, the following lemma, which bounds the Lipschitz constant of  $f_\theta,$ implies that the value of $f_\theta$ at the center of the ellipsoid is close enough to $f_\theta(\lambda^\star).$
\begin{lemma}\label{lem:dual-nearoptimal}
For any $\vec{\lambda},\vec{\lambda}'$
$$f_\theta(\vec{\lambda})-f_\theta(\vec{\lambda}')\leq 2\sqrt{m} \|\vec{\lambda}-\vec{\lambda}'\|.$$
\end{lemma}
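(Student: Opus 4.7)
\medskip

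\noindent
\textbf{Proof proposal for Lemma \ref{lem:dual-nearoptimal}.}
The plan is to establish the Lipschitz bound by splitting
\[
f_\theta(\vec{\lambda})-f_\theta(\vec{\lambda}')
\;=\;\bigl\langle \vec{\lambda}-\vec{\lambda}',\vec{\theta}\bigr\rangle
\;+\;\ln\frac{Z^{\vec{\lambda}}}{Z^{\vec{\lambda}'}}
\]
into its affine part and its log-partition part, and bounding each term by $\sqrt{m}\,\|\vec{\lambda}-\vec{\lambda}'\|$. Summing the two pieces then gives the claimed factor of $2\sqrt{m}$.

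For the linear term, I would apply Cauchy--Schwarz together with the observation that $\vec{\theta}\in P(\M)\subseteq[0,1]^m$ (since it is a convex combination of $0/1$ indicator vectors), which forces $\|\vec{\theta}\|\le\sqrt{m}$ and hence $\langle \vec{\lambda}-\vec{\lambda}',\vec{\theta}\rangle\le \sqrt{m}\,\|\vec{\lambda}-\vec{\lambda}'\|$.

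For the log-partition term, the key identity is
\[
Z^{\vec{\lambda}}\;=\;\sum_{M\in\M}e^{-\lambda(M)}\;=\;\sum_{M\in\M}e^{-\lambda'(M)}\,e^{\langle \vec{\lambda}'-\vec{\lambda},\one_M\rangle}
\;\le\;Z^{\vec{\lambda}'}\,\max_{M\in\M}e^{\langle \vec{\lambda}'-\vec{\lambda},\one_M\rangle}.
\]
Taking logarithms and then applying Cauchy--Schwarz with $\|\one_M\|\le\sqrt{m}$ gives
$\ln(Z^{\vec{\lambda}}/Z^{\vec{\lambda}'})\le\max_M\langle\vec{\lambda}'-\vec{\lambda},\one_M\rangle\le\sqrt{m}\,\|\vec{\lambda}-\vec{\lambda}'\|$.

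No step here is really an obstacle; the only subtlety is choosing the right way to bound $\ln(Z^{\vec{\lambda}}/Z^{\vec{\lambda}'})$. A gradient-based argument via $\nabla f_\theta(\vec{\lambda})=\vec{\theta}-\vec{\theta}^{\vec{\lambda}}$ (both in $[0,1]^m$) would in fact yield the sharper constant $\sqrt{m}$, but the direct term-by-term bound above is cleaner and already matches the constant stated in the lemma.
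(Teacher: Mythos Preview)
Your proposal is correct and essentially identical to the paper's own proof: both split $f_\theta(\vec{\lambda})-f_\theta(\vec{\lambda}')$ into the linear term $\langle\vec{\theta},\vec{\lambda}-\vec{\lambda}'\rangle$ and the log-ratio $\ln(Z^{\vec{\lambda}}/Z^{\vec{\lambda}'})$, bound the first by Cauchy--Schwarz with $\|\vec{\theta}\|\le\sqrt{m}$, and bound the second by $\max_{M}\langle\vec{\lambda}'-\vec{\lambda},\one_M\rangle\le\sqrt{m}\,\|\vec{\lambda}-\vec{\lambda}'\|$ via Cauchy--Schwarz again.
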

\begin{proof}
We have
\begin{align*}
f_\theta(\vec{\lambda})-f_\theta(\vec{\lambda}')&= \iprod{\vec{\theta}}{\vec{\lambda}-\vec{\lambda}'}+ \ln \frac{\sum_{M\in\M}e^{-\lambda(M)}}{\sum_{M\in\M}e^{-\lambda'(M)}}\\
& \leq  \|\vec{\theta}\|\|\vec{\lambda}-\vec{\lambda}'\|+ \ln \max_{M\in \M} \frac{e^{-\lambda(M)}}{e^{-\lambda'(M)}}\\
&\leq \sqrt{m} \|\vec{\lambda}-\vec{\lambda}'\|+ \max_{M\in \M} (\lambda'(M)-\lambda(M))\\
& \leq  \sqrt{m}\|\vec{\lambda}-\vec{\lambda}'\|+  \sqrt{m} \|\vec{\lambda}-\vec{\lambda}'\|\leq 2\sqrt{m} \|\vec{\lambda}-\vec{\lambda}'\|
\end{align*}
which completes the proof. Here we have used the Cauchy-Schwarz inequality in the first and third inequalities and in the second inequality we have used the fact that $\theta \in [0,1]^m.$
\end{proof}

\noindent
Proceeding to the ellipsoid algorithm underlying the proof of Theorem \ref{thm:approximate-count}, we do a binary search on the optimal value $f_\theta(\vlambda^{\star})$ up to an accuracy of $\nfrac{\epsilon}{8}$. For a guess $\zeta\in(0,m]$, we check whether the guess is correct with the following ellipsoid algorithm.\footnote{For ease of analysis, we assume that all oracle calls are answered correctly with probability $1.$ The failure probability can be adjusted to arbitrary precision with a slight degradation in the running time.}
\begin{enumerate}
\item {\bf Input}
\begin{enumerate}

\item An error parameter $\eps>0.$
\item An interiority parameter $\eta>0.$
\item A $\theta$ which is guaranteed to be in the $\eta$-interior of $P(\M).$
 \item A maximally linearly independent set of equalities $(A^=,b)$ for $P(\M).$
\item A generalized approximate counting oracle for $\M.$
\item A guess $\zeta \in (0,m]$ (for  $f_\theta(\lambda^\star)).$
\end{enumerate}

\item {\bf Initialization}
\begin{enumerate}
\item  Let  $E_0 \defeq E(B_{0},\vec{c}_{0})$ be a sphere with radius $R=\nfrac{m}{\eta}$ centered around the origin (thus, containing $\vlambda^{\star}$ by Theorem \ref{thm:radius}) and  restricted to  the affine space $A^=x=0.$
\item Set $t=0$
\end{enumerate}

\item {\bf Repeat} until  the ellipsoid $E_t$ is contained in a ball of radius at most $\frac{\eps}{16 \sqrt{m}}.$
\begin{enumerate}
\item Given the ellipsoid  $E_{t} \defeq E(B_{t},\vec{c}_{t})$, set $\vlambda_t \defeq \vec{c}_{t}.$

\item\label{forward-1} Compute $\zeta_t$ using the counting oracle such that $f_\theta(\vlambda_t)-\nfrac{\epsilon}{8}\leq \zeta_t\leq f_\theta(\vlambda_t)+\nfrac{\epsilon}{8}.$
\item  {\bf If}  $\zeta_t \leq \zeta$
\begin{enumerate}
 \item  {\bf then}  {\bf return} $\vlambda_t$ and {\bf stop}.
\item  {\bf else}
\begin{enumerate}
\item\label{forward-2} Compute $\theta_t$ such that  $\|\vtheta_t-\vtheta^{\vlambda_t}\|_1\leq \nfrac{\epsilon}{16R}$ using the counting oracle.
\item\label{forward-4}  Compute the ellipsoid $E_{t+1}$ to be the smallest ellipsoid containing the half-ellipsoid $\{\vlambda\in E_{t}: \langle \vlambda-\vlambda_t,  \vtheta-{\vtheta_t} \rangle\leq 0\}$  restricted to  the affine space $A^=x=0.$
\end{enumerate}
\item $t=t+1.$
\end{enumerate}
\end{enumerate}
\item\label{forward-3} Let $T=t$ and compute $\zeta_T$ using the counting oracle such that $f_\theta(\vlambda_T)-\nfrac{\epsilon}{8}\leq \zeta_T\leq f_\theta(\vlambda_T)+\nfrac{\epsilon}{8}$.
\item {\bf If}  $\zeta_T \leq \zeta$
\begin{enumerate}
 \item  {\bf then}  {\bf return} $\vlambda_T$ and {\bf stop}.
\item  {\bf else} {\bf return} $f_\theta(\lambda^\star) > \zeta$ and {\bf stop}  ($\zeta$ is not a good guess for $f_\theta(\lambda^\star)$).
\end{enumerate}
\end{enumerate}

\noindent
We first show that the algorithm can be implemented using a polynomial number of queries to the approximate oracle. Steps (\ref{forward-1}), (\ref{forward-2}) and (\ref{forward-3}) can be computed using oracle calls to the generalized approximate counting oracle for $\M$ to obtain  $\widetilde{Z}^{\vlambda_t}$ such that $(1-\nfrac{\epsilon}{16})Z^{\vlambda_t}\leq \widetilde{Z}^{\vlambda_t}\leq (1+\nfrac{\epsilon}{16})Z^{\vlambda_t}.$ We set $\zeta_t=\langle \vlambda_t,\vtheta\rangle + \ln{ \widetilde{Z}^{\vlambda_t}}$. A simple calculation then  shows that
 $$f_\theta(\vlambda_t)-\frac{\epsilon}{8}\leq \zeta_t\leq f_\theta(\vlambda_t)+\frac{\epsilon}{8}$$
 since $f_\theta(\vlambda_t)=\langle \vlambda_t,\vtheta \rangle+ \ln{ {Z}^{\vlambda_t}}.$ Similarly using one oracle call to the counting oracle with error parameter $\frac{\epsilon}{16R}$, we can compute  $$\|\vtheta_t-\vtheta^{\vlambda_t}\|_1\leq \frac{\epsilon}{8R}$$ as needed in Step (\ref{forward-2}) of the algorithm.
Using Theorem~\ref{thm:john}, the number of iterations can be bounded by a polynomial in $m, \log \nfrac{R}{\epsilon}$. The analysis is quite standard and omitted. Each of the oracle call can be implemented in time polynomial in $m, R$ and $\nfrac{1}{\epsilon}$. We now show the following lemma which completes the proof of Theorem~\ref{thm:approximate-count}.

\begin{lemma}\label{lem:approximate-solution}
Let $\zeta^\circ$ be the smallest guess for which the ellipsoid algorithm succeeds in finding a solution and let $\vlambda^\circ$ denote the corresponding solution returned. Then, $f_\theta(\vlambda^\circ)\leq f_\theta(\vlambda^\star)+\epsilon$.
\end{lemma}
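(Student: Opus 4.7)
The plan is to first understand when, for a given guess $\zeta$, the ellipsoid subroutine returns a $\vec{\lambda}$. Specifically, I would argue the following key claim: if $\zeta \geq f_\theta(\vec{\lambda}^\star) + \epsilon/2$, then the subroutine succeeds. Combining this with the fact that binary search is carried out to accuracy $\epsilon/8$ will show that the smallest successful guess $\zeta^\circ$ satisfies $\zeta^\circ \leq f_\theta(\vec{\lambda}^\star) + \epsilon/2 + \epsilon/8$, and since the returned $\vec{\lambda}^\circ$ comes with $\zeta_t \leq \zeta^\circ$ and $f_\theta(\vec{\lambda}^\circ) \leq \zeta_t + \epsilon/8$, we conclude $f_\theta(\vec{\lambda}^\circ) \leq f_\theta(\vec{\lambda}^\star) + 3\epsilon/4 \leq f_\theta(\vec{\lambda}^\star) + \epsilon$.

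To prove the key claim, the first step is to show that $\vec{\lambda}^\star$ is never cut off by any of the approximate gradient cuts. Suppose we reach an iteration $t$ that does not return, so $\zeta_t > \zeta$. Then by step~\ref{forward-1}, $f_\theta(\vec{\lambda}_t) > \zeta_t - \epsilon/8 > \zeta - \epsilon/8 \geq f_\theta(\vec{\lambda}^\star) + 3\epsilon/8$. Convexity of $f_\theta$ together with $\nabla f_\theta(\vec{\lambda}_t) = \vec{\theta} - \vec{\theta}^{\vec{\lambda}_t}$ gives
\[
\langle \vec{\theta} - \vec{\theta}^{\vec{\lambda}_t},\, \vec{\lambda}^\star - \vec{\lambda}_t \rangle \leq f_\theta(\vec{\lambda}^\star) - f_\theta(\vec{\lambda}_t) < -\tfrac{3\epsilon}{8}.
\]
The cut actually used in step~\ref{forward-4} is in direction $\vec{\theta} - \vec{\theta}_t$, differing from the true gradient by $\vec{\theta}^{\vec{\lambda}_t} - \vec{\theta}_t$, which has $\ell_1$ norm at most $\epsilon/(16R)$ by step~\ref{forward-2}. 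Since both $\vec{\lambda}_t$ and $\vec{\lambda}^\star$ lie in the initial ball of $\ell_\infty$ radius $R = m/\eta$ (using Theorem~\ref{thm:radius}), H\"older's inequality gives $|\langle \vec{\theta}^{\vec{\lambda}_t} - \vec{\theta}_t,\, \vec{\lambda}^\star - \vec{\lambda}_t \rangle| \leq (\epsilon/16R)(2R) = \epsilon/8$. Therefore $\langle \vec{\theta} - \vec{\theta}_t,\, \vec{\lambda}^\star - \vec{\lambda}_t \rangle < 0$, so $\vec{\lambda}^\star$ lies in the retained half-ellipsoid. By induction, $\vec{\lambda}^\star \in E_t$ at every iteration.

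For the second step, consider the loop-exit iteration $T$. Then $E_T$ is contained in a ball of radius $\epsilon/(16\sqrt{m})$, and since both its center $\vec{\lambda}_T$ and $\vec{\lambda}^\star$ lie in $E_T$, we have $\|\vec{\lambda}_T - \vec{\lambda}^\star\| \leq \epsilon/(8\sqrt{m})$. Applying Lemma~\ref{lem:dual-nearoptimal} yields $f_\theta(\vec{\lambda}_T) \leq f_\theta(\vec{\lambda}^\star) + \epsilon/4$. Consequently $\zeta_T \leq f_\theta(\vec{\lambda}_T) + \epsilon/8 \leq f_\theta(\vec{\lambda}^\star) + 3\epsilon/8 < \zeta$ (using $\zeta \geq f_\theta(\vec{\lambda}^\star) + \epsilon/2$), so step~5 of the subroutine returns $\vec{\lambda}_T$. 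This establishes the key claim and, combined with the binary-search bookkeeping in the first paragraph, completes the proof.

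The main obstacle is the approximate-gradient analysis in the first step: one must rule out the possibility that the noisy cut excludes $\vec{\lambda}^\star$. The reason this goes through is a careful balancing of the three $\epsilon/8$ slacks built into the algorithm (in $\zeta_t$, in $\vec{\theta}_t$, and in the termination radius), each chosen so that the $\ell_1$--$\ell_\infty$ H\"older bound on the bounding-box $R$ from Theorem~\ref{thm:radius} dominates the residual error.
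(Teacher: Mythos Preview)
Your proposal is correct and follows essentially the same approach as the paper: show that for any guess $\zeta$ sufficiently above $f_\theta(\vlambda^\star)$ the optimum $\vlambda^\star$ survives every approximate-gradient cut (via convexity plus the $\epsilon/(16R)$ bound on the gradient error against the $2R$ diameter), then use the small terminal ellipsoid together with Lemma~\ref{lem:dual-nearoptimal} to force a return; finally combine with the $\epsilon/8$ binary-search granularity. The only cosmetic differences are that the paper phrases the key claim as a contradiction (``if the run with $\zeta$ fails then $f_\theta(\vlambda^\star)\geq \zeta-\epsilon/4$'') and uses Cauchy--Schwarz where you use H\"older, yielding a final slack of $\epsilon/2$ instead of your $3\epsilon/4$.
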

\begin{proof}
 Observe that we have $$f_\theta(\vlambda^\circ)-\frac{\epsilon}{8}\leq \zeta^\circ\leq f_\theta(\vlambda^\circ)+\frac{\epsilon}{8}.$$
Since $\zeta^\circ$ is the smallest guess for which the algorithm succeeds in returning an answer, it fails for some $\zeta\in [\zeta^\circ-\nfrac{\epsilon}{8},\zeta^\circ]$.  We  show that $f_\theta(\vlambda^{\star})\geq \zeta-\nfrac{\epsilon}{4}$. This suffices to prove the theorem since
$$f_\theta(\vlambda^\circ)\leq \zeta^\circ+\frac{\epsilon}{8}\leq \zeta+\frac{\epsilon}{4}\leq f_\theta(\vlambda^{\star})+\frac{\epsilon}{2}.$$

\noindent
Suppose for the sake of contradiction that
\begin{equation}\label{eq:lambda-gamma}
f_\theta(\vlambda^{\star})<\zeta-\frac{\epsilon}{4}.
\end{equation}
We then show that $\vlambda^{\star}$ must be in the final ellipsoid $E_T$ when the ellipsoid algorithm is run with guess $\zeta$. Let $\vlambda_t$ be center of the ellipsoid $E_t$ in any iteration with guess $\zeta$. Let $\zeta_t$ be computed in Step (\ref{forward-1}) such that
$$f_\theta(\vlambda_t)-\frac{\epsilon}{8}\leq \zeta_t\leq f_\theta(\vlambda_t)+\frac{\epsilon}{8}.$$
Since the algorithm does not return any answer with guess $\zeta$, we must have $\zeta_t >\zeta$. Thus,
$$f_\theta(\vlambda_t)\geq \zeta_t-\frac{\epsilon}{8}\geq \zeta-\frac{\epsilon}{8}\geq f_\theta(\vlambda^{\star})+\frac{\epsilon}{8}$$
where the last inequality follows from inequality~\eqref{eq:lambda-gamma}. Let $\vtheta_t$ be computed in Step (\ref{forward-2}) of the algorithm such that
$\|\vtheta_t-\vtheta^{\vlambda_t}\|_1\leq \frac{\epsilon}{32R}$. But then
\begin{eqnarray*}
\langle \vlambda^{\star}-\vlambda_t, \vtheta-\vtheta_t \rangle&=&\langle \vlambda^{\star}-\vlambda_t, \vtheta-\vtheta^{\vlambda_t} \rangle+\langle \vlambda^{\star}-\vlambda_t, \vtheta^{\vlambda_t}-\vtheta_{t} \rangle \\
&\leq &f_\theta(\vlambda^{\star})-f_\theta(\vlambda_{t}) + \langle \vlambda^{\star}-\vlambda_t,\vtheta^{\vlambda_t}-\vtheta_{t} \rangle \\
&\leq& -\frac{\epsilon}{8} + \|\vlambda^{\star}-\vlambda_t\|\|\vtheta^{\vlambda_t}-\vtheta_{t}\| \\
&\leq& -\frac{\epsilon}{8}+ 2R\cdot \frac{\epsilon}{32R}\leq 0
\end{eqnarray*}

\noindent
where the first inequality follows from convexity of $f$. Thus, $\vlambda^{\star}$ satisfies the separating constraint put in Step (\ref{forward-4}) of the algorithm. Therefore, it must be contained in the final ellipsoid $E_T$. Let $\vlambda_T$ be the center of the ellipsoid $E_T$. Let $\zeta_T$ be computed such that
 $| \zeta_T- f_\theta(\vlambda_T)| \leq \nfrac{\epsilon}{8}.$
Then
\begin{eqnarray*}
\zeta_T&\leq&  f_\theta(\vlambda_T)+\frac{\epsilon}{8} \leq f_\theta(\vlambda^{\star})+\sqrt{m}\|\vlambda^{\star}-\vlambda_T\|+\frac{\epsilon}{8}\\
&\leq& f_\theta(\vlambda^{\star})+\sqrt{m}\frac{\epsilon}{16\sqrt{m}} +\frac{\epsilon}{8}\leq \zeta-\frac{\epsilon}{4}+\frac{\epsilon}{4}\leq \zeta
\end{eqnarray*}

\noindent
where we use Lemma~\ref{lem:dual-nearoptimal}. Therefore, the algorithm must have returned $\lambda^\circ=\vlambda_T$ as the feasible solution for guess $\zeta$ a contradiction. This completes the proof of Lemma~\ref{lem:approximate-solution}.
\end{proof}

\def \B   {{\mathcal B}}
\section{ Counting via Optimization}\label{sec:reverse}
In this section we present the proof of Theorem \ref{thm:reverse}. We start by phrasing the problem of estimating $|\M|$ as a convex optimization problem.

\subsection{A Convex Program for Counting}
Let   $g(\vtheta)$ denote the optimum of the max-entropy program of Figure \ref{conv-entropy} for  $\M$ and a point $\theta$ in $P(\M).$ If $\theta$ is in the interior of $P(\M),$ then strong duality holds for this convex program and
$$ g(\theta) =  \inf_{\vlambda}f_{\vtheta}(\vlambda),$$
see Lemma \ref{prop:maxentropya}.
By the concavity of the Shannon entropy,  $g(\cdot)$ can be easily seen to be a concave function of $\theta.$
Recall that in the setting of Theorem \ref{thm:reverse},  we have  an access to an approximate max-entropy oracle which, given a $\theta$ in the $\eta$-interior of  $P(\M),$ a $\zeta>0$ as a guess for $g(\theta),$ and an $\eps >0,$ either  asserts that $g(\theta) \geq \zeta -\eps$ or returns a $\lambda$ such that $f_\theta(\lambda) \leq \zeta+\eps.$
  The running time of this oracle is polynomial in $m,\nfrac{1}{\eps}$ and in the number of bits needed to represent $\zeta.$
Using this oracle, we hope to get an estimate on $|\M|.$
The starting point of the proof  is the observation that the point that maximizes $g(\theta)$ when $\theta$ is in $P(\M)$ is
 $$\theta^\star \defeq \frac{1}{|\M|}\sum_{M\in \M} \one_{M},$$ the vertex centroid of $P(\M).$
\begin{lemma}\label{cl:max-entropy}
 $\sup_{\vtheta\in P(\M)} g(\theta) \leq \ln |\M|$ and  $g(\theta^\star)=\ln |\M|.$
\end{lemma}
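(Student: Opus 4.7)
The plan is to prove both statements directly from the definition of $g$ and the standard fact that entropy of any probability distribution supported on a set of size $n$ is at most $\ln n$.

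For the upper bound $\sup_{\vtheta \in P(\M)} g(\vtheta) \leq \ln|\M|$, fix any $\vtheta \in P(\M)$ and let $p$ be any feasible distribution in the max-entropy program of Figure \ref{conv-entropya}; that is, $p$ is supported on $\M$ and has marginals $\vtheta$. Since $p$ is a probability distribution on a set of cardinality $|\M|$, a one-line application of Jensen's inequality (or concavity of $\ln$) gives
\[
H(p) = \sum_{M \in \M} p_M \ln \frac{1}{p_M} \leq \ln |\M|,
\]
with equality iff $p$ is uniform on $\M$. Taking the supremum over feasible $p$ yields $g(\vtheta) \leq \ln|\M|$, and then taking the supremum over $\vtheta$ gives the first claim.

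For the equality $g(\vtheta^\star) = \ln|\M|$, I would exhibit the uniform distribution $u$ on $\M$ as a feasible solution at $\vtheta^\star$. Indeed, $u_M = 1/|\M|$ for every $M \in \M$, and its marginal on $e \in [m]$ is
\[
\sum_{M \in \M,\; M \ni e} \frac{1}{|\M|} = \frac{1}{|\M|} \sum_{M \in \M} \one_M(e) = \theta^\star_e,
\]
by the definition of $\vtheta^\star$. So $u$ is feasible for the max-entropy program with marginals $\vtheta^\star$, and $H(u) = \ln|\M|$. Hence $g(\vtheta^\star) \geq \ln|\M|$, which combined with the upper bound already proved yields $g(\vtheta^\star) = \ln|\M|$.

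There is no real obstacle here: the only subtlety is making sure that $\vtheta^\star \in P(\M)$ so that the max-entropy program is nontrivially defined, which is immediate since $\vtheta^\star$ is a convex combination (with equal weights $1/|\M|$) of the vertices $\{\one_M : M \in \M\}$ of $P(\M)$. Strong duality and the dual form $g(\vtheta) = \inf_{\vlambda} f_\vtheta(\vlambda)$ are not needed for this lemma, only the primal definition of $g$.
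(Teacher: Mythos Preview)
Your proof is correct and follows essentially the same approach as the paper: both use the standard fact that entropy over a finite set of size $|\M|$ is at most $\ln|\M|$, and then verify that the uniform distribution on $\M$ is feasible at $\vtheta^\star$ and attains this bound. Your version is slightly more detailed (explicitly checking the marginals and that $\vtheta^\star\in P(\M)$), but there is no substantive difference.
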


\begin{proof}
For any $\vtheta$ in the interior of $P(\M)$, $g(\theta)$ is  the entropy of some probability distribution over the  elements in $|\M|$. A standard fact in information theory implies that the maximum entropy of any distribution over a finite set is obtained by the uniform distribution. The entropy of the uniform distribution on $\M$ is $\ln |\M|,$ hence, $g(\theta)$  can be upper bounded by $\ln { |\M|}$.  On the other hand, the uniform distribution over $\M$ has marginals equal to $\theta^\star$ and, thus, $g(\theta^\star) = \ln |\M|.$
\end{proof}

\noindent
Thus, if we could find $g(\theta^\star),$ then we can estimate $|\M|.$
Finding $g(\theta^\star)$ is the same as solving the convex program
$$\sup_{\theta \in P(\M)}  g(\theta).$$
We  use the framework of the ellipsoid method to approximately solve this convex program and  find a point which gives us a good enough estimate to  $g(\theta^\star).$
Note that, unlike the results in the previous section,  the bounding box here is easily obtained since $\theta \in P(\M) \subseteq [0,1]^m.$

\subsection{The Interior of $P(\M)$}
To check how close a candidate point $\theta$ is to $\theta^\star,$ we  use the max-entropy separation oracle provided to us.
The main difficulty we encounter is that the running time of the max-entropy oracle with marginals $\theta$ is inverse-polynomially dependent on the interiority of the point $\theta.$ Note that interiority of $\theta$ is a pre-requisite for strong duality to hold and for a succinct representation of the entropy-maximizing probability distribution to exist as in Lemma \ref{prop:maxentropya}.
The reason we assume a max-entropy oracle that works only if $\theta$ is in the inverse-polynomial interior of $P(\M)$ is that such an oracle is the best we can hope for algorithmically
 and, indeed, Theorems \ref{thm:strong} and \ref{thm:approximate-count} provide such an oracle. Without this restriction the proof of Theorem \ref{thm:reverse}  is simpler, but the theorem itself is  less useful as there may not exist a max-entropy oracle whose running time {\em does not} depend on the interiority of $\theta.$

\newcommand{\thetap}{{\theta^\bullet}}

The first issue raised by interiority is whether the point we are looking for, $\theta^\star$ may not be in the inverse-polynomial interior of $P(\M).$   To tackle this, we show that there is a point $\thetap$  in the $\eta$-interior of $P(\M)$  for $\eta ={\rm poly}(\eps, \nfrac{1}{m})$ such that  $g(\thetap) \sim g(\theta^\star)=\ln |\M|.$
Thus, instead of aiming for $\theta^\star,$ the ellipsoid algorithm  aims for $\thetap.$
\begin{lemma}\label{lem:interior-count}
Given an $\epsilon>0$, there exists an $\eta>0$ and $\thetap$ such that ${\thetap}$ is in $\eta$-interior of $P(\M)$ and $g(\thetap)\geq \left(1-\frac{\epsilon}{16m}\right)\ln |\M| \geq \ln |\M| -\frac{\epsilon}{16}$. Moreover, $\eta$ is at least a polynomial in $\nfrac{1}{m}$ and $\eps.$
\end{lemma}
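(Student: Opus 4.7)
\textbf{Proof plan for Lemma \ref{lem:interior-count}.} The plan is to obtain $\thetap$ by slightly perturbing the centroid $\theta^\star$ toward a point that is already guaranteed to be in the inverse-polynomial interior of $P(\M)$, and then to use concavity of $g$ together with non-negativity of entropy to conclude that entropy is almost preserved.

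First, I would invoke Lemma \ref{lem:interiority0} to obtain a point $\tilde{\theta} \in P(\M)$ which is in the $\eta'$-interior of $P(\M)$ for some $\eta' \geq 1/\mathrm{poly}(m)$. Next I would set $\delta \defeq \epsilon/(16 m)$ and define
\[
\thetap \defeq (1-\delta)\,\theta^\star + \delta\,\tilde{\theta}.
\]
Note that $\thetap$ lies in $P(\M)$ as a convex combination of points in $P(\M)$, and that $A^= \thetap = b$ since both $\theta^\star$ and $\tilde\theta$ satisfy $A^= x = b$.

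The second step is to check that $\thetap$ is in the $(\delta\eta')$-interior of $P(\M)$. For any $y$ with $A^= y = b$ and $\|y - \thetap\| \leq \delta \eta'$, I would write $y = (1-\delta)\theta^\star + \delta z$ where $z \defeq \tilde{\theta} + (y - \thetap)/\delta$. A direct computation gives $\|z - \tilde\theta\| = \|y - \thetap\|/\delta \leq \eta'$, and $A^= z = A^= \tilde\theta = b$, so $z \in P(\M)$ by the $\eta'$-interiority of $\tilde\theta$. Hence $y$ is a convex combination of points in $P(\M)$, so $y \in P(\M)$, establishing the claimed interiority with $\eta \defeq \delta \eta' = \Omega(\epsilon/\mathrm{poly}(m))$.

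The third step is the entropy bound. Since $g$ is concave (as the supremum of a linear function is an entropy that depends concavely on the marginals, see Lemma~\ref{cl:max-entropy}) and $g(\tilde\theta) \geq 0$ (entropy is non-negative),
\[
g(\thetap) \;\geq\; (1-\delta)\, g(\theta^\star) + \delta\, g(\tilde\theta) \;\geq\; (1-\delta)\, g(\theta^\star) \;=\; \Bigl(1 - \tfrac{\epsilon}{16 m}\Bigr)\ln |\M|,
\]
where the last equality uses Lemma~\ref{cl:max-entropy}. Since $\ln |\M| \leq m$, this yields $g(\thetap) \geq \ln|\M| - \epsilon/16$ as required. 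There is no real obstacle here: the only subtlety is making sure that the perturbed point respects the affine constraints $A^= x = b$ and that the ball of radius $\delta \eta'$ around $\thetap$ (restricted to the affine hull) is indeed contained in $P(\M)$, both of which follow from the convex-combination construction.
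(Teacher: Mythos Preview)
Your proof is correct and follows essentially the same approach as the paper: take a convex combination of the centroid $\theta^\star$ with the deep-interior point $\tilde\theta$ from Lemma~\ref{lem:interiority0}, then use concavity and non-negativity of $g$ for the entropy bound. The paper uses the mixing weight $\delta = \epsilon/(16m^3)$ rather than your $\epsilon/(16m)$, but your choice already matches the stated bound $(1-\epsilon/(16m))\ln|\M|$ exactly; moreover, your explicit verification that $\thetap$ is in the $(\delta\eta')$-interior is more detailed than the paper's, which simply asserts it.
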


\noindent
Before we prove this lemma, we show that  there exists {\em some} point $\tilde{\theta}$ in the ${\rm poly}(\nfrac{1}{m})$ interior of $P(\M).$ Such a point is  then used to show the existence of $\thetap.$
Note that we do not need to bound $g(\tilde{\theta})$ and only use the fact that it is non-negative.

\begin{lemma}[Same as Lemma \ref{lem:interiority0}]
\label{lem:interiority}
Let $\M \subseteq \{0,1\}^m$ and  $P(\M)=\left\{\vec{x}\in \RR^m_{\geq 0}: A^= \vec{x}=\vec{b}, \; A^{\leq}\vec{x}\leq \vec{c}\right\}$ be such that all the entries in $A^{\leq},\vec{c}\in \frac{1}{k_l} \cdot \ZZ$ and their  absolute values are at most $k_u.$
Then there exists a ${\tilde{\theta}}\in P(\M)$ such that $\tilde{\theta}$ is in the $\frac{1}{k_lk_u m^{1.5}}$-interior of $P(\M).$ Thus, if $k_u,k_l = {\rm poly}(m),$ then $\tilde{\theta}$ is in ${\rm poly}(\nfrac{1}{m})$-interior of $P(\M).$
\end{lemma}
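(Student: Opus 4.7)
The plan is to take $\tilde{\theta}$ to be the average of an affinely independent set of vertices of $P(\M)$ that spans its affine hull, and then use the integrality of the description $(A^{\leq},\vec{c})$ to lower bound the slack $c - a^\top \tilde{\theta}$ in every facet-defining inequality. Let $d \defeq \dim P(\M)$ and fix vertices $v_0, v_1, \ldots, v_d$ of $P(\M)$ that are affinely independent and whose affine span equals the affine hull of $P(\M)$; such a collection exists by the definition of dimension. Set
$$\tilde{\theta} \defeq \frac{1}{d+1}\sum_{i=0}^{d} v_i.$$
As a convex combination of vertices, $\tilde{\theta}\in P(\M)$, and in particular $A^= \tilde{\theta} = \vec{b}$.

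The key step is to control the slack in each inequality $(a,c) \in A^{\leq}$ that genuinely cuts the affine space $\{x : A^= x = \vec{b}\}$ (redundant inequalities impose no interiority requirement). Since $(A^=,\vec{b})$ is assumed to be the maximal linearly independent set of equalities satisfied by all of $P(\M)$, no such $(a,c)$ can hold with equality on all of $P(\M)$; otherwise it would be an implicit equality, contradicting maximality. Because $v_0,\ldots,v_d$ affinely span the affine hull of $P(\M)$, this forces at least one vertex $v_{i^\star}$ to satisfy $a^\top v_{i^\star} < c$ strictly. The integrality hypothesis now kicks in: both $a^\top v_{i^\star}$ (since each $v_i \in \{0,1\}^m$ and the entries of $a$ lie in $\frac{1}{k_l}\mathbb{Z}$) and $c$ belong to $\frac{1}{k_l}\mathbb{Z}$, so the strict inequality upgrades to $a^\top v_{i^\star} \leq c - \frac{1}{k_l}$. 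Averaging,
$$a^\top \tilde{\theta} \;\leq\; \frac{1}{d+1}\Bigl(d\cdot c + \bigl(c - \tfrac{1}{k_l}\bigr)\Bigr) \;=\; c - \frac{1}{(d+1) k_l}.$$
An analogous argument handles the non-negativity constraints $x_e \geq 0$ (some $v_i$ has $v_i(e) = 1$ unless $x_e = 0$ is an implicit equality, already in $A^=$), yielding slack at least $\frac{1}{d+1}$.

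Finally, I convert slack into Euclidean distance within the affine space $\{x : A^= x = \vec{b}\}$. Since every entry of $a$ is bounded by $k_u$ in absolute value, $\|a\| \leq k_u \sqrt{m}$, so the $\mathbb{R}^m$-distance from $\tilde{\theta}$ to the hyperplane $\{x : a^\top x = c\}$ is at least $\frac{1}{(d+1) k_l k_u \sqrt{m}}$. The distance from $\tilde{\theta}$ to the violation set within the affine space is at least this quantity, since the intersection of the hyperplane with the affine space is contained in the hyperplane itself. Taking the minimum over all inequalities and using $d+1 \leq m+1 = O(m)$ yields $\eta = \Omega\bigl(\tfrac{1}{k_l k_u m^{1.5}}\bigr)$, matching the claimed bound up to an absolute constant. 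The only conceptual obstacle is the first step---arguing that each non-redundant inequality of $A^{\leq}$ must leave at least one vertex strictly in its interior---and this is forced precisely by the maximality of $A^=$; everything else is an averaging calculation combined with the integrality hypothesis and an elementary distance-to-hyperplane estimate.
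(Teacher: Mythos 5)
Your proof is correct and follows essentially the same approach as the paper's: take $\tilde{\theta}$ to be the average of a maximal affinely independent set of vertices, use the $\frac{1}{k_l}$-integrality to force a slack of at least $\frac{1}{(d+1)k_l}$ in each non-redundant inequality, and convert slack to Euclidean distance via $\|a\| \leq k_u\sqrt{m}$. The only (cosmetic) differences are that you justify the existence of a vertex off each facet by maximality of $A^=$ rather than by the dimension-drop of facets, and you treat the $x_e \geq 0$ constraints explicitly, which the paper's proof leaves implicit.
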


\begin{proof}
Let $r$ be the dimension of $P(\M)$. Then, there exist $r+1$ affinely independent vertices $\vec{z}_0,\ldots,\vec{z}_r \in \{0,1\}^m$. We claim that  $\tilde{\theta} \defeq \frac{1}{r+1}\sum_{i=0}^r \vec{z}_i$ satisfies the conclusion of the lemma.
 Let  $$F_i \defeq \{x \in P(\M): A^{\leq}_i \vec{x}=   c_i\}$$ be a facet of $P(\M)$ where the  inequality constraint is one of  $(A^{\leq}, \vec{c}).$ Since the dimension of a facet is one less than that of a polytope, at least one of $z_0,z_1,\ldots, z_r,$  say $\vec{z}_0,$ does not lie in $F$ and, hence,  $A^\leq_i z_0<  c_i.$  Therefore,  $$A^{\leq}_i\vec{z}_0\leq c_i-\frac{1}{k_l}$$ since all coefficients are $\nfrac{1}{k_l}$-integral. Thus, the distance of $\vec{z}_0$ from $F$ is at least $$\frac{1}{k_l \|A^{\leq}_i\|}\geq \frac{1}{k_lk_u \sqrt{m}}.$$ Hence, the distance of $\tilde{\theta}$ from $F$ is at least $\frac{1}{m}\cdot \frac{1}{k_uk_l \sqrt{m}}.$ Since this argument works for any facet $F,$ the distance of $\tilde{\theta}$ from every facet of $P(\M)$ is at least $\frac{1}{m}\cdot \frac{1}{k_uk_l \sqrt{m}}.$
\end{proof}

\noindent
Henceforth, we assume that $k_l, k_u = {\rm poly}(m).$

\begin{proof}[Proof of Lemma \ref{lem:interior-count}.]
Let $\tilde{\theta}$ be the point in the interior of $P(\M)$ as guaranteed by Lemma \ref{lem:interiority}. Consider the point
$$\thetap \defeq \left(1-\frac{\epsilon}{16m^3}\right) \theta^\star + \frac{\epsilon}{16m^3} \tilde{\theta}.$$ Since $\tilde{\theta}$ is in the ${\rm poly}(\nfrac{1}{m}, \eps)$-interior of $P(\M),$  $\thetap$ must also be in the ${\rm poly}(\nfrac{1}{m}, \eps)$-interior of $P(\M).$
 On the other hand, since $g( \cdot)$ is a concave and non-negative function of $\vtheta$, we have that
$$g(\thetap)\geq  \left(1-\frac{\epsilon}{16m^3}\right) \cdot g(\theta^\star)\geq g\left(\theta^\star\right)-\frac{\epsilon}{16},$$ where we used the fact that $\ln |\M|\leq m$.

\end{proof}

\subsection{A Separation Oracle for Interiority}
Our final ingredient is a test for checking whether a point $\vtheta$ is in the inverse-polynomial interior of $P(\M).$  We show that the separation oracle for $P(\M)$ can be used to give such a test. We state the result in generality for any polyhedron $P$. For any $\eta >0$, let  $$P_\eta \defeq \{\vx : \vy\in P \;\; \forall \vy \textrm{ such that } \|\vx-\vy\| \leq \eta\}$$ denote the set of $\eta$-interior points in $P$.

\begin{lemma}\label{lem:test-interior}
There exists an algorithm that given a separation oracle  for a  polyhedron $P,$  a set of maximal linearly independent equalities $(A^=,b)$ satisfied by $P$, an $\eta>0$, and $\vtheta\in \RR^m$, either
\begin{enumerate}
\item asserts that $\vtheta \in P_{\nfrac{\eta}{2m}}$, i.e., $\vtheta$ is in the  $\nfrac{\eta}{2m}$-interior of $P$, or,
\item returns  $a$ such that  $\langle \va, \vy \rangle <\langle \va ,\vtheta \rangle $ for each $\vy\in P_{\eta}$, or equivalently, a separating hyperplane which separates the $\eta$-interior of $P$ from $\vtheta$.
\end{enumerate}
\end{lemma}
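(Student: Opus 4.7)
The plan is to test $\vtheta$ for interiority by querying the separation oracle at a small collection of perturbations of $\vtheta$ that span the affine hull of $P$. First I would verify $A^=\vtheta=b$; if this fails, one of the equality rows supplies a separating hyperplane that rules $\vtheta$ out of $P$ entirely, hence out of $P_\eta$. So assume $A^=\vtheta=b$, let $d=m-\mathrm{rank}(A^=)$, and fix an orthonormal basis $u_1,\ldots,u_d$ of $\ker(A^=)$. For each $i\in\{1,\ldots,d\}$, query the separation oracle on $\vtheta+\eta u_i$ and $\vtheta-\eta u_i$; these points lie in the affine hull since $A^= u_i=0$, so the oracle is applicable.

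If all $2d$ queries report membership in $P$, then the convex hull of the queried points, namely $\{\vtheta+\sum_i\alpha_i u_i:\sum_i|\alpha_i|\le\eta\}$, is contained in $P$. This cross-polytope, in the orthonormal $(u_i)$-coordinates, is an $\ell_1$-ball of radius $\eta$, and by Cauchy--Schwarz it contains the $\ell_2$-ball of radius $\eta/\sqrt{d}\ge\eta/\sqrt{m}\ge\eta/(2m)$ centered at $\vtheta$. Since this ball is contained in the affine hull, we may assert $\vtheta\in P_{\eta/(2m)}$.

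Otherwise some query fails, and without loss of generality the oracle returns, for $\vtheta+\eta u_i$, a hyperplane $(\va,c)$ valid for $P$ with $\langle\va,\vtheta+\eta u_i\rangle>c$. I claim $\va$ is the required separator. Indeed, for any $y\in P_\eta$, the point $y+\eta u_i$ lies in the affine hull (as $A^= u_i=0$) and within Euclidean distance $\eta$ of $y$, so $y+\eta u_i\in P$ and thus $\langle\va,y+\eta u_i\rangle\le c$. Combining, $\langle\va,y\rangle\le c-\eta\langle\va,u_i\rangle<\langle\va,\vtheta+\eta u_i\rangle-\eta\langle\va,u_i\rangle=\langle\va,\vtheta\rangle$, as desired. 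The case of failure on $\vtheta-\eta u_i$ is symmetric.

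The conceptual obstacle is that a naive coordinate-axis perturbation would not stay in the affine hull; restricting to an orthonormal basis of $\ker(A^=)$ is what makes both arguments work simultaneously and what produces the explicit $\eta/\sqrt{m}$ gap (easily weakened to $\eta/(2m)$ as stated). The algorithm issues $O(m)$ oracle calls and runs in polynomial time given the cost of the orthogonalization of $\ker(A^=)$. The same argument goes through verbatim if the separation oracle is only weak, with the usual accuracy losses absorbed into the constant factor between $\eta/\sqrt{m}$ and $\eta/(2m)$.
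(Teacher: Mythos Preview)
Your proof is correct and follows the same overall strategy as the paper: query the separation oracle at a small spanning set of perturbations of $\vtheta$ in the affine hull, declare interiority if all queries succeed, and otherwise translate the returned halfspace by the failing perturbation to separate $\vtheta$ from $P_\eta$. The only difference is the choice of test points: the paper uses the $d+1$ vertices of a regular simplex of edge length $\eta$ centered at $\vtheta$, whereas you use the $2d$ vertices $\vtheta\pm\eta u_i$ of a cross-polytope built from an orthonormal basis of $\ker(A^=)$. Your inscribed-ball argument via Cauchy--Schwarz is slightly cleaner and in fact gives the better radius $\eta/\sqrt{d}$ (which you then weaken to $\eta/(2m)$ to match the statement), while the separation-by-translation step is identical in both arguments.
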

\begin{proof}
We use the the separation oracle for $P$ on a collection of a small number of  points close to $\vtheta$ to deduce  if $\vtheta$ is in the interior of $P$.  Even if one of these points is not in $P$, we use a separating hyperplane for such a point to separate
$\vtheta$ from the  interior of $P$. First, we describe the procedure when $P$ is full dimensional. Let $\vx_0,\ldots \vx_{m}$ form an $\eta$-regular simplex with center $\vtheta$, i.e.,
$$\frac{1}{m}\sum_{i=0}^m \vx_i=\vtheta \ \  {\rm and}  \ \ \|\vx_i-\vx_j\|= \eta$$ for each $i\neq j$. Such $x_0,\ldots, x_m$ can be found by starting with a regular simplex and then translating and scaling it. Now, the algorithm applies the separation oracle for each of $\vx_i$. Suppose the separation oracle asserts that $\vx_i \in P$ for each $0\leq i \leq m$. In this case, we assert that $\vtheta\in P_{\nfrac{\eta}{2m}}$. Observe that since each vertex of the simplex is in $P$, we must have that the whole simplex is in $P$. Since the simplex is regular where each edge is length $\eta$ and the center is $\vtheta$, there exists a ball of radius $\frac{\eta}{2m}$ centered at $\vtheta$ which is contained in the simplex and, hence, in $P$. Thus, $\vtheta$ is in $\frac{\eta}{2m}$-interior of $P$ as asserted.

Now, suppose that $\vx_i\notin P$ for some $i$ and let $\va$ be the separating hyperplane, i.e., $\langle \va, \vy \rangle <  \langle \va, x_i \rangle$ for each $\vy\in P$. Then consider the constraint $\langle \va, \vy \rangle <\langle \va ,\vtheta \rangle$. We claim that it is satisfied by each $\vy\in P_{\eta}$. Let $\vy\in P_{\eta}$ and consider
$$\vy' \defeq \vy -\vtheta+\vx_i.$$ Since $\|\vtheta-\vx_i\|\leq \eta$ and $\vy\in P_{\eta}$, we have that $\vy'\in P$ which implies that $\langle \va, \vy' \rangle  < \langle \va, \vx_i \rangle$. But this implies that
$$\langle \va,  \vy \rangle =\langle \va , \vy' -\vx_i+\vtheta \rangle =\langle \va, \vy' \rangle  - \langle \va,  \vx_i  \rangle+ \langle \va,\vtheta \rangle <\langle \va, \vtheta \rangle$$ which gives us the required separating hyperplane.

Now consider the case where $P$ is not full dimensional and let $r$ be the dimension of $P$. Recall that in this case we define interior of $P$ by restricting our attention to points in the affine space $\{x:A^=\vx=\vec{c}\}$. We modify the algorithm to chose a $r$-dimensional simplex in this affine space and check whether each of the vertices of the simplex is in $P$. The analysis is identical in this case.
\end{proof}

\subsection{The Ellipsoid Algorithm for Theorem \ref{thm:reverse}}

Now we present  the ellipsoid algorithm to approximately solve the convex program  $\min_{\vtheta\in P(\M)}g(\vtheta)$ and  prove Theorem~\ref{thm:reverse}. The starting ellipsoid is a ball of radius $\sqrt{m}$ that contains $[0,1]^m$ which contains $P(\M).$ Let us fix an $\epsilon>0$ and apply Lemma~\ref{lem:interior-count} to obtain  $\eta$ which is a polynomial in $\nfrac{1}{m}$ and ${\epsilon}$ and guarantees the existence of $\thetap$ in the $\eta$-interior such that $g(\thetap)\geq \ln |\M| -\frac{\epsilon}{16}$. In the range $(0,\ln |\M|]$
 we perform a binary search for the highest $\zeta$ such that the set $$S({\zeta},\eta) \defeq \{\vtheta\in P_{\eta} (\M):g(\vtheta)\geq\zeta\}$$ is non-empty when we search $\zeta$ within an accuracy $\nfrac{\epsilon}{16}$.

Given a  guess $\zeta$ for $g(\thetap),$ at an  iteration $t$ of the ellipsoid algorithm, we use the center $\theta_t$  of the ellipsoid as a guess for $\thetap.$ Ideally, we would pass $\theta_t$ to the max-entropy oracle which would either assert that $g(\theta_t) \geq \zeta - \nfrac{\eps}{16}$ or returns a $\lambda_t$ such that $f_{\theta_t}(\lambda_t) \leq \zeta +\nfrac{\eps}{16}.$ In the first case we  stop and return $\theta_t.$ In the latter case, we continue the search and use this $\lambda_t$ returned by the max-entropy oracle to update the ellipsoid into one with a smaller volume. However, to get the guarantee on the running time, we need to first check that the candidate point $\theta_t$ is in the $\eta$-interior of $P(\M).$ Here, we use the separation oracle from Lemma \ref{lem:test-interior}. We proceed to the max-entropy oracle only if this separation oracle asserts that  the point $\theta_t$ is in the $\nfrac{\eta}{2m}$-interior of $P(\M).$  In case this separation oracle outputs a hyperplane separating $\theta_t$ from $P_\eta(\M),$ we use this hyperplane to update the ellipsoid.
The key technical fact we show is that when $|\zeta - g(\thetap)| \leq \nfrac{\eps}{16},$ $\thetap$ is always contained in every ellipsoid.
Thus, once the radius of the ellipsoid becomes small enough,
we can output its center as a guess for $\thetap.$
Since the radius of the final ellipsoid is small and contains $\thetap$, the following lemma implies that the value of $g(\cdot)$ at the center of the ellipsoid is close enough to $g(\thetap)$ and, hence, by Lemma \ref{lem:interior-count}, to $g(\theta^\star).$

\begin{lemma}\label{lem:marginal_entropy}
Let $\vec{\theta},\vec{\theta}'\in P(\M)$ such that $\|\vec{\theta}-\vec{\theta}'\|\leq \epsilon$ and $\vec{\theta}$ is in $\eta$-interior of $P(\M)$. Then
$$g({\theta}')  \geq (1-\nfrac{\epsilon}{\eta})g(\theta).$$
\end{lemma}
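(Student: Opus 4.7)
The plan is to exploit two facts: the max-entropy value $g(\cdot)$ is concave and non-negative on $P(\M),$ and the $\eta$-interiority of $\theta$ gives us room to realize $\theta'$ as a convex combination that places most of its weight on $\theta.$ The key move is to reflect $\theta'$ across $\theta$ (rescaling appropriately) to produce a witness point $w \in P(\M)$; concavity of $g$ applied to the resulting combination then yields the inequality almost immediately.

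Concretely, I would first dispose of the trivial case $\epsilon > \eta,$ in which $(1 - \epsilon/\eta)g(\theta) \leq 0 \leq g(\theta'),$ and so henceforth assume $\epsilon \leq \eta.$ Then I would define
\[
w \;\defeq\; \theta + \frac{\eta}{\epsilon}\bigl(\theta' - \theta\bigr),
\]
and verify the two conditions needed for $w \in P(\M).$ The norm bound $\|w - \theta\| = (\eta/\epsilon)\|\theta' - \theta\| \leq \eta$ follows by construction, and the linear constraints $A^= w = \vec{b}$ hold because both $\theta$ and $\theta'$ lie in $P(\M)$ and $A^=$ is linear, so the perturbation direction $\theta' - \theta$ lies in $\ker A^=.$ Since $\theta$ is in the $\eta$-interior of $P(\M),$ these two conditions together force $w \in P(\M).$

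With $w$ in hand, I would observe the algebraic identity
\[
\theta' \;=\; \Bigl(1 - \tfrac{\epsilon}{\eta}\Bigr)\,\theta \;+\; \tfrac{\epsilon}{\eta}\, w,
\]
expressing $\theta'$ as an explicit convex combination of $\theta$ and $w.$ Applying concavity of $g$ on $P(\M)$ and then dropping the non-negative term $(\epsilon/\eta)\,g(w) \geq 0$ gives
\[
g(\theta') \;\geq\; \Bigl(1 - \tfrac{\epsilon}{\eta}\Bigr) g(\theta) + \tfrac{\epsilon}{\eta}\, g(w) \;\geq\; \Bigl(1 - \tfrac{\epsilon}{\eta}\Bigr) g(\theta),
\]
which is the desired bound.

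There is no real obstacle here; the only subtlety is keeping track of which conditions the definition of $\eta$-interior requires (namely, both the ball constraint and membership in the affine hull $\{x : A^= x = \vec{b}\}$), and verifying that the reflected point $w$ meets both. The concavity and non-negativity of $g$ are immediate from its definition as a supremum of entropies of distributions with prescribed marginals, and I would quote them without proof.
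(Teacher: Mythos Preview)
Your proof is correct and follows essentially the same approach as the paper: both construct the identical auxiliary point $w = \theta'' = \theta + (\eta/\epsilon)(\theta' - \theta)$, verify it lies in $P(\M)$ via the $\eta$-interiority of $\theta$, and express $\theta'$ as the convex combination $(1-\epsilon/\eta)\theta + (\epsilon/\eta)w$. The only cosmetic difference is that you invoke concavity of $g$ directly, whereas the paper works one level down by mixing the optimal distribution $p^{\lambda^\star}$ for $\theta$ with an arbitrary distribution realizing $\theta''$ and applying concavity of the entropy function; your explicit disposal of the degenerate case $\epsilon > \eta$ is also a detail the paper leaves implicit.
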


\begin{proof}
Let ${\vec{\lambda}}^{\star}$ be an optimal solution to $\inf_{\vec{\lambda}}f_{\vec{\theta}}(\vec{\lambda})$. Thus, $p^{{\vlambda}^{\star}}$ is the optimal solution to primal convex program and $H(p^{{\vlambda}^{\star}})=\inf_{\vec{\lambda}}f_{\vec{\theta}}(\vec{\lambda})$. We  construct a probability distribution ${q}$ which is feasible for the primal convex program with parameter $\vec{\theta}'$ and $H({q})\geq (1-\nfrac{\epsilon}{\eta})H(p^{{\vlambda}^{\star}})$, thus, proving the lemma.
We begin with a claim.
\begin{claim}
Let $\vec{\theta}'' \defeq \frac{\vec{\theta}'-(1-\nfrac{\epsilon}{\eta})\vec{\theta}}{\nfrac{\epsilon}{\eta}}$. Then $\vec{\theta}''\in P(\M)$.
\end{claim}
\begin{proof}
First, observe that any equality constraint for $P(\M)$ of the form $\langle A^=_i,\vec{x}\rangle=b_i$ is satisfied by both $\vec{\theta}$ and $\vec{\theta}'$. Therefore,
$$\langle A^=_i,\vec{\theta}''\rangle=\frac{\langle A^=_i,\vec{\theta}'\rangle-(1-\nfrac{\epsilon}{\eta})\langle A^=_i,\vec{\theta}\rangle}{\nfrac{\epsilon}{\eta}}=\frac{b_i-(1-\nfrac{\epsilon}{\eta})b_i}{\nfrac{\epsilon}{\eta}}=b_i.$$
Thus, it is enough to show that $\|\vec{\theta}''-\vec{\theta}\|\leq \eta.$ To see this note that
\begin{align*}
\|\vec{\theta}''-\vec{\theta}\|=\left\|\frac{\vec{\theta}'-(1-\nfrac{\epsilon}{\eta})\vec{\theta}}{\nfrac{\epsilon}{\eta}}-\vec{\theta}\right\|=\left\|\frac{\vec{\theta}'-\vec{\theta}}{\nfrac{\epsilon}{\eta}}\right\|\leq \frac{\epsilon}{\nfrac{\epsilon}{\eta}}=\eta
\end{align*}
where the last inequality follows from the fact that $\|\vec{\theta}-\vec{\theta}'\|\leq \epsilon$.
\end{proof}

\noindent
Let ${q}''$ be an arbitrary probability measure over $\M$ such that the marginals of ${q}''$ equal $\vec{\theta}''$, i.e., $\theta''_e=\sum_{M\in \M: e\in M}{q''}_M$.
Let ${q}$ be the probability measure defined to be
$$q \defeq (1-\nfrac{\epsilon}{\eta})p^{{\vlambda}^{\star}}+\nfrac{\epsilon}{\eta}q''.$$ Then $$\sum_{M\in \M:e\in M}q_M=(1-\nfrac{\epsilon}{\eta})\theta_e+\nfrac{\epsilon}{\eta}\theta''_e=\theta'_e.$$ By concavity and non-negativity of the entropy function, we have
$$H(q)\geq (1-\nfrac{\epsilon}{\eta})H(p^{{\vlambda}^{\star}})+\nfrac{\epsilon}{\eta}H(q'')\geq (1-\nfrac{\epsilon}{\eta})H(p^{{\vlambda}^{\star}})$$
as required.
\end{proof}

\noindent
We now move on to the description of the ellipsoid algorithm and subsequently complete the proof of Theorem \ref{thm:reverse}.

\begin{enumerate}
\item {\bf Input}
\begin{enumerate}
\item An error parameter $\eps>0.$
\item A maximally linearly independent set of equalities $(A^=,b)$ for $P(\M).$
\item A separation oracle for the facets $(A^\leq,c)$ of $P( \M).$
\item A max-entropy oracle for $\M.$
\item A guess $\zeta \in (0,m]$ (for  $g(\thetap)$).
\end{enumerate}

\item {\bf Initialization}
\begin{enumerate}
\item Let $\eta$ be as in Lemma \ref{lem:interior-count}.
\item  Let  $E_0 \defeq E(B_{0},\vec{c}_{0})$ be a sphere with radius $R={\sqrt{m}}$ containing $[0,1]^m$ restricted to  the affine space $A^=x=b.$
\item Set $t=0$
\end{enumerate}

\item {\bf Repeat} until  the ellipsoid $E_t$ is contained in a ball of radius at most $\frac{\epsilon\eta}{16m}.$
\begin{enumerate}
\item Given the ellipsoid  $E_{t} \defeq E(B_{t},\vec{c}_{t})$, set  $\vtheta_t \defeq\vec{c}_{t}$.

\item Check using the separation oracle for $P(\M)$ as in Lemma~\ref{lem:test-interior}  {\bf if}  $\vtheta_t\in P_{\frac{\eta}{2m}}(\M)$
\begin{enumerate}
\item {\bf then} {\bf goto} Step (\ref{rev-call-oracle})

\item\label{hyperplane-sep} {\bf else} let  $e_t$ be the separating hyperplane returned as in Lemma \ref{lem:test-interior}, i.e., $\langle e_t,\vtheta -\vtheta_t\rangle\geq 0$ for all $\vtheta \in P_{\eta}(\M)$ and {\bf goto} Step (\ref{rev-update-ellipse}).
\end{enumerate}

\item\label{rev-call-oracle} Call the max-entropy oracle  with input $\vtheta_t$, $\zeta$ and $\nfrac{\epsilon}{16}$.

\item  {\bf If} $g(\vtheta_t)\geq \zeta-\nfrac{\epsilon}{16}$
\begin{enumerate}
\item {\bf then} {\bf return}  $\vtheta_t$ and {\bf stop}.
\item\label{hyperplane-entropy}  {\bf else}  the max-entropy oracle returns $\vlambda_t$ such that
$f_{\vtheta_t}(\vlambda_t) \leq \zeta+\nfrac{\epsilon}{16}$. Let $e_t=\vlambda_t$.
\end{enumerate}

 \item\label{rev-update-ellipse} Compute the ellipsoid $E_{t+1}$ to be the smallest ellipsoid containing the half-ellipsoid
 $\{\vtheta \in E_t: \langle e_t,\vtheta-\vtheta_t \rangle \geq 0\}$ and restricted to  the affine space $A^=x=b.$

\item $t=t+1$ .
\end{enumerate}

\item Let $T=t$ call the max-entropy oracle with input $\theta_T, \zeta$ and $\nfrac{\eps}{16}.$

\item {\bf If}  $g(\theta_T) \geq \zeta  - \nfrac{\eps}{16}$
\begin{enumerate}
 \item  {\bf then}  {\bf return} $\theta_T$ and {\bf stop}.
\item  {\bf else} {\bf return} $g(\thetap) < \zeta$ and {\bf stop}  ($\zeta$ is not a good guess for $g(\thetap)$).
\end{enumerate}
\end{enumerate}

\noindent
It is clear that any call to the approximate optimization oracle is made for points $\vtheta$ which are in $\frac{\eta}{2m}$ interior. Thus, the running time of the algorithm is polynomially bounded by $m$ and $\nfrac{1}{\eps}$ for each call. To bound the number of iterations note that the starting ellipsoid has radius $\sqrt{m}$ and the final ellipsoid ${\rm poly}(\nfrac{1}{m},\eps).$ Hence, the number of iterations can be bounded by  Theorem \ref{thm:john} by ${\rm poly}(m, \nfrac{1}{\eps}).$ It remains to prove the correctness of the algorithm.

Towards this, let ${\zeta^\circ}$ be the largest guess of $\zeta$ for which the algorithms returns a positive answer and let ${\vtheta^\circ}$ be the point returned by the algorithm for guess $\zeta^\circ$. We return $Z^\circ\defeq e^{{\zeta^\circ}}$ as our estimate of $|\M|$.
To complete the proof of Theorme \ref{thm:reverse}, we show  that $Z^\circ$ satisfies
\begin{equation}\label{main-reverse}
(1-\epsilon)|\M|\leq Z^\circ\leq (1+\epsilon)|\M|.
\end{equation}

\noindent
First, we prove the following lemma.
\begin{lemma}\label{cl:survive-reverse}
Consider the run of the ellipsoid algorithm for a guess $\zeta$ and let the hyperplane $\{\vtheta: \langle e_t,\vtheta-\vtheta_t \rangle \geq 0\}$ be used as a separating hyperplane in some iteration of the algorithm. Then this separating hyperplane does not cut any point $\theta$ such that $\theta \in P_{\eta}(\M)$ and $g(\theta) \geq \zeta+ \nfrac{\eps}{16}.$
\end{lemma}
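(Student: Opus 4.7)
The plan is to split into the two cases based on how $e_t$ is produced by the algorithm: either from the interiority separation oracle in Step (\ref{hyperplane-sep}), or from the max-entropy oracle's output $\lambda_t$ in Step (\ref{hyperplane-entropy}).

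In the first case, the hyperplane $e_t$ is returned by Lemma \ref{lem:test-interior} with the guarantee that every point of $P_\eta(\M)$ lies on the correct side, i.e., $\langle e_t, \theta - \theta_t\rangle \geq 0$ for all $\theta \in P_\eta(\M)$. Thus no point of $P_\eta(\M)$ is cut, regardless of the value of $g(\theta)$, so the claim is immediate.

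The interesting case is the second one. Here $\theta_t \in P_{\eta/(2m)}(\M)$, the oracle has returned $\lambda_t$ with $f_{\theta_t}(\lambda_t) \leq \zeta + \nfrac{\epsilon}{16}$, and we set $e_t = \lambda_t$. The key observation is that $f_\theta(\lambda) = \langle \lambda, \theta\rangle + \ln \sum_{M\in \M} e^{-\lambda(M)}$ is affine in $\theta$ for every fixed $\lambda$. Consequently, for any $\theta$,
\[
f_\theta(\lambda_t) - f_{\theta_t}(\lambda_t) \;=\; \langle \lambda_t, \theta - \theta_t\rangle.
\]
Now fix any $\theta \in P_\eta(\M)$ with $g(\theta) \geq \zeta + \nfrac{\epsilon}{16}$. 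By weak duality for the max-entropy program, $g(\theta) \leq f_\theta(\lambda_t)$. Combining these two facts,
\[
\zeta + \frac{\epsilon}{16} \;\leq\; g(\theta) \;\leq\; f_\theta(\lambda_t) \;=\; f_{\theta_t}(\lambda_t) + \langle \lambda_t, \theta - \theta_t\rangle \;\leq\; \zeta + \frac{\epsilon}{16} + \langle \lambda_t, \theta - \theta_t\rangle,
\]
which forces $\langle e_t, \theta - \theta_t\rangle = \langle \lambda_t, \theta - \theta_t\rangle \geq 0$. Hence $\theta$ is not cut off by the hyperplane, as desired.

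There is no real obstacle here beyond identifying the right identity: once one notices that $f_\theta(\lambda)$ is linear in $\theta$ (because only the $\langle \lambda, \theta\rangle$ term depends on $\theta$), the gradient-like inequality is automatic and the rest is weak duality. Note that the looseness $\nfrac{\epsilon}{16}$ on both sides cancels exactly, which is precisely why the algorithm uses the same tolerance when querying the max-entropy oracle as in the binary search over $\zeta$; this is what prevents the near-optimal target point $\theta^\bullet$ (which satisfies $g(\theta^\bullet) \geq \ln|\M| - \nfrac{\epsilon}{16}$ by Lemma \ref{lem:interior-count}) from ever being eliminated when $\zeta$ is a good guess.
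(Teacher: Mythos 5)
Your proof is correct and follows essentially the same route as the paper's: split on which step produced $e_t$, use the identity $\langle \lambda_t, \theta - \theta_t\rangle = f_\theta(\lambda_t) - f_{\theta_t}(\lambda_t)$ (which you correctly ground in the fact that $f_\theta(\lambda)$ is affine in $\theta$), and then combine the oracle's upper bound on $f_{\theta_t}(\lambda_t)$ with weak duality $g(\theta)\leq f_\theta(\lambda_t)$ and the hypothesis $g(\theta)\geq \zeta + \nfrac{\eps}{16}$.
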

\begin{proof}
If the hyperplane $e_t$ is obtained in Step (\ref{hyperplane-sep}), then it is clearly a valid inequality for $P_{\eta}(\M)$ and therefore does not cut off any of its points. Otherwise, suppose $e_t=\vlambda_t$ is obtained in Step (\ref{hyperplane-entropy}). Then
$$f_{\vtheta_t}(\vlambda_t)=\langle \vlambda_t,\vtheta_t \rangle+ \ln{ Z^{\vlambda_t}} \leq \zeta +\frac{\epsilon}{16}.$$ Hence,
\begin{eqnarray}\label{eq-3}
\langle \vlambda_t,\vtheta-\vtheta_t \rangle=f_{\vtheta}(\vlambda_t)-f_{\vtheta_t}(\vlambda_t) \geq f_{\vtheta}(\vlambda_{t})-\zeta-\frac{\epsilon}{16}.
\end{eqnarray}
Thus, by the assumption in the lemma,  $$f_{\vtheta}(\vlambda_t)\geq g(\theta) \geq \zeta+\nfrac{\epsilon}{16}$$ and, therefore, by \eqref{eq-3}, $\vtheta$ satisfies the constraint $\langle e_t,\vtheta-\vtheta_t\rangle\geq 0$.
\end{proof}

\noindent
We now show that $\zeta^\circ\geq \ln|\M|-\frac{4\epsilon}{16}$. Consider the run of the algorithm for
$$\zeta'\in \left[\ln|\M|-\frac{4\epsilon}{16},\ln|\M|-\frac{3\epsilon}{16}\right].$$ Since
$$g(\thetap)\geq \ln|\M|-\frac{\epsilon}{16}\geq \zeta'+\frac{\epsilon}{16},$$ $\thetap$ cannot be cut off in any iteration by Lemma~\ref{cl:survive-reverse}. If the ellipsoid returns an answer when run with guess $\zeta'$ then
$$\zeta^\circ\geq \zeta'\geq \ln|\M|-\frac{4\epsilon}{16}$$ as claimed. Otherwise, we end with an ellipsoid $E_T$ of radius at most $\frac{\epsilon\eta}{16m}$. Let $\theta_T$ be the center of the ellipsoid $E_T$.  Since $\thetap\in E_T$, we have that $\|\thetap-\theta_T\|\leq \frac{\epsilon\eta}{16 n}$. Since $\thetap$ is in $\eta$-interior, from Lemma~\ref{lem:marginal_entropy} it follows that
\begin{eqnarray*}
g(\theta_T)\geq \left(1-\frac{\frac{\epsilon\eta}{16m}}{\eta}\right)g(\thetap)\geq \left(1-\frac{\epsilon}{16m}\right)g(\thetap)> \ln |\M|-\frac{2\epsilon}{16}.
\end{eqnarray*}
\noindent
This contradicts the fact that the algorithm did not output $\vtheta_T$ in the last iteration and asserted
$$g(\theta_T)\leq  \zeta'+ \frac{\epsilon}{16} \leq  \ln |\M|-\frac{2\epsilon}{16}.$$
Since
$\zeta^\circ\geq \ln |\M|-\frac{4\epsilon}{16}$ and $\zeta^\circ\leq \ln |\M|+\frac{\epsilon}{16}$. We obtain that $(1-\epsilon)|\M|\leq Z^\circ\leq (1+\epsilon)|\M|$ proving \eqref{main-reverse} and completing the proof of Theorem \ref{thm:reverse}.

\bibliographystyle{abbrv}
\bibliography{entropy}

\appendix

\newpage

\section{Omitted Proofs}\label{sec:omit}

\subsection{Duality of the Max-Entropy Program}\label{sec:omit-duality}
\begin{lemma}\label{prop:maxentropy}
For a point $\vec{\theta}$ in the interior of $P(\M),$  there exists a unique distribution $p^\star$ which attains the max-entropy while satisfying
$$\sum_{M \in \M} p^\star_M \one_M=\vec{\theta}.$$ Moreover, there exists $\lambda^\star :[m] \mapsto \mathbb{R}$ such that $p^\star_M\propto e^{-\lambda^\star(M)}$ for each $M\in \M$.
\end{lemma}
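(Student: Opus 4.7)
The plan is to establish the three claims in order: existence of a maximizer, uniqueness, and the exponential form. For existence, I will note that the feasible set
$$\Delta_\theta \defeq \left\{p \in \mathbb{R}^\M_{\geq 0} : \sum_{M \in \M} p_M = 1,\ \sum_{M \ni e} p_M = \theta_e \ \forall e \in [m]\right\}$$
is non-empty (since $\theta \in P(\M)$ means $\theta$ is a convex combination of the vertices $\one_M$) and compact (closed, bounded subset of the simplex), and the Shannon entropy $H(p) = \sum_M p_M \ln(1/p_M)$ is continuous on $\Delta_\theta$. Hence a maximizer $p^\star$ exists. Uniqueness is immediate from the fact that $H$ is strictly concave on the simplex and $\Delta_\theta$ is convex: if two distinct maximizers existed, their midpoint would have strictly larger entropy, a contradiction.

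For the exponential form, I will apply Lagrangian duality. Writing the Lagrangian
$$L(p,\lambda,\nu) = \sum_{M \in \M} p_M \ln \tfrac{1}{p_M} + \sum_{e \in [m]} \lambda_e \left(\theta_e - \sum_{M \ni e} p_M\right) + \nu \left(1 - \sum_M p_M\right),$$
the stationarity condition $\partial L / \partial p_M = 0$ gives $-\ln p_M - 1 - \lambda(M) - \nu = 0$, i.e. $p_M = e^{-1-\nu} e^{-\lambda(M)}$. Substituting this into the normalization constraint fixes the constant, yielding $p_M \propto e^{-\lambda(M)}$, which is exactly the claimed form once we set $\lambda^\star = \lambda$. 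Since $p_M > 0$ for all $M$ in this formula, the non-negativity constraints are not active and we do not need additional KKT multipliers for them.

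The main obstacle is justifying that a \emph{finite} optimal dual vector $\lambda^\star$ exists so that the stationarity equation can be solved — equivalently, that strong duality holds with an attained dual optimum. This is exactly where the interiority hypothesis is used: since $\theta$ lies in the interior of $P(\M)$ (relative to the affine hull described by $A^=$), Slater's condition holds, because we can exhibit a strictly feasible primal point by taking a uniform mixture over a set of vertices whose convex hull is a neighborhood of $\theta$; this gives a distribution $p > 0$ with marginals exactly $\theta$. Slater's condition then implies strong duality for the concave maximization of $H$ subject to affine equality constraints, and it further implies that the dual optimum is attained at some finite $\lambda^\star$. Combined with the stationarity computation above, the primal optimum $p^\star$ must take the form $p^\star_M \propto e^{-\lambda^\star(M)}$, completing the proof.
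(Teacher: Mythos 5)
Your proposal follows essentially the same route as the paper's proof: derive the Lagrangian dual, use stationarity to obtain the exponential form $p_M \propto e^{-\lambda(M)}$, appeal to Slater's condition for strong duality with dual attainment, and derive uniqueness from strict concavity of the entropy. The only substantive addition is your separate existence argument via compactness, which is fine but is not needed once the duality argument is in place. One imprecision worth flagging: your justification of Slater's condition says that "taking a uniform mixture over a set of vertices whose convex hull is a neighborhood of $\theta$" yields $p > 0$ with marginals exactly $\theta$; a uniform mixture generally does \emph{not} have marginals $\theta$. The correct statement is that a point $\theta$ in the relative interior of $P(\M)$ admits some representation $\theta = \sum_M p_M \one_M$ with every $p_M > 0$ (e.g., start from any feasible $p$, blend with the centroid $\bar{p}_M = 1/|\M|$, and compensate by moving $\theta$ slightly in the opposite direction, which stays in $P(\M)$ by interiority). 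This is a local rephrasing rather than a gap in the argument, and the paper itself simply cites Boyd and Vandenberghe for this step.
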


\begin{proof}
Consider the convex program for computing the maximum-entropy distribution with marginals $\vec{\theta}$ as in Figure \ref{conv-entropy}.
\begin{figure}[h]
\begin{center}
\begin{eqnarray}
  \sup & \sum_{M\in \M} p_M \ln\frac{1}{p_M}  \nonumber\\
\st \nonumber \\
\forall e\in [m]& \sum_{M\in \M, \; M \ni e} p_M=\theta_e \label{cons1a}\\
  & \sum_{M\in \M} p_M =1\label{cons2a}\\
\forall M\in \M & p_M \geq 0\label{cons3a}
\end{eqnarray}
\end{center}
\caption{Max-Entropy Program for ($\M,\vec{\theta}$)} \label{conv-entropy}
\end{figure}
We first prove that the dual of this convex program is the one given in Figure \ref{conv-dual-entropy}.
\begin{figure}[h]
\begin{center}
\begin{eqnarray}
\inf & f_\theta(\vec{\lambda})\defeq \sum_{e\in [m]}\theta_e \lambda_e+\ln \sum_{M\in \M}e^{-\lambda(M)}  \nonumber\\
\st \nonumber \\
\forall e\in [m] & \lambda_e \in \mathbb{R} \label{cons4a}
\end{eqnarray}
\end{center}
\caption{Dual of the Max-Entropy Program for ($\M,\vec{\theta}$)} \label{conv-dual-entropy}
\end{figure}
To see this consider multipliers $\lambda_e$ for constraints \eqref{cons1a} in Figure \ref{conv-entropy} and a multiplier $z$ for the constraint \eqref{cons2a}. Then the Lagrangian $L(p,\vec{\lambda},z)$ is defined to be
$$  \sum_{M \in \M}p_M \ln\frac{1}{p_M} + \sum_{e \in [m]} \lambda_e(\theta_e- \sum_{M\in \M, \; M \ni e} p_M) + z (1-\sum_{M\in \M} p_M ).$$
This is the same as
\begin{equation}\label{2eq2}
 \sum_{M \in \M}p_M \ln\frac{1}{p_M} - \sum_{M \in \M} p_M \lambda(M) - z \sum_{M\in \M} p_M  +  \sum_{e \in [m]} \lambda_e\theta_e +  z .
 \end{equation}
Let $g(\vec{\lambda},z) \defeq \inf _{p \geq 0} L(p,\vec{\lambda},z).$
Thus, the $p$ which achieves $g(\vec{\lambda},z)$ can be obtained by taking partial derivatives with respect to $p_M$ and setting them to $0$ as follows.
\begin{equation}\label{derivative}
 \forall {M \in \M} \quad \frac{\partial L}{\partial p_M}=  \ln\frac{1}{p_M} -1 -\lambda(M)-z=0.
 \end{equation}
Thus,  $p_M = e^{-1-z-\lambda(M)}$ for all $M \in \M.$
Summing this up over all $M \in \M$ we obtain that
\begin{equation}\label{2eq4}
 \sum_{M \in \M} p_M = e^{-1-z} \sum_{M \in \M} e^{-\lambda(M)}.\end{equation}
For  such a $(p,\vec{\lambda},z),$ if we multiply each \eqref{derivative} by $p_M$ and add all of them up we obtain
$$ \sum_{M \in \M} \left( p_M\ln\frac{1}{p_M} -p_M- p_M\lambda(M) -zp_M\right) = 0,$$ which implies that
$$
 \sum_{M \in \M} \left( p_M\ln\frac{1}{p_M} - p_M\lambda(M) -zp_M\right) =  \sum_{M \in \M} p_M.
$$
Hence, combining this with \eqref{2eq2} and using \eqref{2eq4}, the dual becomes to find the infimum of $g(\vlambda,z)$ which is
$$
\sum_{e \in [m]} \lambda_e \theta_e +z+ e^{-1-z} \sum_{M \in \M} e^{-\lambda(M)}.
$$
Optimizing $g(\vec{\lambda},z)$  over $z$ one obtains that $g(\vec{\lambda},z)$ is minimized when
$$ 1- e^{-1-z} \sum_{M \in \M} e^{-\lambda(M)}=0.$$
Hence, $z= \ln \sum_{M \in \M} e^{-\lambda(M)}-1.$
Thus, the Lagrangian dual becomes to minimize
$$\sum_{e\in [m]}\theta_e \lambda_e+\ln \sum_{M\in \M}e^{-\lambda(M)}.$$
This completes the proof that the dual of Figure \ref{conv-dual-entropy} is the convex program in Figure \ref{conv-entropy}.

Since $\vec{\theta}$ is in the interior of $P(\M),$ the primal-dual pair satisfies Slater's condition and  strong duality holds, see \cite{BoydV04}, implying that the optimum of both the programs is the same. Moreover, by the strict concavity of the entropy function, the optimum is unique. Hence, at optimality,  $p^\star_M=\frac{ e^{-\lambda^{\star}(M)}}{\sum_{N\in \M} e^{-\lambda^{\star}(N)}}$ where $\vec{\lambda}^{\star}$ is the optimal dual solution and $p^\star$ is the optimal primal solution.
\end{proof}

\subsection{Optimal and Near-Optimal Dual Solutions}\label{sec:omit-near-optimal}
In this section we first prove that if $\vec{\lambda}$ is a  solution to the program in Figure  \ref{conv-dual-entropy} for $(\M,\vec{\theta})$ of value $\zeta,$  then so is any $\vec{\lambda} + (A^=)^\top d$ for any $\vec{d}.$  Recall that $(A^=,\vec{b})$ are the equality constraints satisfied by all vertices of $\M.$ Hence, in our search for the optimal solution to the dual convex program, we  restrict ourselves to the space of $\vec{\lambda}$ s.t. $A^=\vec{\lambda}=0.$

\begin{lemma}[Same as Lemma \ref{lem:shift0}]\label{lem:shift}
$f_\theta(\vec{\lambda})= f_\theta(\vec{\lambda}+ (A^=)^\top d )$ for any $\vec{d}.$
\end{lemma}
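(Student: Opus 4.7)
The plan is to expand both terms of $f_\theta$ at the shifted point $\vec{\lambda}+(A^=)^\top\vec{d}$ and observe that the contribution coming from the shift appears with opposite signs in the two terms, so it cancels. The only facts I will need are: (i) every vertex of $P(\M)$ satisfies $A^=\one_M=\vec{b}$, and (ii) since $\vec{\theta}\in P(\M)$ (in fact in its interior), we also have $A^=\vec{\theta}=\vec{b}$. Both are recorded in the definition of $(A^=,\vec{b})$ as the maximal linearly independent set of equalities satisfied by $\M$.

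First I would handle the linear term. By expanding the inner product,
\[
\langle \vec{\theta},\,\vec{\lambda}+(A^=)^\top\vec{d}\rangle
= \langle\vec{\theta},\vec{\lambda}\rangle+\langle A^=\vec{\theta},\vec{d}\rangle
= \langle\vec{\theta},\vec{\lambda}\rangle+\langle\vec{b},\vec{d}\rangle,
\]
using $A^=\vec{\theta}=\vec{b}$.

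Next I would handle the log-partition term. For each $M\in\M$, $(A^=)^\top\vec{d}$ contributes
\[
\bigl((A^=)^\top\vec{d}\bigr)(M)=\langle (A^=)^\top\vec{d},\one_M\rangle=\langle\vec{d},A^=\one_M\rangle=\langle\vec{d},\vec{b}\rangle,
\]
which is the \emph{same} constant for every $M\in\M$. Hence
\[
\sum_{M\in\M}e^{-(\vec{\lambda}+(A^=)^\top\vec{d})(M)}
= e^{-\langle\vec{d},\vec{b}\rangle}\sum_{M\in\M}e^{-\lambda(M)},
\]
so $\ln\sum_{M\in\M}e^{-(\vec{\lambda}+(A^=)^\top\vec{d})(M)}=\ln\sum_{M\in\M}e^{-\lambda(M)}-\langle\vec{d},\vec{b}\rangle$.

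Adding the two pieces together, the $+\langle\vec{b},\vec{d}\rangle$ from the linear term and the $-\langle\vec{d},\vec{b}\rangle$ from the log-partition term cancel, yielding $f_\theta(\vec{\lambda}+(A^=)^\top\vec{d})=f_\theta(\vec{\lambda})$. There is no real obstacle here; the only thing to be careful about is remembering that $A^=\vec{\theta}=\vec{b}$ (not just that every vertex $\one_M$ satisfies it), which is what makes both cancellations produce the \emph{same} scalar $\langle\vec{b},\vec{d}\rangle$.
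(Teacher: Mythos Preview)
Your proof is correct and follows essentially the same approach as the paper's: both split $f_\theta$ into the linear term and the log-partition term, show that the shift contributes $+\langle\vec{d},\vec{b}\rangle$ to the first and $-\langle\vec{d},\vec{b}\rangle$ to the second, and cancel. The only cosmetic difference is that the paper derives $A^=\vec{\theta}=\vec{b}$ by expressing $\vec{\theta}$ as a convex combination $\sum_M p_M\one_M$ and using $A^=\one_M=\vec{b}$ together with $\sum_M p_M=1$, whereas you invoke $A^=\vec{\theta}=\vec{b}$ directly from membership in $P(\M)$; both are equivalent.
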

\begin{proof}
First, note that $\langle \vec{\lambda} +(A^=)^\top d , \vec{\theta} \rangle =\langle \vec{\lambda}, \vec{\theta} \rangle +  \langle(A^=)^\top d , \vec{\theta} \rangle.$
Note that $\vec{\theta}$ can be written as $\sum_{M \in \M} p_M \one_M$ and $A^= \one_M=b$ for all $M \in  \M.$ Hence,
$$ \langle \vec{\lambda} + (A^=)^\top d , \vec{\theta} \rangle = \langle \vec{\lambda}, \vec{\theta} \rangle + \sum_{M \in \M} p_M \langle \vec{d},b \rangle =  \langle \vec{\lambda}, \vec{\theta} \rangle +  \langle \vec{d},b \rangle$$
since $\sum_{M \in \M} p_M =1.$
On the other hand note that
$$\ln \sum_{M \in \M} e^{-\langle \vec{\lambda} +(A^=)^\top d  , \one_M \rangle}=\ln e^{- \langle \vec{d},b \rangle} \sum_{M \in \M} e^{-\langle \vec{\lambda} , \one_M \rangle} $$
which equals
$$ - \langle \vec{d},b \rangle + \ln  \sum_{M \in \M} e^{-\langle \vec{\lambda} , \one_M \rangle}= - \langle \vec{d},b \rangle+ \ln  \sum_{M \in \M} e^{-{\lambda}(M)}.$$
Combining, we obtain that $f_\theta(\vec{\lambda} +(A^=)^\top d )$ equals
$$ \langle \vec{\lambda} + (A^=)^\top d , \vec{\theta} \rangle + \ln \sum_{M \in \M} e^{-\langle \vec{\lambda} +(A^=)^\top d  , \one_M \rangle} =   \langle \vec{\lambda}, \vec{\theta} \rangle  + \ln  \sum_{M \in \M} e^{-\lambda(M)}$$
which equals $f_\theta(\vec{\lambda}).$ This completes the proof of the lemma.
\end{proof}

\noindent
Thus, we can assume that $A^=\vec{\lambda}^{\star}=0$ where $\vec{\lambda}^{\star}$ is the optimal solution for the program of Figure \ref{conv-dual-entropy} for  $(\M,\theta).$

Next we prove that if $\vec{\lambda}$ is such that $f_\theta(\vec{\lambda})$ is close to $f_\theta(\vec{\lambda}^\star),$ then $p^{\vlambda}$ and $p^{\vlambda^\star}$ are close to each other. We relate the Kullback-Leibler distance between $p^{\vlambda}$ and $p^{\vlambda^\star}$ to $f_\theta(\vec{\lambda})-f_\theta(\vec{\lambda}^\star).$ In particular $\vec{\theta}^\lambda$ and $\vec{\theta}$ are close to each other.
Before we state this lemma, we recall some basic measures of proximity between probability distributions.

\begin{definition} Let $p,q$ be two probability distributions over the same space $\Omega.$  The following are natural measures of distances.

\begin{enumerate}
\item $\|p-q\|_{\rm TV} \defeq \max_{S \subseteq \Omega} |p(S)-q(S)|.$
\item $\|p-q\|_{1} \defeq \sum_{\omega} |p(\omega)-q(\omega)|.$
\item If $p,q>0,$ then the Kullback-Leibler distance between them is defined to be $$D_{\rm KL}(p \| q) \defeq \sum_{\omega} p(\omega) \ln \frac{p(\omega)}{q(\omega)}.$$ This distance function is always non-negative but not necessarily symmetric. 
\end{enumerate}
\end{definition}

\noindent
The following lemma shows a close relation between the dual solutions and Kullback-Leibler distance between the corresponding primal distributions.
\begin{lemma}\label{lem:close}
Suppose $\vec{\lambda}$ is such that $f_\theta(\vec{\lambda}) \leq f_\theta(\vec{\lambda}^\star)+\eps$  where $\vec{\lambda}^\star$ is the optimum dual solution for the instance $(\M,\vec{\theta}).$ Let   $p^{\vlambda},p^{\vlambda^\star}$ be the probability distributions corresponding to $\vec{\lambda}$ and $\vec{\lambda}^\star$ respectively: $$p_M^{\vlambda}  \defeq  \frac{e^{-\lambda(M)}}{\sum_{N \in \M} e^{-\lambda(N)}} \quad \mbox{and} \quad p^{\vlambda^\star}_M \defeq  \frac{e^{-\lambda^\star(M)}}{\sum_{N \in \M} e^{-\lambda^\star(N)}}.$$
Then
$$ f_\theta(\vec{\lambda})-f_\theta(\vec{\lambda}^\star)= D_{\rm KL}(p^{\vlambda^\star}\| p^{\vlambda})=\eps.$$

\end{lemma}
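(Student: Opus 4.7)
The plan is to expand $D_{\rm KL}(p^{\vlambda^\star}\| p^{\vlambda})$ directly using the explicit product form of both distributions, and show the resulting expression equals $f_\theta(\vec{\lambda})-f_\theta(\vec{\lambda}^\star)$. The chain inequality $\le \eps$ is then just the hypothesis. Note that although the statement is written with an equals sign, the quantity $\eps$ only enters through the assumption $f_\theta(\vec{\lambda})\le f_\theta(\vec{\lambda}^\star)+\eps$, so the correct reading of the final $=\eps$ is ``$\le \eps$''.

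The key observation I will exploit is Lemma \ref{prop:maxentropya}: strong duality holds at the interior point $\vtheta$, so $p^{\vlambda^\star}$ is the primal optimum and has marginals exactly $\vtheta$. In particular, for any weight vector $\vec{\mu}\in\mathbb{R}^m$,
\begin{equation*}
\sum_{M\in\M} p^{\vlambda^\star}_M\,\mu(M) \;=\; \sum_{e\in[m]}\mu_e \sum_{M\ni e} p^{\vlambda^\star}_M \;=\; \sum_{e\in[m]} \mu_e\,\theta_e \;=\; \langle \vec{\mu},\vtheta\rangle.
\end{equation*}
This identity applied to both $\vec{\mu}=\vec{\lambda}$ and $\vec{\mu}=\vec{\lambda}^\star$ is the only nontrivial ingredient in the whole computation.

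Now I would write, using $p^{\vlambda}_M = e^{-\lambda(M)}/Z^{\vlambda}$ and $p^{\vlambda^\star}_M = e^{-\lambda^\star(M)}/Z^{\vlambda^\star}$,
\begin{align*}
D_{\rm KL}(p^{\vlambda^\star}\| p^{\vlambda})
&= \sum_{M\in\M} p^{\vlambda^\star}_M \ln\frac{p^{\vlambda^\star}_M}{p^{\vlambda}_M} \\
&= \sum_{M\in\M} p^{\vlambda^\star}_M\bigl(\lambda(M)-\lambda^\star(M)\bigr) + \ln Z^{\vlambda} - \ln Z^{\vlambda^\star} \\
&= \langle\vec{\lambda},\vtheta\rangle - \langle\vec{\lambda}^\star,\vtheta\rangle + \ln Z^{\vlambda} - \ln Z^{\vlambda^\star} \\
&= f_\theta(\vec{\lambda}) - f_\theta(\vec{\lambda}^\star),
\end{align*}
where the third equality uses the marginal identity above and the fourth uses the definition of $f_\theta$. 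Combining with the hypothesis $f_\theta(\vec{\lambda})\le f_\theta(\vec{\lambda}^\star)+\eps$ yields the bound $D_{\rm KL}(p^{\vlambda^\star}\| p^{\vlambda}) \le \eps$.

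There is no real obstacle here; the only thing worth double-checking is that the calculation really does use the marginal condition on $p^{\vlambda^\star}$ (not on $p^{\vlambda}$), which is why it is crucial that $\vec{\lambda}^\star$ is the optimum and not just any point, and why strong duality (hence the interiority of $\vtheta$) is invoked. The fact that $D_{\rm KL}\ge 0$ also gives as a free byproduct the standard inequality $f_\theta(\vec{\lambda})\ge f_\theta(\vec{\lambda}^\star)$, reaffirming that $\vec{\lambda}^\star$ minimizes $f_\theta$.
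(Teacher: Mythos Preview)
Your proof is correct and follows essentially the same route as the paper: both expand $D_{\rm KL}(p^{\vlambda^\star}\|p^{\vlambda})$ and use the key fact that $p^{\vlambda^\star}$ has marginals exactly $\vtheta$ (equation \eqref{2eq3} in the paper) to convert $\sum_M p^{\vlambda^\star}_M \lambda(M)$ into $\langle\vec{\lambda},\vtheta\rangle$. Your version is slightly more symmetric in that you apply the marginal identity to both $\vec{\lambda}$ and $\vec{\lambda}^\star$, whereas the paper instead invokes $H(p^{\vlambda^\star})=f_\theta(\vec{\lambda}^\star)$ for the second term, but these are equivalent reformulations of the same computation; your reading of the final ``$=\eps$'' as ``$\le\eps$'' also matches the paper's own conclusion.
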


\begin{proof}
Let $Z^{\vlambda},Z^{\vlambda^\star}$ denote $\sum_{M \in \M} e^{-\lambda(M)}$ and $\sum_{M \in \M} e^{-\lambda^\star(M)}$ respectively. Then, it follows from optimality of $\vec{\lambda}^\star$ that
\begin{equation}\label{2eq3}
\theta_e = \frac{\sum_{M \in \M, \; M \ni e} e^{-\lambda^\star(M)} }{Z^{\vlambda^\star}}.
\end{equation}
Hence,
$$D_{\rm KL}(p^\star \| p) = \sum_{M \in \M} p_M^\star  \ln \frac{1}{p_M} - \sum_{M \in \M} p_M^\star  \ln \frac{1}{p_M^\star}$$
 which, since $p_M=\nfrac{e^{-\lambda(M)}}{Z^{\vlambda}}$ is
$$ \sum_{M \in \M} p_M^\star \ln Z^{\vlambda} + \sum_{M \in \M} p_M^\star \lambda(M) - \sum_{M \in \M} p_M^\star  \ln \frac{1}{p_M^\star}.$$
This is equal to
$$\ln Z^{\vlambda} - f_\theta(\vec{\lambda}^\star) +  \sum_{e\in [m]} \lambda_e \sum_{M \in \M, \; M \ni e} p_M^\star $$
$$= \ln Z^{\vlambda} - f_\theta(\vec{\lambda}^\star) + \langle \vec{\lambda}, \vec{\theta} \rangle  = f_\theta(\vec{\lambda})- f_\theta(\vec{\lambda}^\star).$$
Here, we have used \eqref{2eq3}.
Hence, $D_{\rm KL}(p^\star \|p) = f_\theta(\vec{\lambda})-f_\theta(\vec{\lambda}^\star)\leq \eps.$
\end{proof}

\noindent
It is well-known, see \cite{KL} Lemma 12.6.1, pp. 300-301,  that for probability distributions $p,q$ over the same sample space
\begin{equation}\label{eq:KL}
  \|p-q\|_{\rm TV} \leq O\left(\sqrt{{D_{\rm KL}(p \| q)}}\right).
\end{equation}
Hence, we obtain the following as a corollary to  Lemma \ref{lem:close}.

\begin{corollary}\label{cor:marginal}
Let $\vec{\lambda}$ be such that $f_\theta(\vec{\lambda}) \leq f_\theta(\vec{\lambda}^\star)+\eps.$ Then,
for all $e \in [m],$ $|{\theta}^{\vlambda}_e-\theta^{\vlambda^\star}_e| \leq O(\sqrt{\eps}).$
\end{corollary}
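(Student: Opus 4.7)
The plan is to chain together the three ingredients laid out just before the corollary. First, I would invoke Lemma~\ref{lem:close}, which gives the exact identity
\[
D_{\rm KL}(p^{\vlambda^\star} \| p^{\vlambda}) \;=\; f_\theta(\vlambda) - f_\theta(\vlambda^\star) \;\leq\; \eps.
\]
So the two product distributions are close in KL divergence.

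Second, I would apply the Pinsker-type bound \eqref{eq:KL}, namely $\|p - q\|_{\rm TV} \leq O(\sqrt{D_{\rm KL}(p\|q)})$, with $p = p^{\vlambda^\star}$ and $q = p^{\vlambda}$. This yields
\[
\|p^{\vlambda^\star} - p^{\vlambda}\|_{\rm TV} \;\leq\; O(\sqrt{\eps}).
\]

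Third, I would translate total variation closeness into marginal closeness. For each $e \in [m]$, the marginal $\theta^{\vlambda}_e = \Pr_{M \sim p^{\vlambda}}[e \in M]$ is simply the probability that $p^{\vlambda}$ assigns to the event $S_e \defeq \{M \in \M : e \in M\} \subseteq \M$, and similarly for $\theta^{\vlambda^\star}_e$. By the definition of total variation distance as the supremum over events,
\[
|\theta^{\vlambda}_e - \theta^{\vlambda^\star}_e| \;=\; |p^{\vlambda}(S_e) - p^{\vlambda^\star}(S_e)| \;\leq\; \|p^{\vlambda} - p^{\vlambda^\star}\|_{\rm TV} \;\leq\; O(\sqrt{\eps}).
\]
There is essentially no obstacle here: every step is a direct citation of a lemma already proved or of a standard inequality. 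The only mild point worth noting explicitly is that $\theta^{\vlambda^\star}_e = \theta_e$, since $p^{\vlambda^\star}$ is by Lemma~\ref{prop:maxentropya} the max-entropy distribution with marginals $\theta$; thus the bound could equivalently be stated as $|\theta^{\vlambda}_e - \theta_e| \leq O(\sqrt{\eps})$, matching the guarantee advertised in Theorem~\ref{thm:main1}.
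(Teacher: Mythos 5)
Your proof is correct and follows exactly the same chain as the paper: identify the marginal difference as a total-variation gap over the event $S_e = \{M : e \in M\}$, bound TV by $O(\sqrt{D_{\rm KL}})$ via the Pinsker-type inequality \eqref{eq:KL}, and then use Lemma~\ref{lem:close} to identify $D_{\rm KL}(p^{\vlambda^\star}\|p^{\vlambda})$ with $f_\theta(\vlambda)-f_\theta(\vlambda^\star) \leq \eps$. The only addition is your (correct) remark that $\theta^{\vlambda^\star} = \theta$, which the paper leaves implicit.
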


\begin{proof}
This follows from the fact that
$$|\theta^{\vlambda}_e-\theta^{\vlambda^\star}_e| \leq  \|p^{\vlambda}-p^{{\vlambda}^\star}\|_{\rm TV}$$
which is at most
$$ O\left(\sqrt{D_{\rm KL}(p^{{\vlambda}^\star} \| p^{\vlambda})}\right)= O\left(\sqrt{f_\theta(\vec{\lambda})-f_\theta(\vec{\lambda}^\star)}\right)=O(\sqrt{\eps}).$$
\end{proof}

\section{Generalized Counting and Minimizing Kullback-Leibler Divergence}\label{sec:general}

In this section, we outline how to obtain algorithms for generalized approximate counting from max-entropy oracles. 
Recall that the generalized approximate counting problem is: 
Given $\epsilon>0$ and  weights  $\vec{\mu}\in \mathbb{R}^{m}$,  output $\widetilde{Z}^{\mu}$ and $\widetilde{Z}^{\mu}_e$ for each $e\in [m]$ such that the  following guarantees hold.
\begin{enumerate}
\item $(1-\epsilon) Z^\mu \leq \widetilde{Z}^{{\mu}} \leq (1+\epsilon)  Z^\mu$ and
\item for every $e \in [m],$
$(1-\eps) Z^\mu_e \leq \widetilde{Z}^{\mu}_e \leq (1+\eps) Z^\lambda_e.$
\end{enumerate}
Here $Z^{\mu}$ is defined to be  $\sum_{M\in \M} e^{-\mu(M)}.$
The running time should be a  polynomial in $m$,  $\nfrac{1}{\epsilon},$  $\log {\nfrac{1}{\alpha}}$ and the number of bits required to represent $e^{-\mu_e}$ for any $e\in [m],$ or $\|\mu\|_1.$\footnote{The number of bits needed to represent $e^{-\mu_e}$ for any $e\in [m],$  up to an additive error of $2^{-m^2},$ is  $\max\{{\rm poly}(m),\|\mu\|_1\}$ for some ${\rm poly}(m).$ Since all our running times depend on   ${\rm poly}(m),$ we  only track the dependence on $\|\mu\|_1.$}
Towards constructing such algorithms, we need access to  oracles that solve a more general problem than the max-entropy problem used in Theorem~\ref{thm:reverse}, namely, min-Kullback-Leibler (KL)-divergence problem. This raises the issue that, while  Theorems \ref{thm:strong} and \ref{thm:approximate-count}   output solutions to max-entropy problem given access to generalized counting oracles, to obtain a generalized counting oracle we need access to a minimum KL-divergence oracle.  However, later in this section we show that given a generalized counting oracle, we can not only solve  the max-entropy convex programs as in Theorem~\ref{thm:approximate-count}, but also the min KL-divergence program in a straightforward manner. The convex program for the min-KL-divergence problem is given in Figure~\ref{conv-KLa}. Given a $\mu\in \RR^m$, recall that $p^{\mu}$ is the product distribution $p^{\mu}_M \defeq \frac{e^{-\mu(M)}}{Z^{\mu}}.$ 
\begin{figure}[t]
\begin{center}
\begin{eqnarray*}
  \sup & \sum_{M\in \M} p_M \ln\frac{p^{\mu}(M)}{p_M} + \ln Z^{\mu}  \\
\st  \\
\forall e\in [m]& \sum_{M\in \M, \; M \ni e} p_M=\theta_e \\
  & \sum_{M\in \M} p_M =1\\
\forall M\in \M & p_M \geq 0
\end{eqnarray*}
\end{center}
\caption{Min-KL-Divergence Program for ($\M,\vec{\theta},\mu$)} \label{conv-KLa}
\end{figure}
Observe that the objective is to find a distribution $p$ that \emph{minimizes} the KL-divergence, up to a shift, between the distributions $p$ and $p^{\mu}$. This follows since the objective can be rewritten as $$\sum_{M\in \M} p_M \ln\frac{p^{\mu}(M)}{p_M} + \ln Z^{\mu}= \ln Z^{\mu}-D_{KL}(p\|p^{\mu})$$ where $Z^{\mu}$ does not depend on the variable $p$ but only on the input $\mu$. The dual of this convex program is given in Figure~\ref{conv-dual-KLa}. 
We use the following to denote the objective function of the dual:
$$ f^{\mu}_\theta(\vec{\lambda})\defeq \sum_{e\in [m]}\theta_e \lambda_e+\ln \sum_{M\in \M}e^{-\lambda(M)-\mu(M)}+\ln Z^{\mu}.$$
\begin{figure}[t]
\begin{center}
\begin{eqnarray*}
\inf &  \sum_{e\in [m]}\theta_e \lambda_e+\ln \sum_{M\in \M}e^{-\lambda(M)-\mu(M)}+\ln Z^{\mu} \\
\st \\
\forall e\in [m] & \lambda_e \in \mathbb{R}
\end{eqnarray*}
\end{center}
\caption{Dual of the Min-KL-Divergence Program for ($\M,\vec{\theta},\mu$)} \label{conv-dual-KLa}
\end{figure}
When $\theta$ is in the interior of $P(\M),$ strong duality holds between the programs of Figure \ref{conv-KLa} and \ref{conv-dual-KLa}.
We assume that we are given the following approximate oracle to solve the above set of convex programs. 
\begin{definition}\label{def:approxKLoracle} An {\em approximate KL-optimization oracle} for $\M,$ given a  $\vtheta$ in the $\eta$-interior of $P(\M),$ $\mu\in \RR^{m},$ a $\zeta >0$, and an $ \epsilon >0$, either
\begin{enumerate}
\item asserts that $\inf_{\vlambda}f^{\mu}_{\vtheta}(\vlambda)\geq \zeta -\epsilon$ or
\item  returns a $\vlambda\in \RR^{m}$ such that $f^{\mu}_{\vtheta}(\vlambda)\leq \zeta+\epsilon$.
\end{enumerate}
The oracle is assumed to be efficient, i.e., it runs in time polynomial in $m$, $\nfrac{1}{\epsilon},$ $\nfrac{1}{\eta}$, the number of bits needed to represent $\zeta$ and  $\|\mu\|_1.$ 
\end{definition}
\noindent
The following theorem is the appropriate generalization of Theorem~\ref{thm:reverse} in this setting.

\begin{theorem}\label{thm:reverse2}
There exists an algorithm that,  given a maximal set of linearly independent equalities $(A^=,b)$ and a separation oracle for $P(\M),$ a $\mu\in \RR^{m},$ and an  approximate KL-optimization oracle for  $\M$ as above,  returns a $\widetilde{Z}$ such that $(1-\epsilon)Z^{\mu}\leq \widetilde{Z}\leq (1+\epsilon)Z^{\mu}.$ Assuming that the running times of the separation oracle and the approximate KL oracle are polynomial in their respective  input parameters, the  running time of the algorithm is bounded by a polynomial in $m,$ $\nfrac{1}{\epsilon}$, the number of bits needed to represent $(A^=,b)$ and $\|\mu\|_1.$
\end{theorem}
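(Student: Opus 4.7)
The plan is to mirror the proof of Theorem \ref{thm:reverse}, applying the ellipsoid method to the concave program
\[ \sup_{\vtheta \in P(\M)} g^\mu(\vtheta), \quad \text{where } g^\mu(\vtheta) \defeq \inf_{\vlambda} f^\mu_\vtheta(\vlambda), \]
but using the approximate KL-optimization oracle in place of the max-entropy oracle. The first step is to observe that $\sup_{\vtheta \in P(\M)} g^\mu(\vtheta) = \ln Z^\mu$, attained at $\vtheta^\mu \defeq \sum_M p^\mu_M \one_M$: the primal feasible distribution $p = p^\mu$ has marginals $\vtheta^\mu$, zero KL-divergence from itself, and objective value $\ln Z^\mu$, and no other distribution can do better since $D_{KL}\geq 0$. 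Concavity of $g^\mu$ on $P(\M)$ follows exactly as for $g$: for convex combinations of marginal vectors one can convex-combine the corresponding optimal distributions, and convexity of KL-divergence in its first argument gives the required inequality. Thus, given an approximation $\widetilde\zeta$ to $\sup_\vtheta g^\mu(\vtheta)$, we may output $\widetilde Z \defeq e^{\widetilde\zeta}$ as our estimate of $Z^\mu$.

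Next I would establish the two interior-point facts that drive the ellipsoid argument. First, an analog of Lemma \ref{lem:interior-count}: there exists $\thetap$ in the $\eta$-interior of $P(\M)$, with $\eta$ polynomial in $1/m,\epsilon$ and $1/(1+\|\mu\|_1)$, such that $g^\mu(\thetap) \geq \ln Z^\mu - \epsilon/16$. Set $\thetap \defeq (1-\delta)\vtheta^\mu + \delta\tilde\vtheta$ for the interior point $\tilde\vtheta$ from Lemma \ref{lem:interiority} and a small $\delta$. By concavity, $g^\mu(\thetap) \geq (1-\delta)\ln Z^\mu + \delta\, g^\mu(\tilde\vtheta)$. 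The crucial lower bound is $g^\mu(\tilde\vtheta) \geq -\|\mu\|_1$: evaluating the primal at any distribution $q$ with marginals $\tilde\vtheta$ gives $g^\mu(\tilde\vtheta) \geq \ln Z^\mu - D_{KL}(q\|p^\mu) = H(q) - \sum_M q_M \mu(M) \geq -\|\mu\|_1$, since $H(q)\geq 0$. Choosing $\delta = \Theta(\epsilon/(m+\|\mu\|_1))$ yields the required accuracy. Second, a Lipschitz analog of Lemma \ref{lem:marginal_entropy}: if $\vtheta$ is in the $\eta$-interior and $\|\vtheta-\vtheta'\|\leq \epsilon'$, then by the same mixing construction,
\[ g^\mu(\vtheta') \geq g^\mu(\vtheta) - \tfrac{\epsilon'}{\eta}\bigl(g^\mu(\vtheta) - g^\mu(\vtheta'')\bigr) \geq g^\mu(\vtheta) - O\!\left(\tfrac{\epsilon'}{\eta}(m+\|\mu\|_1)\right), \]
using $g^\mu(\vtheta) \leq \ln Z^\mu \leq m+\|\mu\|_1$ and the lower bound $g^\mu(\vtheta'')\geq -\|\mu\|_1$ for the auxiliary point $\vtheta''$.

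The ellipsoid algorithm itself is essentially a copy of the one in Section~\ref{sec:reverse}, with the starting ellipsoid the ball of radius $\sqrt m$ containing $[0,1]^m$. For each guess $\zeta$ in a binary search over the interval $[-(m+\|\mu\|_1),\,m+\|\mu\|_1]$ at grid size $\Theta(\epsilon)$, we iterate: at the current center $\vtheta_t$, first test $\eta/(2m)$-interiority with Lemma \ref{lem:test-interior}; if a separating hyperplane for $P_\eta(\M)$ is returned, use it as the cut; otherwise, query the KL-oracle with input $(\vtheta_t,\mu,\zeta,\epsilon/16)$. If the oracle certifies $g^\mu(\vtheta_t) \geq \zeta - \epsilon/16$, we are done for this guess; otherwise it returns $\vlambda_t$ with $f^\mu_{\vtheta_t}(\vlambda_t)\leq \zeta + \epsilon/16$, and we cut using the hyperplane $\langle \vlambda_t, \vtheta - \vtheta_t\rangle \geq 0$. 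The analog of Lemma \ref{cl:survive-reverse} goes through verbatim: since $f^\mu_\vtheta(\vlambda_t) - f^\mu_{\vtheta_t}(\vlambda_t) = \langle \vlambda_t, \vtheta - \vtheta_t\rangle$ and $f^\mu_\vtheta(\vlambda_t)\geq g^\mu(\vtheta)$, any $\vtheta$ with $g^\mu(\vtheta) \geq \zeta + \epsilon/16$ survives every cut. Iterating until the ellipsoid radius drops below $\Theta(\epsilon\eta/(m(1+\|\mu\|_1)))$ and applying the Lipschitz bound then produces the required estimate, and the final $\widetilde Z = e^{\zeta^\circ}$ satisfies $(1-\epsilon)Z^\mu \leq \widetilde Z \leq (1+\epsilon)Z^\mu$.

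The main obstacle, and the reason the running time becomes polynomial in $\|\mu\|_1$ rather than $\log\|\mu\|_1$, is the loss of the uniform upper bound $g \leq m$ that was available in Theorem \ref{thm:reverse}. Here the oscillation of $g^\mu$ over $P(\M)$ is controlled only by $m+\|\mu\|_1$, which enters both the interior-distance perturbation bound and the required precision of the ellipsoid radius; this forces $\eta = 1/\mathrm{poly}(m,1/\epsilon,1+\|\mu\|_1)$ and consequently drives the number of bits passed to the KL-oracle. Once these quantitative dependencies are tracked through Lemma \ref{lem:interior-count}'s analog and the radius computation, correctness is immediate from the same contradiction argument that closes the proof of Theorem \ref{thm:reverse}.
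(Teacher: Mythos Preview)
Your proposal is correct and follows essentially the same approach as the paper: run the ellipsoid method on $\sup_{\vtheta\in P(\M)} g^\mu(\vtheta)$, using the interiority test from Lemma~\ref{lem:test-interior} and the KL-oracle as separation devices, and identify $\ln Z^\mu$ as the optimal value attained at the marginals of $p^\mu$. The paper in fact omits the proof and only records the analogs of Lemmas~\ref{lem:interior-count} and~\ref{lem:marginal_entropy}; your additive formulation of those two lemmas (via the bounds $-\|\mu\|_1 \le g^\mu(\vtheta)\le m+\|\mu\|_1$, which you obtain from $g^\mu(\vtheta)=g(\vtheta)-\iprod{\vtheta}{\mu}$) is slightly cleaner than the paper's multiplicative statements, since $g^\mu$ need not be non-negative, but the content and the resulting polynomial dependence on $\|\mu\|_1$ are the same.
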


\noindent
We omit the proof since it is a simple but tedious generalization of the proof of Theorem~\ref{thm:reverse}. We highlight below the key additional points that must be taken into account. The algorithm in the proof of Theorem \ref{thm:reverse2} is obtained by using the ellipsoid algorithm to maximize the  concave function

$$\max_{\theta} g^{\mu}(\theta)$$
over the interior of $P(\M).$  Here,
$g^{\mu}(\theta)\defeq \min_{\lambda} f^{\mu}_{\theta}(\lambda).$
Indeed, the maximum is attained at $\theta^{\star}$ where $$\theta^{\star}_e=\frac{\sum_{M:e\in M}e^{-\mu(M)}}{Z^{\mu}}$$ and the objective value at this maximum is $\ln Z^{\mu}$. Thus, we use the ellipsoid algorithm to search for $\theta^{\star}$. The issue of interiority as in Lemma~\ref{lem:interior-count} is resolved by proving the following lemma.

\begin{lemma}\label{lem:interior-count2}
Given an $\epsilon>0$, there exists an $\eta>0$ and $\thetap$ such that ${\thetap}$ is in $\eta$-interior of $P(\M)$ and $g^{\mu}(\thetap)\geq \left(1-\frac{\epsilon}{16m\|\mu\|_1}\right)\ln Z^{\mu} \geq \ln Z^{\mu} -\frac{\epsilon}{16}$. Moreover, $\eta$ is at least a polynomial in $\nfrac{1}{m}$, $\eps$ and $\nfrac{1}{\|\mu\|_1}.$
\end{lemma}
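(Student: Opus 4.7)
The plan is to adapt the convex-combination argument from the proof of Lemma \ref{lem:interior-count} to the tilted setting. Let $\theta^\star$ denote the marginal of the tilted distribution, $\theta^\star_e = Z^\mu_e/Z^\mu$, which is the unique maximizer of $g^\mu$ on $P(\M)$ with optimal value $\ln Z^\mu$. Let $\tilde\theta$ be the point in the $\text{poly}(1/m)$-interior of $P(\M)$ provided by Lemma \ref{lem:interiority}, and set
$$\thetap \defeq (1-\alpha)\,\theta^\star + \alpha\,\tilde\theta$$
for an $\alpha = \Theta(\epsilon/(m\|\mu\|_1))$ to be fixed shortly. Since $\tilde\theta$ already sits in a ball of radius $1/\text{poly}(m)$ inside the affine hull of $P(\M)$, convexity of $P(\M)$ shows that $\thetap$ lies in the $\alpha/\text{poly}(m)$-interior, which is exactly the claimed $\text{poly}(1/m,\epsilon,1/\|\mu\|_1)$-interior.

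The substance of the lemma is the lower bound on $g^\mu(\thetap)$. The function $g^\mu$ is concave on $P(\M)$: writing the primal objective as $-D_{\rm KL}(p\|p^\mu)+\ln Z^\mu$, we see that $g^\mu(\theta)=\ln Z^\mu-\min_{p:\,\text{marg}(p)=\theta}D_{\rm KL}(p\|p^\mu)$, and the minimum of a jointly convex function over marginal-parametrized linear constraints is convex in $\theta$. Concavity therefore gives
$$g^\mu(\thetap)\;\geq\;(1-\alpha)g^\mu(\theta^\star)+\alpha\,g^\mu(\tilde\theta)\;=\;\ln Z^\mu-\alpha\,D_{\rm KL}(q^\star\|p^\mu),$$
where $q^\star$ is the optimal distribution at $\tilde\theta$.

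The main obstacle, and the place where the argument genuinely departs from Lemma \ref{lem:interior-count} (which exploits $g \geq 0$ for free), is to bound $D_{\rm KL}(q^\star\|p^\mu)$ polynomially in $m$ and $\|\mu\|_1$. Since $q^\star$ is the minimizer, I can replace it by any convenient distribution $q$ with marginals $\tilde\theta$, for example the uniform mixture of the $r+1$ affinely independent vertices that realize $\tilde\theta$ in the proof of Lemma \ref{lem:interiority}. Expanding
$$D_{\rm KL}(q^\star\|p^\mu)\;\leq\;D_{\rm KL}(q\|p^\mu)\;=\;-H(q)+\langle\mu,\tilde\theta\rangle+\ln Z^\mu,$$
I bound each term using $-H(q)\leq 0$, $|\langle\mu,\tilde\theta\rangle|\leq\|\mu\|_1\|\tilde\theta\|_\infty\leq\|\mu\|_1$, and $\ln Z^\mu\leq\ln(2^m e^{\|\mu\|_1})\leq m+\|\mu\|_1$. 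Hence $D_{\rm KL}(q^\star\|p^\mu)=O(m+\|\mu\|_1)=O(m\|\mu\|_1)$.

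Choosing $\alpha$ as a sufficiently small constant multiple of $\epsilon/(m\|\mu\|_1)$ then gives $g^\mu(\thetap)\geq\ln Z^\mu-\epsilon/16$. The intermediate chain-inequality $(1-\epsilon/(16m\|\mu\|_1))\ln Z^\mu\geq\ln Z^\mu-\epsilon/16$ is a separate, purely numerical consequence of the a priori estimate $|\ln Z^\mu|\leq m+\|\mu\|_1\leq m\|\mu\|_1$, which holds since $Z^\mu\leq 2^m e^{\|\mu\|_1}$ and $Z^\mu\geq e^{-\|\mu\|_1}$. Everything apart from the KL-upper-bound step is a direct transcription of the argument for Lemma \ref{lem:interior-count}; the polynomial control of $D_{\rm KL}(q^\star\|p^\mu)$ is where the extra dependence on $\|\mu\|_1$ in $\eta$ and in the multiplicative loss comes from.
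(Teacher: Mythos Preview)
The paper does not actually prove Lemma~\ref{lem:interior-count2}; it merely states it as the needed generalization of Lemma~\ref{lem:interior-count} and says the proof of Theorem~\ref{thm:reverse2} is ``a simple but tedious generalization.'' Your argument follows exactly the template of the proof of Lemma~\ref{lem:interior-count}: take $\thetap=(1-\alpha)\theta^\star+\alpha\tilde\theta$ with $\tilde\theta$ from Lemma~\ref{lem:interiority}, use concavity of $g^\mu$, and control the interiority of $\thetap$ via that of $\tilde\theta$. You correctly identify the one genuine new step --- $g^\mu$ is no longer nonnegative, so the inequality $g^\mu(\thetap)\ge(1-\alpha)g^\mu(\theta^\star)$ used for $g$ in Lemma~\ref{lem:interior-count} must be replaced by an explicit bound on $D_{\rm KL}(q^\star\|p^\mu)\le -H(q)+\langle\mu,\tilde\theta\rangle+\ln Z^\mu=O(m+\|\mu\|_1)$. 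This is precisely the extra work the paper alludes to, and your execution of it is sound.

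One small caveat: your passage from $O(m+\|\mu\|_1)$ to $O(m\|\mu\|_1)$, and the numerical chain $|\ln Z^\mu|\le m+\|\mu\|_1\le m\|\mu\|_1$, both fail when $\|\mu\|_1<1$ (and the latter also when $m=1$). This does not affect the substantive conclusion --- $\eta$ polynomial in $\nfrac{1}{m},\eps,\nfrac{1}{\|\mu\|_1}$ and additive loss at most $\nfrac{\eps}{16}$ --- but to match the exact multiplicative form in the lemma statement you should either replace $\|\mu\|_1$ by $\max\{1,\|\mu\|_1\}$ throughout or note that for $\|\mu\|_1$ bounded the original Lemma~\ref{lem:interior-count} already suffices. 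The paper's own statement is informal on this point, so this is a cosmetic rather than mathematical gap.
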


\noindent 
Similarly, Lemma~\ref{lem:marginal_entropy} can be generalized to show the following:

\begin{lemma}\label{lem:marginal_entropy2}
Let $\vec{\theta},\vec{\theta}'\in P(\M)$ such that $\|\vec{\theta}-\vec{\theta}'\|\leq \epsilon$ and $\vec{\theta}$ is in $\eta$-interior of $P(\M)$. Then
$$g^{\mu}({\theta}')  \geq \left(1-\frac{\epsilon}{\eta \|\mu\|_1}\right)g(\theta).$$
\end{lemma}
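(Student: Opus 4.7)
The plan is to follow the template of the proof of Lemma~\ref{lem:marginal_entropy}, replacing Shannon entropy with KL-divergence throughout. Let $p^\star$ be the optimal distribution for the min-KL-divergence program of Figure~\ref{conv-KLa} at marginal vector $\vec{\theta}$, so that $g^\mu(\vec{\theta}) = \ln Z^\mu - D_{\rm KL}(p^\star \| p^\mu)$. I would set $\vec{\theta}'' \defeq (\vec{\theta}' - (1-\epsilon/\eta)\vec{\theta}) / (\epsilon/\eta)$; exactly the same calculation as in the earlier proof shows that $\vec{\theta}''$ satisfies the equalities defining $P(\M)$ and that $\|\vec{\theta}''-\vec{\theta}\|\leq \eta$, whence $\vec{\theta}'' \in P(\M)$ by $\eta$-interiority of $\vec{\theta}$. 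Let $q''$ be any probability distribution on $\M$ with marginals $\vec{\theta}''$, and define the mixture $q \defeq (1-\epsilon/\eta)\, p^\star + (\epsilon/\eta)\, q''$, which has marginals $\vec{\theta}'$ and is therefore feasible for the primal program at $\vec{\theta}'$.

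Next, I would upper-bound $D_{\rm KL}(q \| p^\mu)$ using convexity of the KL-divergence in its first argument, which yields
$$D_{\rm KL}(q \| p^\mu) \;\leq\; (1-\epsilon/\eta)\, D_{\rm KL}(p^\star \| p^\mu) + (\epsilon/\eta)\, D_{\rm KL}(q'' \| p^\mu).$$
Subtracting both sides from $\ln Z^\mu$ and using optimality of $p^\star$ gives
$$g^\mu(\vec{\theta}') \;\geq\; (1-\epsilon/\eta)\, g^\mu(\vec{\theta}) + (\epsilon/\eta)\bigl(\ln Z^\mu - D_{\rm KL}(q''\|p^\mu)\bigr).$$
Expanding via $p^\mu_M = e^{-\mu(M)}/Z^\mu$, a direct calculation shows that $\ln Z^\mu - D_{\rm KL}(q''\|p^\mu) = H(q'') - \langle \mu, \vec{\theta}'' \rangle$; since $H(q'') \geq 0$ and $\vec{\theta}'' \in [0,1]^m$, this quantity is at least $-\|\mu\|_1$.

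Combining these estimates yields an additive bound of the form $g^\mu(\vec{\theta}') \geq (1-\epsilon/\eta)\, g^\mu(\vec{\theta}) - (\epsilon/\eta)\|\mu\|_1$. The main obstacle will be converting this additive guarantee into the multiplicative form stated in the lemma, in which the slack appears as $1 - \epsilon/(\eta\|\mu\|_1)$: the cleanest route is to rerun the argument with $\epsilon$ rescaled by a factor of $1/\|\mu\|_1$ at the outset, so that the additive slack shrinks to $\epsilon/\eta$ and can be absorbed into a multiplicative factor against $g^\mu(\vec{\theta})$. Unlike Lemma~\ref{lem:marginal_entropy}, where the slack term vanishes because $H(q'') \geq 0$ suffices on its own, here the intrinsic $\|\mu\|_1$ contribution coming from the linear term $\langle \mu, \vec{\theta}'' \rangle$ is unavoidable and is precisely why $\|\mu\|_1$ appears in the final bound.
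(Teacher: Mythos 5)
The paper does not actually supply a proof here --- after stating Lemma~\ref{lem:interior-count2} and Lemma~\ref{lem:marginal_entropy2} it says only that the proof is ``a simple but tedious generalization'' of the material in Section~\ref{sec:reverse}, so there is no paper proof to compare against. Your mixing argument is the right generalization of the proof of Lemma~\ref{lem:marginal_entropy}: the construction of $\theta''$ and $q$, the use of joint convexity of KL-divergence to get
$D_{\rm KL}(q\,\|\,p^\mu)\le(1-\nfrac{\epsilon}{\eta})D_{\rm KL}(p^\star\|p^\mu)+\nfrac{\epsilon}{\eta}\,D_{\rm KL}(q''\|p^\mu)$,
and the identity $\ln Z^\mu - D_{\rm KL}(q''\|p^\mu)=H(q'')-\langle\mu,\theta''\rangle\ge-\|\mu\|_1$ are all correct, and give
$g^\mu(\theta')\ \ge\ \left(1-\nfrac{\epsilon}{\eta}\right)g^\mu(\theta)\ -\ \nfrac{\epsilon}{\eta}\,\|\mu\|_1.$

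The gap is exactly where you flag it, and your proposed fix does not close it. ``Rescaling $\epsilon$ by $\nfrac{1}{\|\mu\|_1}$'' amounts to using the interpolation weight $\alpha'=\nfrac{\epsilon}{\eta\|\mu\|_1}$ instead of $\alpha=\nfrac{\epsilon}{\eta}$, but then $\|\theta''-\theta\|=\|\theta'-\theta\|/\alpha'\le\eta\|\mu\|_1$, and $\eta$-interiority only gives membership in $P(\M)$ for displacements up to $\eta$; for $\|\mu\|_1>1$ you cannot conclude $\theta''\in P(\M)$, so the whole construction collapses. If instead you keep $\alpha=\nfrac{\epsilon}{\eta}$, a direct check shows that the additive bound above implies the stated multiplicative bound only when $(1-\|\mu\|_1)g^\mu(\theta)\ge\|\mu\|_1^2$, which is false in general (and in fact degenerates badly as $\|\mu\|_1$ grows). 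The honest conclusion is that what your argument proves is the additive estimate displayed above, not the form printed in Lemma~\ref{lem:marginal_entropy2}; indeed the printed form is itself suspicious --- it tightens as $\|\mu\|_1\to\infty$, whereas the mechanism (the $-\langle\mu,\theta''\rangle$ term in $g^\mu(\theta'')$) makes the loss grow with $\|\mu\|_1$ --- and the right-hand side even carries a (presumably typographical) $g(\theta)$ in place of $g^\mu(\theta)$. Your additive bound is what one actually needs in the ellipsoid argument of Theorem~\ref{thm:reverse2}, provided the terminal radius there is shrunk by a further ${\rm poly}(\|\mu\|_1,m)$ factor; I would state and use it in that form rather than chase the multiplicative constant printed in the paper.
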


\noindent
Using the above two lemmas it is straightforward to generalize the argument in Theorem~\ref{thm:reverse} to prove Theorem~\ref{thm:reverse2}.

To complete the picture we show that, given access to a generalized counting oracle, we can solve the above pair of convex programs. This gives the following theorem which is a generalization of Theorem \ref{thm:approximate-count}. 

\begin{theorem}\label{thm:approximate-count1}
There exists an algorithm that, given a maximal set of linearly independent equalities $(A^=,b)$ and a  generalized approximate counting oracle for $P(\M) \subseteq \mathbb{R}^m,$   a $\vec{\theta}$ in the $\eta$-interior of $P(\M)$, $\mu\in \RR^m$ and an $\epsilon>0,$ returns a $\vec{\lambda}^\circ$ such that
$$f^{\mu}_\theta(\vec{\lambda}^\circ) \leq f^{\mu}_\theta(\vec{\lambda}^\star) + \eps,$$
where $\vec{\lambda}^{\star}$ is the optimal solution to the dual of the max-entropy convex program for $(\M,\vec{\theta},\mu)$ from Figure \ref{conv-dual-KLa}. Assuming that the generalized approximate counting oracle is polynomial in its input parameters, the running time of the algorithm is polynomial in $m$, $\nfrac{1}{\eta},$ $\log \nfrac{1}{\epsilon}$, the number of bits needed to represent  $\theta$ and  $(A^=,b),$ and  $\|\mu\|_1.$ 
\end{theorem}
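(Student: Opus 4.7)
The plan is to reduce the min-KL-divergence program to the max-entropy program handled by Theorem \ref{thm:approximate-count} via a linear change of variables. Setting $\nu \defeq \lambda + \mu,$ a direct computation shows
$$f^{\mu}_\theta(\lambda) = \langle \theta, \lambda \rangle + \ln \sum_{M \in \M} e^{-\lambda(M)-\mu(M)} + \ln Z^{\mu} = f_\theta(\nu) - \langle \theta, \mu \rangle + \ln Z^{\mu},$$
so the two objectives differ only by a constant independent of the decision variable. Consequently $\lambda^\star$ is optimal for $f^{\mu}_\theta$ if and only if $\nu^\star \defeq \lambda^\star + \mu$ is optimal for $f_\theta,$ and any $\nu^\circ$ that is $\epsilon$-optimal for $f_\theta$ yields $\lambda^\circ \defeq \nu^\circ - \mu$ that is $\epsilon$-optimal for $f^{\mu}_\theta.$

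The algorithm is therefore: invoke the algorithm of Theorem \ref{thm:approximate-count} with inputs $(A^=,b),$ the given $\theta,$ the given $\epsilon,$ and the generalized approximate counting oracle (unmodified) to obtain $\nu^\circ$ with $f_\theta(\nu^\circ) \leq f_\theta(\nu^\star) + \epsilon;$ then output $\lambda^\circ \defeq \nu^\circ - \mu.$ The oracle suffices as-is because the subroutine of Theorem \ref{thm:approximate-count} queries it only to compute $Z^{\nu_t}$ and $\{Z^{\nu_t}_e\}$ at its ellipsoid iterates $\nu_t,$ which is exactly what a generalized approximate counting oracle returns. Correctness of the $\epsilon$-guarantee then follows directly from the displayed identity above.

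For the running-time bookkeeping, the analog of Lemma \ref{lem:shift0} for $f^{\mu}_\theta$ (with the same proof, since $\mu$ enters only as a fixed additive vector inside the log-sum) lets us restrict the ellipsoid search to the null space of $A^=,$ after which Theorem \ref{thm:radius} bounds $\|\nu^\star\| \leq m/\eta.$ Thus the iterates $\nu_t$ stay within a box of radius $m/\eta$ throughout the ellipsoid run, and the bit-lengths of the counting-oracle queries depend only on $m$ and $1/\eta,$ and \emph{not} on $\|\mu\|_1.$ The $\|\mu\|_1$ term in the stated running time arises only from writing down the final output $\lambda^\circ = \nu^\circ - \mu,$ whose encoding is polynomial in $m/\eta$ plus $\|\mu\|_1.$

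I do not foresee a substantive obstacle; the reduction is a single affine shift, and every ingredient --- the bounding box of Theorem \ref{thm:radius}, the shift-invariance Lemma \ref{lem:shift0}, and the oracle interface --- extends transparently to the shifted objective. The only mild care needed is to confirm the shift-invariance lemma verbatim for $f^{\mu}_\theta,$ so that one may inherit the bounding-box guarantee on $\nu^\star,$ and to account for the $\|\mu\|_1$ term in the output representation.
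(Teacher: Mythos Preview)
Your proposal is correct and matches the paper's own argument. The paper proves the key lemma on the primal side, showing that for any feasible $p$ the min-KL objective equals $H(p)-\langle\theta,\mu\rangle$ and hence differs from the max-entropy objective by a constant; your substitution $\nu=\lambda+\mu$ is the dual restatement of exactly this fact, and the rest (invoke Theorem~\ref{thm:approximate-count}, inherit the bounding box via Lemma~\ref{lem:shift0} and Theorem~\ref{thm:radius}) is identical. The only cosmetic difference is that the paper phrases the ellipsoid run as optimizing $f^{\mu}_\theta(\lambda)$ directly, feeding $\lambda_t+\mu$ to the counting oracle (hence their remark about ``one additional call\ldots involving $\mu$s'' and the $\|\mu\|_1$ term in the oracle bit-complexity), whereas your change of variables keeps $\mu$ out of the oracle calls entirely and pushes the $\|\mu\|_1$ dependence solely into the final subtraction; both are fine.
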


\noindent
A similar theorem can be stated in the exact counting oracle setting, extending Theorem \ref{thm:strong}.
The proof of Theorem \ref{thm:approximate-count1} is quite straightforward and relies on the following lemma that states that the objective of the primal convex program is just an additive shift from the objective of the maximum entropy convex program. Thus, the primal optimum solution remains the same. This implies that the dual convex program can be solved as in the proof of Theorem \ref{thm:approximate-count} with one additional call to the generalized approximate counting oracle involving $\mu$s. 
\begin{lemma}
Let $p$ be any feasible solution to the primal convex program given in Figure~\ref{conv-KLa}. Then the objective
$$ \sum_{M\in \M}p_M \ln\frac{p^{\mu}_M}{p_M} + \ln Z^{\mu}= \sum_{M\in \M}p_M \ln\frac{1}{p_M} - \langle \theta, \mu\rangle.$$
\end{lemma}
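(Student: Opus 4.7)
The plan is to prove this by direct calculation, starting from the left-hand side and splitting the logarithm. Specifically, I would write
$$\sum_{M\in \M}p_M \ln\frac{p^{\mu}_M}{p_M} + \ln Z^{\mu} = \sum_{M\in \M}p_M \ln p^{\mu}_M + \sum_{M\in \M}p_M \ln\frac{1}{p_M} + \ln Z^{\mu},$$
and then substitute the definition $p^{\mu}_M = e^{-\mu(M)}/Z^{\mu}$, which gives $\ln p^{\mu}_M = -\mu(M) - \ln Z^{\mu}$.

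Plugging this in and using the fact that $\sum_{M \in \M} p_M = 1$ (since $p$ is a feasible distribution), the term $-\ln Z^{\mu}$ comes out multiplied by $\sum_M p_M = 1$ and cancels the $+\ln Z^{\mu}$ we started with. What remains is
$$-\sum_{M\in \M}p_M \,\mu(M) + \sum_{M\in \M}p_M \ln\frac{1}{p_M}.$$

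The final step is to rewrite $\sum_{M} p_M \mu(M)$ using the marginal feasibility constraints. Since $\mu(M) = \sum_{e \in M} \mu_e = \sum_{e \in [m]} \mu_e \cdot \one_M(e)$, swapping the order of summation yields
$$\sum_{M \in \M} p_M \,\mu(M) = \sum_{e \in [m]} \mu_e \sum_{M \in \M,\, M \ni e} p_M = \sum_{e \in [m]} \mu_e \theta_e = \langle \theta, \mu \rangle,$$
where the second equality invokes the primal feasibility constraint $\sum_{M \ni e} p_M = \theta_e$. Combining this with the previous display gives the desired identity. There is no genuine obstacle here: the result is a one-line algebraic manipulation, and the only substantive input is the marginal constraint from the feasibility of $p$ (which is exactly why the identity takes the form of a \emph{constant shift} depending only on $\theta$ and $\mu$, independent of the particular $p$ chosen).
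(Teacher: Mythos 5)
Your proof is correct and follows the same direct calculation as the paper's: split the logarithm, expand $\ln p^\mu_M = -\mu(M) - \ln Z^\mu$, cancel the $\pm\ln Z^\mu$ terms using $\sum_M p_M = 1$, and then use the marginal constraint $\sum_{M \ni e} p_M = \theta_e$ to convert $\sum_M p_M\,\mu(M)$ into $\langle\theta,\mu\rangle$. Nothing to add.
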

\begin{proof}
We have
\begin{eqnarray*}
\sum_{M\in \M}p_M \ln\frac{p^{\mu}_M}{p_M} + \ln Z^{\mu} &=&\sum_{M\in \M}p_M \ln\frac{1}{p_M} +  \sum_{M\in \M}p_M \ln p^{\mu}_M+  \ln Z^{\mu}\\
&=&\sum_{M\in \M}p_M \ln\frac{1}{p_M} +  \sum_{M\in \M}p_M \ln \frac{e^{-\mu(M)}}{Z^{\mu}}+  \ln Z^{\mu}\\
&=&\sum_{M\in \M}p_M \ln\frac{1}{p_M} + \sum_{M\in \M}p_M (-\mu(M))+  \sum_{M\in \M}p_M Z^{\mu}+  \ln Z^{\mu}\\
&=&\sum_{M\in \M}p_M \ln\frac{1}{p_M} -\sum_{e\in E}\mu_e \sum_{M:e\in M}p_M \\
&=&\sum_{M\in \M}p_M \ln\frac{1}{p_M} - \sum_{e\in E}\mu_e \theta_e
\end{eqnarray*}
where we have used the facts that $\sum_{M}p_M=1$ and $\sum_{M:e\in M} p_M=\theta_e$ since $p$ satisfies the constraints of the convex program in Figure~\ref{conv-KLa}.
\end{proof}

\section{$2\rightarrow 2$-norm of $\nabla f$}\label{sec:grad}
For a fixed $\theta,$ let 
\begin{equation}\label{eq:f}
f(\lambda) \defeq \langle \theta, \lambda \rangle + \ln \sum_{M \in \M} e^{-\lambda(M)}.
\end{equation}
Recall that $ \|\nabla f \|_{2 \rightarrow 2}$ is the $2 \rightarrow 2$ Lipschitz constant of $\nabla f$ and is defined to be the smallest non-negative number such that
$$ \left\| \nabla f(\lambda)-\nabla f(\lambda')  \right\|_2 \leq  \|\nabla f \|_{2 \rightarrow 2} \cdot \|\lambda-\lambda'\|_2$$
for all $\lambda,\lambda'.$
In this section, we show the following theorem, which can be used to give alternative gradient-descent based proofs of Theorems \ref{thm:strong} and \ref{thm:approximate-count}.

\begin{theorem}\label{thm:2to2}
Let $f$ be defined as in \eqref{eq:f}. Then,   $ \|\nabla f \|_{2 \rightarrow 2} \leq O(m\sqrt{m}).$ 
\end{theorem}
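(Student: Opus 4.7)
The plan is to compute the Hessian $\nabla^2 f(\lambda)$ explicitly, recognize it as the covariance matrix of the indicator random vector $X = \one_M$ with $M$ drawn from the product distribution $p^{\lambda}$, and bound its spectral norm uniformly in $\lambda$. Since $\|\nabla f\|_{2\to 2}$ equals $\sup_{\lambda}\|\nabla^2 f(\lambda)\|_{2\to 2}$, this will give the desired Lipschitz bound.

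First I would write out the first partials
$$\frac{\partial f}{\partial \lambda_e}(\lambda) = \theta_e - \frac{Z^\lambda_e}{Z^\lambda} = \theta_e - \theta^\lambda_e,$$
and then differentiate once more using the quotient rule and the identities $\frac{\partial Z^\lambda}{\partial \lambda_{e'}} = -Z^\lambda_{e'}$ and $\frac{\partial Z^\lambda_e}{\partial \lambda_{e'}} = -Z^\lambda_{e,e'}$, where $Z^\lambda_{e,e'} = \sum_{M \in \M,\; e,e' \in M} e^{-\lambda(M)}$. A short computation yields
$$\bigl(\nabla^2 f(\lambda)\bigr)_{e,e'} = \frac{Z^\lambda_{e,e'}}{Z^\lambda} - \frac{Z^\lambda_e}{Z^\lambda}\cdot \frac{Z^\lambda_{e'}}{Z^\lambda} = \mathrm{Cov}_{M \sim p^\lambda}\bigl(X_e, X_{e'}\bigr),$$
where $X_e \defeq \one_M(e)$. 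Thus $\nabla^2 f(\lambda)$ is precisely the covariance matrix of $X = \one_M$.

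Next I would bound the spectral norm of this covariance matrix. For any unit vector $v \in \mathbb{R}^m$,
$$v^\top \nabla^2 f(\lambda)\, v = \mathrm{Var}_{M \sim p^\lambda}\!\bigl(\langle v, X\rangle\bigr) \leq \mathbb{E}_{M \sim p^\lambda}\bigl[\langle v, X\rangle^2\bigr] = \mathbb{E}_{M \sim p^\lambda}\Bigl[\bigl(\textstyle\sum_{e \in M} v_e\bigr)^2\Bigr].$$
By Cauchy-Schwarz, $\bigl(\sum_{e \in M} v_e\bigr)^2 \leq |M|\cdot \sum_{e \in M} v_e^2 \leq m\,\|v\|_2^2 = m$ since $M \subseteq [m]$ and hence $|M|\leq m$. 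Therefore $\|\nabla^2 f(\lambda)\|_{2\to 2} \leq m$, uniformly in $\lambda$.

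Finally, since $f$ is $C^2$, for any $\lambda,\lambda' \in \mathbb{R}^m$ the mean value theorem (applied coordinate-wise to $\nabla f$ along the segment from $\lambda$ to $\lambda'$) gives
$$\|\nabla f(\lambda) - \nabla f(\lambda')\|_2 \;\leq\; \Bigl(\sup_{t \in [0,1]}\|\nabla^2 f(\lambda + t(\lambda'-\lambda))\|_{2\to 2}\Bigr) \cdot \|\lambda - \lambda'\|_2 \;\leq\; m\,\|\lambda-\lambda'\|_2,$$
so $\|\nabla f\|_{2 \to 2} \leq m = O(m\sqrt{m})$, as claimed. There is no real obstacle here: the main content is the identification of the Hessian with a covariance matrix, after which the Cauchy-Schwarz bound combined with the uniform upper bound $|M| \leq m$ does all the work.
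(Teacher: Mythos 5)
Your proof is correct, and it takes a genuinely different route from the paper's. The paper bounds $\|\nabla f(\lambda_1)-\nabla f(\lambda_2)\|_2$ directly by a case split on whether $\|\lambda_1-\lambda_2\|_2$ is larger or smaller than $\frac{1}{10\sqrt{m}}$: in the ``far'' regime it uses the crude bound $\|\nabla f(\lambda_i)\|_2\le\sqrt{m}$, and in the ``near'' regime it chains multiplicative $e^{\pm\eps}$ bounds on $Z^{\lambda}$, on each $p^\lambda_M$, and then on each marginal $\theta^\lambda_e$, ending with $\|\theta_1-\theta_2\|_1\le 3\eps m$ and hence the constant $3m\sqrt m$. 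You instead compute the Hessian in closed form, recognize it as the covariance matrix $\mathrm{Cov}_{M\sim p^\lambda}(\one_M)$, and bound its operator norm uniformly via $\mathrm{Var}(\langle v,\one_M\rangle)\le \mathbb{E}[\langle v,\one_M\rangle^2]\le m\|v\|_2^2$. This avoids the case split entirely and, because the covariance-matrix bound is uniform in $\lambda$, yields the strictly sharper Lipschitz constant $m$ rather than $O(m\sqrt m)$. One tiny presentational nit: the ``coordinate-wise mean value theorem'' does not directly give the displayed inequality (the intermediate points can differ per coordinate); the clean justification is the integral identity $\nabla f(\lambda')-\nabla f(\lambda)=\int_0^1 \nabla^2 f(\lambda+t(\lambda'-\lambda))(\lambda'-\lambda)\,dt$ followed by taking norms. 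With that standard phrasing the argument is airtight, and it is both shorter and quantitatively better than the paper's.
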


\begin{proof}
Given $\lambda_1$ and $\lambda_2$, let $p^{\lambda_1}$ and $p^{\lambda_2}$ be the corresponding product distributions and let $\theta_1$ and $\theta_2$ be the corresponding marginals. 
We break  the calculation of $\|\nabla f \|_{2 \rightarrow 2}$  into two parts: 
$$\|\lambda_1-\lambda_2\|_2 > \frac{1}{10 \sqrt{m}} \ \ {\rm and} \ \  \|\lambda_1-\lambda_2\|_2 \leq \frac{1}{10 \sqrt{m}}.$$
Estimating  $\|\nabla f \|_{2 \rightarrow 2}$ in the first case is straightforward. To see this, recall that 
\begin{equation}\label{eq:22nabla}
\nabla f(\lambda_i)=\theta-\theta_i
\end{equation}
 for $i=1,2.$ Thus, 
\begin{equation}\label{eq:22grad}
 \|\nabla f(\lambda_i) \|_2 \leq \sqrt {m}
\end{equation}
since $\theta,\theta_i \in P(\M) \subseteq [0,1]^m$, which implies that  $\theta-\theta_i \in [-1,1]^m$ for $i=1,2.$ 
Hence, $$\|\nabla f (\lambda_1) - \nabla f(\lambda_2)\|_2 \leq 2\sqrt{m}.$$
Thus, when   
\begin{equation}\label{eq:22assume1}
\|\lambda_1-\lambda_2\|_2 > \frac{1}{10 \sqrt{m}},
\end{equation} 
we obtain 
$$ \|\nabla f (\lambda_1) - \nabla f(\lambda_2)\|_2 \stackrel{\Delta{\rm -ineq.}}{\leq}  \|\nabla f (\lambda_1) \|_2 + \| \nabla f(\lambda_2)\|_2   \stackrel{\eqref{eq:22grad}}{\leq} 2 \sqrt{m} = 20 m \cdot  \frac{1}{10 \sqrt{m}} \stackrel{\eqref{eq:22assume1}}{<} 20 m \cdot \|\lambda_1-\lambda_2\|_2.$$ 
Hence, we move on to proving the theorem in the case
\begin{equation}\label{eq:22assume}
 \|\lambda_1-\lambda_2\|_2 \leq \frac{1}{10 \sqrt{m}}
\end{equation}
Towards this, define 
\begin{equation}\label{eq:22eps}
 \eps \defeq \sqrt{m} \|\lambda_1-\lambda_2\|_2.
\end{equation}
Note that by assumption \eqref{eq:22assume}, $\eps \leq \nfrac{1}{10}.$
It follows from \eqref{eq:22eps} that for any  $M\in \M$, 
\begin{equation}\label{eq:lambdaM}
\vert \lambda_1(M)-\lambda_2(M)\vert = \left\vert \sum_{e \in M} \left( (\lambda_1)_e - (\lambda_2)_e \right) \right\vert \stackrel{\Delta{\rm -ineq.}}{\leq}   \sum_{e \in M} \left\vert (\lambda_1)_e - (\lambda_2)_e\right\vert  \leq   \|\lambda_1-\lambda_2\|_1 \stackrel{{\rm Cau.-Sch.}}{\leq} \sqrt{m}\|\lambda_1-\lambda_2\|_2 = \epsilon.
\end{equation}
The following series of claims establishes Theorem \ref{thm:2to2}.  

\begin{claim}\label{clm:22-1}
$e^{-\epsilon}\leq \frac{Z^{\lambda_1}}{Z^{\lambda_2}} \leq e^{\epsilon}.$
\end{claim}

\begin{proof}
For each $M \in \M,$  \eqref{eq:lambdaM} implies that
\begin{equation}\label{eq:lambdaexpM}
e^{-\epsilon}\leq \frac{e^{-\lambda_1(M)}}{e^{-\lambda_2(M)}}\leq e^{\epsilon}.
\end{equation}
Thus,
\begin{equation}\label{eq:22eps1}
e^{-\epsilon}\leq  \min_{M\in \M}\frac{e^{-\lambda_1(M)}}{e^{-\lambda_2(M)}}\leq \frac{\sum_{M\in \M} e^{-\lambda_1(M)}}{\sum_{M\in \M}e^{-\lambda_2(M)}} \leq \max_{M\in \M}  \frac{e^{-\lambda_1(M)}}{e^{-\lambda_2(M)}}\leq  e^{\epsilon}.
\end{equation}
Here, we have used the inequality that for non-negative numbers $a_1,a_2,\ldots$ and $b_1,b_2,\ldots,$ 
\begin{equation}\label{eq:min-max}
 \min_{i} \frac{a_i}{b_i} \leq \frac{\sum_i a_i}{\sum_i b_i} \leq  \max_{i} \frac{a_i}{b_i}.
\end{equation}
The claim follows by combining \eqref{eq:22eps1} with the definition 
$$ \frac{\sum_{M\in \M} e^{-\lambda_1(M)}}{\sum_{M\in \M}e^{-\lambda_2(M)}} = \frac{Z^{\lambda_1}}{Z^{\lambda_2}}.$$
\end{proof}

\begin{claim}\label{clm:22-2}
For each $M\in \M$, $e^{-2\epsilon}\leq \frac{p^{\lambda_1}_M}{p^{\lambda_2}_M}\leq e^{2\epsilon}.$
\end{claim}
\begin{proof}
By definition,
$$\frac{p^{\lambda_1}_M}{p^{\lambda_2}_M}=\frac{e^{-\lambda_1(M)}}{e^{-\lambda_2(M)}}\cdot \frac{Z^{\lambda_2}}{Z^{\lambda_1}}.$$
Since all the numbers involved in this product  are positive, \eqref{eq:lambdaexpM} and Claim \ref{clm:22-1} imply that 
both the ratios in the right hand side of the equation are bounded from below by $e^{-\epsilon}$ and from above by $e^\eps.$ Hence, their product is bounded from below by $e^{-2\epsilon}$ and from above by $e^{2 \eps}$, completing the proof of the claim.
\end{proof}

\begin{claim}\label{cl:theta-e}
For each $e\in [m]$, $e^{-2\epsilon}\leq \frac{(\theta_1)_e}{(\theta_2)_e}\leq e^{2\epsilon}.$
\end{claim}
\begin{proof}
By the definition of $\theta_1$ and $\theta_2,$ 
$$\frac{(\theta_1)_e}{(\theta_2)_e}=\frac{\sum_{M \in \M, M \ni e} p^{\lambda_1}_M}{\sum_{M \in \M, M \ni e}p^{\lambda_2}_M}.$$
Combining Claim \ref{clm:22-2} and \eqref{eq:min-max}, we obtain that
$$e^{-2\epsilon}\leq \min_{M \in \M, M \ni e} \frac{ p^{\lambda_1}_M}{p^{\lambda_2}_M} \leq \frac{\sum_{M \in \M, M \ni e} p^{\lambda_1}_M}{\sum_{M \in \M, M \ni e}p^{\lambda_2}_M} \leq \max _{M \in \M, M \ni e}\frac{ p^{\lambda_1}_M}{p^{\lambda_2}_M}\leq e^{2\epsilon},$$
completing the proof of the claim.
\end{proof}

\begin{claim}
For $\eps \leq \nfrac{1}{10},$  $\|\theta_1-\theta_2\|_1\leq 3\epsilon m.$
\end{claim}
\begin{proof}
By definition
$$
\|\theta_1-\theta_2\|_1= \sum_{e\in [m]} \vert (\theta_1)_e-(\theta_2)_e\vert
$$
Since $\theta_1,\theta_2 \geq 0,$ 
 Claim~\ref{cl:theta-e} implies that for each $e \in [m],$ 
$$(e^{-2\epsilon}-1)(\theta_2)_e \leq (\theta_1)_e-(\theta_2)_e\leq (e^{2\epsilon}-1)(\theta_2)_e.$$
Since $\max\{\vert e^{-2\epsilon}-1\vert, \vert e^{2\epsilon}-1\vert\}\leq 3\epsilon $ for $\epsilon \leq \nfrac{1}{10},$ the above inequality reduces to, for each $e \in [m],$  
$$\vert (\theta_1)_e-(\theta_2)_e\vert \leq 3\epsilon (\theta_2)_e.$$
Thus, 
$$\|\theta_1-\theta_2\|_1=\sum_{e\in [m]}  \vert (\theta_1)_e-(\theta_2)_e\vert \leq \sum_{e\in [m]} 3\epsilon (\theta_2)_e \leq 3\epsilon m$$
since $\theta_2 \geq 0$ and $\theta_2 \in P(\M) \subseteq [0,1]^m.$ 
This completes the proof of the claim.
\end{proof}
\noindent
Finally, to complete the proof of Theorem \ref{thm:2to2}, note that when \eqref{eq:22assume} holds,
$$\| \nabla f(\lambda_1)-\nabla f(\lambda_2)  \|_2 \stackrel{\eqref{eq:22nabla}}{=} \|\theta_1-\theta_2\|_2 \stackrel{\ell_2(x,y) \leq \ell_1(x,y)}{\leq} \|\theta_1-\theta_2\|_1 \stackrel{{\rm Claim  \; \ref{cl:theta-e}}}{\leq}  3m \epsilon \stackrel{\eqref{eq:22eps}}{=} 3 m \sqrt{m} \|\lambda_1-\lambda_2\|_2.$$
\end{proof}

\end{document}